\documentclass{article}
\usepackage{authblk}
\usepackage[utf8]{inputenc}
\usepackage[T1]{fontenc}
\usepackage[a4paper]{geometry}
\usepackage{graphicx}
\usepackage[textwidth=8em,textsize=small]{todonotes}
\usepackage{amsmath,amsthm,amssymb}
\usepackage{natbib}
\usepackage{hyperref}
\usepackage[nameinlink]{cleveref}
\usepackage{subcaption}
\usepackage{bbm}
\usepackage{amsfonts}
\usepackage{mathtools}
\usepackage[titletoc,toc]{appendix}
\usepackage{url}
\usepackage{algorithm}
\usepackage{algpseudocode}
\usepackage{multirow}
\usepackage{booktabs}
\usepackage[normalem]{ulem}
\usepackage{threeparttable}
\usepackage{enumitem}
\usepackage{array}
\usepackage{setspace}
\usepackage{orcidlink}

\definecolor{darkblue}{HTML}{0c7dbb}
\definecolor{darkgreen}{rgb}{0,0.48,0.65}
\hypersetup{
    colorlinks = true,
    citecolor=darkgreen,
    filecolor=black,
    linkcolor=darkblue,
    urlcolor=blue
}
\providecommand{\keywords}[1]
{
    \small
    \textbf{\textit{Keywords---}} #1
}

\DeclareMathOperator*{\argmin}{arg\,min}

\newtheorem{theorem}{Theorem}[section]
\newtheorem{corollary}{Corollary}[section]

\newtheorem{remark}{Remark}
\newtheorem{lemma}[theorem]{Lemma}   
\newtheorem{proposition}[theorem]{Proposition}
\newtheorem{definition}[theorem]{Definition}
\Crefname{algocf}{Algorithm}{Algorithms}
\Crefname{lemma}{Lemma}{Lemmas}
\Crefname{proposition}{Proposition}{Propositions}
\Crefname{definition}{Definition}{Definitions}
\title{Nonparametric Correlation Estimator via Solving Cubic Equations and its Application to Brain Functional Connectivity Analysis}
\author[1]{Shenyuan Yang}
\author[2]{Gary Green}
\author[3]{Jian Zhang}
\author[2]{André Gouws}
\author[1]{Jie Li\orcidlink{0000-0001-8353-1322}\thanks{Corresponding author: Jie Li, Lijiang Culture and Tourism College.\ \textbf{Email}:~\href{jie.li@ljwhlyxy.edu.cn}{jie.li@ljwhlyxy.edu.cn}}}
\affil[1]{Lijiang Culture and Tourism College, Lijiang, Yunnan, China, 674199}
\affil[2]{York Neuroimaging Centre, University of York, Innovation Way, York, YO10 5NY, UK}
\affil[3]{School of Mathematics, Statistics and Actuarial Science, University of Kent, Canterbury, CT2 7NF, UK}

\begin{document}
\maketitle
\begin{abstract}
    In this paper, we propose a novel nonparametric estimator for the correlation coefficient that is based on solving a cubic equation. This approach allows for the estimation of time-varying correlation coefficients in a nonparametric framework, providing flexibility in capturing complex relationships between variables. Furthermore, we adopt the local linear  smoothing technique to correct the boundary effects, which are common in nonparametric estimation.
    We establish the theoretical properties of the proposed estimator, including consistency and asymptotic normality, and demonstrate its performance through simulation studies. Additionally, we apply our method to analyse dynamic functional connectivity in brain networks under six frequency bands, highlighting its potential for uncovering insights into neural interactions and cognitive processes.
\end{abstract}
\keywords{nonparametric estimation, cubic equation, Yeo-7 networks, dynamic functional connectivity, time-varying correlation}
\section{Introduction}\label{sec:introduction}
Human brain activity exhibits temporal dynamics, with the coupling between brain regions changing across multiple timescales in response to cognitive tasks, brain states, and disease progression~\citep{woolrichDynamicStateAllocation2013,favarettoSubcorticalcorticalDynamicalStates2022,jinDynamicFunctionalConnectivity2023}. Functional connectivity (FC), which characterises the statistical dependence between regional brain signals, has become an important framework for investigating the organisation of brain function, disease-related network abnormalities, and potential biomarkers~\citep{hippLargescaleCorticalCorrelation2012,oneillMeasuringElectrophysiologicalConnectivity2015,brookesInvestigatingElectrophysiologicalBasis2011}. Unlike fMRI, which relies on haemodynamic signals, MEG directly measures the magnetic fields generated by neuronal currents and provides millisecond-level temporal resolution for tracking transient changes in oscillatory phase and amplitude~\citep{hamalainenRealisticConductivityGeometry1989,gramfortMEGEEGData2013,brunsFourierHilbertWaveletbased2004}. It therefore offers a particularly fine-grained temporal basis for investigating dynamic functional connectivity (dFC), beyond what can be captured by static measures based on time-averaged connectivity~\citep{gramfortMEGEEGData2013,brunsFourierHilbertWaveletbased2004}. Consequently, recent research has increasingly shifted from static functional connectivity towards dFC, with growing interest in characterising time-varying patterns of brain networks, transitions between connectivity states, and their associations with cognitive processes and clinical phenotypes~\citep{coelliTimevaryingBrainComprehensive2025,woolrichDynamicStateAllocation2013,brookesInvestigatingElectrophysiologicalBasis2011}.

In neuropsychiatric and neurodegenerative disorders, alterations in brain network organisation may precede or accompany the emergence of clinical symptoms. In Alzheimer's disease (AD), for example, MEG studies have reported detectable alterations in the complexity of dynamic brain networks, transitions between connectivity states, and frequency-specific coupling, with some of these features showing potential for distinguishing patients from healthy controls \citep{jinDynamicFunctionalConnectivity2023,carrasco-gomezDynamicsBrainConnectivity2026}. These findings suggest that dFC may provide complementary information for early detection, disease stratification, and the assessment of disease progression. Consequently, current research into dynamic functional connectivity is broadly categorised into two complementary strands:  fMRI-based approaches and MEG-based approaches, which are discussed in the sections that follow.


\textbf{Dynamic functional connectivity based on fMRI.} Owing to its relatively high spatial resolution and the availability of extensive publicly accessible datasets, fMRI has been the predominant data source for dFC research~\citep{fischlFreeSurfer2012,schaeferLocalGlobalParcellationHuman2018}. Over the past five years, fMRI-based dFC studies have expanded beyond static, time-averaged connectivity towards the characterisation of time-varying organisation and disease-related network abnormalities~\citep{wenSpatiotemporalDynamicFunctional2025,zhangCLASSIFFICATIONMILDCOGNITIVE2025,kajimuraFrequencyspecificBrainNetwork2023}. Methodologically, the field has evolved from conventional sliding-window and clustering approaches towards spatiotemporal modelling and representation learning of dynamic connectivity states \citep{wenSpatiotemporalDynamicFunctional2025,zhangCLASSIFFICATIONMILDCOGNITIVE2025}. Graph neural networks, temporal Transformers, and contrastive learning have further enhanced the capacity for end-to-end representation learning \citep{chenDFCExpertLearningDynamic2025,huangBrainATCLAdaptiveTemporal2025}. However, systematic evaluations have shown that increasingly complex models do not consistently outperform simpler approaches \citep{hanRethinkingFunctionalBrain2026,dingMachineLearningDynamic2026}. Consequently, attention in this field is increasingly shifting away from classification accuracy alone towards the stability, interpretability, and cross-dataset reproducibility of dynamic abnormality patterns.

\textbf{Dynamic functional connectivity based on MEG.} With millisecond-level temporal resolution, MEG is particularly well suited to investigating rapid neuronal oscillations and interactions between spatially distributed brain regions~\citep{gramfortMEGEEGData2013,linSpectralSpatiotemporalImaging2004,brookesInvestigatingElectrophysiologicalBasis2011}. In recent years, MEG-based functional connectivity research has progressed from analyses of single-frequency bands or static group differences towards the joint characterisation of multiple frequency bands, time-varying connectivity states, and the temporal complexity of network dynamics \citep{coelliTimevaryingBrainComprehensive2025}. Within the Alzheimer's disease spectrum, previous studies have reported an overall reduction or reorganisation of MEG-derived dFC, with differences particularly involving the \(\alpha\) and \(\beta\) bands, frontotemporal regions, and the default mode network, as well as associations with cognitive function and structural alterations \citep{carrasco-gomezDynamicsBrainConnectivity2026}. Measures of dynamic network complexity, state-transition characteristics, and multi-band deep-learning features have also shown potential for patient classification and early prediction of clinical progression \citep{jinDynamicFunctionalConnectivity2023}.

Nevertheless, existing MEG-dFC analyses predominantly rely on sliding-window approaches, epoch-based aggregation, amplitude-envelope correlations, phase synchronisation, or adaptive segmentation methods, and the resulting estimates remain sensitive to choices of window length and time-frequency smoothing \citep{coelliTimevaryingBrainComprehensive2025,jinDynamicFunctionalConnectivity2023}. To improve the estimation of direct connectivity and the characterisation of information flow, recent work has incorporated structural connectivity priors, partial coherence analysis, and time-lagged multidimensional pattern connectivity approaches \citep{rahimiTimeLaggedMultidimensionalPattern2023}. Overall, MEG-dFC research is moving beyond demonstrating the presence of connectivity abnormalities towards the robust extraction, statistical interpretation, and independent validation of dynamic abnormality patterns. However, statistical inference frameworks that fully exploit the high temporal resolution of MEG data remain underdeveloped.

Nonparametric estimation of time-varying correlation coefficients is a key methodological problem in the study of dynamic characteristics of brain networks. Existing approaches include sliding-window Pearson correlation, kernel-based Nadaraya--Watson estimation, local polynomial regression models \citep{fanLocalPolynomialKernel1995,fanDataDrivenBandwidthSelection1995,fanEstimationConditionalDensities1996,fanEfficientEstimationConditional1998, staniswalisNonparametricRegressionAnalysis1998}. Although these methods provide desirable theoretical properties, such as consistency and asymptotic normality of the resulting estimators, several practical challenges remain. First, boundary effects can be substantial, particularly in regions where observations are sparse or unevenly distributed in the high-dimensional case. Second, the use of a single globally optimised bandwidth can be disproportionately influenced by sparse regions of the data, potentially leading to increased estimation error in other regions~\citep{liStatisticalInferenceHighdimensional2021}. Third, although correlation-matrix-based approaches can preserve the required positive semidefinite structure, ensuring this property in high-dimensional or large-sample settings can introduce substantial computational costs, making computational efficiency a major bottleneck.

Motivated by these limitations, we propose a new nonparametric approach for estimating time-varying correlation coefficients. The proposed method combines kernel-weighted local maximum likelihood estimation (LMLE) with a cubic-equation-based procedure for obtaining the local correlation function. Rather than estimating a single global bandwidth for the entire correlation matrix, we independently determine an optimal bandwidth for each pair of variables. The resulting pairwise correlation estimates are then assembled to form the time-varying correlation matrix. Since independently estimated pairwise correlations do not necessarily yield a valid correlation matrix, we subsequently apply a matrix correction procedure that projects the estimated matrix onto the nearest valid correlation matrix~\citep{highamComputingNearestCorrelation2002}. The resulting positive semidefinite correlation matrices are then used for subsequent dynamic functional connectivity analyses.

The proposed estimator retains desirable asymptotic properties, including consistency and asymptotic normality, while allowing the bandwidth to adapt to the local characteristics of each variable pair. This pair-specific bandwidth selection provides greater flexibility than a single globally optimised bandwidth and is particularly advantageous for high-dimensional applications in which the underlying dependence structure may vary substantially across variable pairs.

The rest of this paper is organised as follows. In Section~\ref{sec:Methodology}, we introduce the proposed nonparametric estimation method for time-varying correlation coefficients. Section~\ref{sec:simulation_study} presents simulation studies to evaluate the performance of the proposed method. In Section~\ref{sec:real_data_analysis}, we apply the method to MEG data for dynamic functional connectivity analysis. Finally, Section~\ref{sec:conclusion} concludes the paper and discusses future research directions.

\section{Methodology}\label{sec:Methodology}

Let \((U_{i},\mathbf{x}_{i})\), \( \mathbf{x}_{i}= (x_{i1},x_{i2})^{\top}\), \(i=1,\dots,n\), be i.i.d.\ observations, where \(U_{i}\in[-1,1]\) has density \(f_{U}\), and conditional on \(U=u\), we assume the \( \mathbf{X}=(X_{1}, X_{2})^{\top} \) follows a bivariate normal distribution with mean zero and covariance matrix \(\Sigma(\rho(u))\):
\begin{equation*}
    \mathbf{X}=(X_{1},X_{2})^{\top} \mid (U=u)\sim N\!\left(0,\ \Sigma(\rho(u))\right),
    \qquad
    \Sigma(\rho(u))=
    \begin{pmatrix}1&\rho(u)\\ \rho(u)&1
    \end{pmatrix},\ \rho(u)\in(-1,1).
\end{equation*}
Without loss of generality, we assume that \( \Sigma(\rho(u)) \) is a correlation matrix. If \( \Sigma(\rho(u)) \) is not a correlation matrix, we can always standardise \( X_{1}, X_{2} \) to have unit variance. The goal is to estimate the correlation function \(\rho(u)\) nonparametrically.
Define the negative Gaussian log-likelihood contribution (up to irrelevant constants)
\begin{equation*}
    \ell(\mathbf{x};\rho)=\frac{x_{1}^{2}+x_{2}^{2}-2\rho x_{1}x_{2}}{1-\rho^{2}}+\log(1-\rho^{2}),\qquad \rho\in(-1,1).
\end{equation*}
The kernel-weighted local criterion at \(u_{0}\) is
\begin{equation*}
    Q_{n}(\rho;u_{0})=\frac{1}{n}\sum_{i=1}^{n} K_{h}(U_{i}-u_{0})\,\ell(\mathbf{x}_{i};\rho),
    \qquad K_{h}(t)=\frac{1}{h}K(t/h),
\end{equation*}
where \( K(\cdot) \) is the kernel function. Define the local weighted MLE as
\begin{equation*}
    \hat{\rho}(u_{0})\in \argmin_{\rho\in(-1,1)} Q_{n}(\rho;u_{0}).
\end{equation*}
One can differentiate \(Q_{n}\) with respect to \(\rho\), set it to zero, and simplify to get the cubic score equation as follows:
\begin{equation}\label{eq:cubic_equation_2}
    \rho^{3} - B_{n}(u_{0})\rho^{2}+\big(A_{n}(u_{0})-1\big)\rho-B_{n}(u_{0})=0,
\end{equation}
where \(A_{n},B_{n}\) are the kernel-weighted quantities:
\begin{equation*}
    A_{n}(u_{0})=\frac{\sum_{i=1}^{n} (x_{i1}^{2}+x_{i2}^{2})K_{h}(U_{i}-u_{0})}{\sum_{i=1}^{n} K_{h}(U_{i}-u_{0})},
    \quad
    B_{n}(u_{0})=\frac{\sum_{i=1}^{n} x_{i1}x_{i2}K_{h}(U_{i}-u_{0})}{\sum_{i=1}^{n} K_{h}(U_{i}-u_{0})}.
\end{equation*}
Coincidentally, the Nadaraya--Watson estimator of the correlation coefficient is exactly the ratio of these two quantities:
\begin{equation}\label{eq:correlation_coefficient_nonparametric_2}
    \hat{\rho}^{\mathrm{NW}}(u_{0})=\frac{2\sum_{i=1}^{n}x_{i1}x_{i2}K_{h}(u_{i}-u_{0})}{\sum_{i=1}^{n}(x_{i1}^{2}+x_{i2}^{2})K_{h}(u_{i}-u_{0})}=\frac{2B_{n}(u_{0})}{A_{n}(u_{0})}.
\end{equation}

The corresponding population quantities are \(A(u_{0})=\mathbb{E}\left[ X_{1}^{2}+X_{2}^{2}\mid U=u_{0} \right]\), \(B(u_{0})=\mathbb{E}\left[ X_{1}X_{2}\mid U=u_{0} \right]\). Because \(\Sigma(\rho(u))\) is a correlation matrix for every \(u\), we have
\begin{equation}\label{eq:an_bn}
    A(u)\equiv 2 \quad\text{for all } u,\qquad B(u_{0})=\rho(u_{0}).
\end{equation}
Here \( u_{0}  \) can be regarded as a fixed point in the design interval while \( u \) is a generic point in the design interval. The fact that \(A(\cdot)\) is exactly constant drives almost everything in the theory of consistency and limit distribution. Especially, if \( A_{n}(u_{0})=2 \), one real root of~\eqref{eq:cubic_equation_2} is equal to \( B_{n}(u_{0}) \). For an estimator of the correlation coefficient,~\citet{kendallInferenceRelationship1973} pointed out that there is at least one real root of equation~\eqref{eq:cubic_equation_2} lying in the interval \( [-1,1] \). Let \( \hat{\rho}(u_{0}) \) be the real root of cubic equation~\eqref{eq:cubic_equation_2} and minimises \( Q_{n}(\rho;u_{0}) \). Next, we will develop the theory of uniform consistency of the estimator \( \hat{\rho}(u_{0}) \) over compact subsets of the design interval \( (-1,1) \), limit distribution, optimal bandwidth selection and the convergence rate of the estimator \( \hat{\rho}(u_{0}) \). Besides, we also discuss the boundary behaviour of the estimator \( \hat{\rho}(u_{0}) \) and propose a method to correct the boundary effects.


\subsection{Assumptions}\label{subsec:Assumptions}
\begin{enumerate}[label=(A\arabic*)]
    \item \textbf{Kernel}. \( K \) is a symmetric, bounded, and Lipschitz continuous probability density function with compact support \( [-1,1] \), \( K\geqq 0 \), \( \mu_{1}(K)=\int tK(t)\,\mathrm{d}t=0\), \( \mu_{2}(K)=\int t^{2}K(t)\,\mathrm{d}t <\infty\), \( R(K)=\int K^{2}(t)\,\mathrm{d}t<\infty \).
    \item \textbf{Design Density}. The density \( f_{U} \) of \( U \) is bounded away from zero and infinity on \( [-1,1] \), i.e., \( 0<f_{\min}\leqq f_{U}(u)\leqq f_{\max}<\infty \). In the interior of \( [-1,1] \), \( f_{U} \) is twice continuously differentiable with bounded second derivative; at \( \pm 1 \), \( f_{U} \) has finite one-sided derivatives.
    \item \textbf{Smoothness of Correlation Coefficient}.\ \( \rho\in C^{2}[-1,1] \), with \( \sup_{u}{\left\lvert \rho^{\prime}(u) \right\rvert} \), \( \sup_{u}{\left\lvert \rho^{\prime\prime}(u) \right\rvert} <\infty \) and there is \( \eta > 0 \) with \( \sup_{u}{\left\lvert \rho^{\prime}(u) \right\rvert} \leqq 1-\eta\), which means that the true curve stays a fixed distance from the parameters' boundary \( \pm 1 \).
    \item \textbf{Kernel Bandwidth}. The kernel bandwidth \( h \) satisfies \( h\to 0 \) and \( nh\to \infty \) as \( n\to \infty \); for the uniform statements, we require \( nh/\log(1/h)\to \infty \).
    \item \textbf{Moment Condition}. Since \( \mathbf{X}\mid U=u \) is Gaussian with correlation bounded away from \( \pm 1 \) by (A3), all moments of \( X_{1}^{2}+X_{1}^{2},X_{1}X_{2} \) exist and are continuous, hence bounded, uniformly in \( u\in[-1,1] \); in  particular, \( \mathbb{E}\left[ (X_{1}^{2}+X_{2}^{2})^{4}\mid U=u \right] \) is  bounded  uniformly.
    \item \( (U_{i}, \mathbf{x}_{i})^{n}_{i=1} \) are i.i.d.\ observations.
\end{enumerate}
These assumptions are standard in the literature of nonparametric regression and kernel smoothing, ensuring the well-definedness of the estimation problem and the validity of asymptotic results; see~\citet{fanDataDrivenBandwidthSelection1995,fanLocalPolynomialKernel1995,fanEstimationConditionalDensities1996,fanEfficientEstimationConditional1998,staniswalisNonparametricRegressionAnalysis1998}.
\subsection{Theory}\label{subsec:theory}
\begin{theorem}\label{thm:uniform_consistency_an_bn}
    Under the assumptions (A1)-(A6), for any \(\delta>0\), we have
    \begin{equation*}
        \sup_{u_{0}\in[-1+\delta,1-\delta]}| A_{n}(u_{0})-2|  \xrightarrow{p}0,\quad \sup_{u_{0}\in[-1+\delta,1-\delta]}| B_{n}(u_{0})-\rho(u_{0})|  \xrightarrow{p}0.
    \end{equation*}
\end{theorem}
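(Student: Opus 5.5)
We treat both statements as uniform consistency of Nadaraya--Watson smoothers, with responses \(Y_i=x_{i1}^2+x_{i2}^2\) and \(Y_i=x_{i1}x_{i2}\). The conditional means are \(m(u)\equiv 2\) and \(m(u)=\rho(u)\), respectively, by \eqref{eq:an_bn}.

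\textbf{Decomposition.} Write \(A_n(u_0)=\hat g_A(u_0)/\hat f(u_0)\) and \(B_n(u_0)=\hat g_B(u_0)/\hat f(u_0)\), where
\[
\hat f(u_0)=\frac1n\sum_i K_h(U_i-u_0),\qquad \hat g(u_0)=\frac1n\sum_i Y_iK_h(U_i-u_0).
\]
Then \(A_n-2=\{\hat g_A-2\hat f\}/\hat f\) and \(B_n-\rho(u_0)=\{\hat g_B-\rho(u_0)\hat f\}/\hat f\). So it suffices to prove three things, each uniformly over \(I_\delta=[-1+\delta,1-\delta]\):
\begin{enumerate}
\item \(\sup|\hat f-f_U|\to_p0\);
\item \(\sup|\hat g-m f_U|\to_p0\) for each of the two responses;
\item \(\inf_{I_\delta}\hat f\ge f_{\min}/2\) with probability tending to one. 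This follows from step 1 and (A2).
\end{enumerate}

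\textbf{Bias.} For \(h<\delta\), the support of \(K\) keeps \([u_0-h,u_0+h]\subset[-1,1]\), so there are no boundary effects on \(I_\delta\). We have
\[
\mathbb E\hat g(u_0)=\int K(t)\,m(u_0+ht)f_U(u_0+ht)\,dt.
\]
By (A2)--(A3), \(mf_U\) is uniformly continuous on \([-1,1]\), and a Taylor expansion using \(\mu_1(K)=0\) gives \(\sup_{I_\delta}|\mathbb E\hat g-mf_U|=O(h^2)\to0\). The same argument with \(m\equiv1\) handles \(\hat f\).

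\textbf{Stochastic term.} This is the main obstacle, because the responses are unbounded: they are quadratic forms of Gaussians. I would use the standard truncation and chaining argument.
\begin{itemize}
\item \emph{Truncation.} Fix \(\tau_n=n^{a}\) with small \(a>0\) and split \(Y_i=Y_i\mathbb 1\{|Y_i|\le\tau_n\}+Y_i\mathbb 1\{|Y_i|>\tau_n\}\). The tail part has mean bounded by \(\sup_u \mathbb E[|Y|^4\mid U=u]\,\tau_n^{-3}\, f_{\max}\int K\to0\) by (A5), uniformly in \(u_0\). The same bound controls it in probability, by Markov's inequality applied to \(\frac1n\sum|Y_i|\mathbb 1\{|Y_i|>\tau_n\}\sup K/h\), after choosing \(a\) so that \(\tau_n^{-3}/h\to0\). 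Alternatively, Gaussian tails make \(\max_i|Y_i|=O_p(\log n)\), which allows \(\tau_n=C\log n\).
\item \emph{Grid.} Cover \(I_\delta\) with \(N_n\asymp n^2\) grid points. The kernel is Lipschitz by (A1), so the oscillation of the truncated \(\hat g\) between neighbouring grid points is at most \(\tau_n L/(h^2 N_n)\to0\), using \(nh\to\infty\).
\item \emph{Grid points.} At each grid point, apply Bernstein's inequality. The summands are bounded by \(\tau_n\sup K/h\) and have variance \(O(1/h)\), so
\[
P\bigl(|\hat g^{\tau}-\mathbb E\hat g^{\tau}|>\varepsilon\bigr)\le 2\exp\bigl(-c\,nh\,\varepsilon^2/(1+\tau_n\varepsilon)\bigr).
\]
A union bound over the \(N_n\) points then gives total probability at most \(N_n\exp(-c\,nh/\tau_n)\). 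This tends to zero under \(nh/\log(1/h)\to\infty\) from (A4) when \(\tau_n\) grows slowly (logarithmically via Gaussian tails, or by choosing \(a\) appropriately).
\end{itemize}

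\textbf{Conclusion.} Combining the bias and stochastic bounds gives steps 1 and 2. Then, on the event \(\inf_{I_\delta}\hat f\ge f_{\min}/2\), we get
\[
\sup|A_n-2|\le \frac{2}{f_{\min}}\Bigl(\sup|\hat g_A-2f_U|+2\sup|\hat f-f_U|\Bigr)\to_p0.
\]
The same bound holds for \(B_n\), using \(|\rho|\le1\).
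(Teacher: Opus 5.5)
Your route is the same as the paper's. The paper reads this theorem off \Cref{lem:uniform-consistency}, whose proof uses your ratio decomposition, an $O(h^{2})$ interior bias bound, and then truncation, a grid, Bernstein's inequality and a union bound. Your grid of $N_n\asymp n^{2}$ points is in fact the right fineness: $K_h$ has Lipschitz constant of order $h^{-2}$, not the $h^{-1}$ the paper states, so a mesh-$h$ grid would not control oscillations.

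The gap is quantitative and sits in your last bullet. The claim that $N_n\exp(-c\,nh/\tau_n)\to0$ under (A4) fails for both of your choices of $\tau_n$. The truncation level enters the Bernstein exponent as $nh\varepsilon^{2}/(1+\tau_n\varepsilon)$, so the union bound needs $nh/(\tau_n\log n)\to\infty$.
\begin{itemize}
\item With $\tau_n\asymp\log n$ this becomes $nh/(\log n)^{2}\to\infty$, which (A4) does not imply. For $nh=(\log n)^{3/2}$ you have $nh/\log(1/h)\to\infty$, yet $n^{2}\exp(-c\,nh/\log n)\to\infty$.
\item With $\tau_n=n^{a}$ and only fourth moments, your Markov bound on the tail needs $n^{3a}h\to\infty$, while the union bound needs $n^{1-a}h/\log n\to\infty$. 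For $h=n^{-b}$ these are compatible only if $b<3/4$.
\item You cannot take $\tau_n$ much smaller either. The crude factor $\sup K/h$ in your tail bound requires $\mathbb{E}[\,|Y|;\,|Y|>\tau_n]=o(h)$, which forces $\tau_n\gtrsim\log(1/h)$.
\end{itemize}
So as written your argument covers bandwidths with $nh/(\log n)^{2}\to\infty$, but not the full range that (A4) allows.

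The fix is to drop hard truncation and use Bernstein's inequality in its moment form (or a Chernoff bound).
\begin{itemize}
\item $Y$ is a quadratic form in unit-variance Gaussians, so $\mathbb{E}[|Y|^{k}\mid U=u]\le 4^{k}k!$ uniformly in $u$.
\item Hence $Z_i=Y_iK((U_i-u_0)/h)$ satisfies $\mathbb{E}|Z_i|^{k}\le\tfrac{k!}{2}\,Ch\,C'^{\,k-2}$. The factor $h$ multiplies every moment, not only the variance.
\item This gives $\mathbb{P}(|\hat g(u_0)-\mathbb{E}\hat g(u_0)|>\varepsilon)\le 2\exp\{-c\,nh\,\varepsilon^{2}/(1+\varepsilon)\}$, with no $\tau_n$ in the exponent.
\item The union bound over $n^{2}$ grid points then needs only $nh/\log n\to\infty$. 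Given $nh\to\infty$, this is equivalent to (A4): if $nh\le C\log n$ along a subsequence, then $\log(1/h)\ge\log n-\log(C\log n)$, so $nh/\log(1/h)$ stays bounded.
\item The oscillation between grid points needs no truncation either. It is at most $L(h^{2}N_n)^{-1}\,n^{-1}\sum_i|Y_i|=O_p\bigl((nh)^{-2}\bigr)$.
\end{itemize}
With this replacement the rest of your proof goes through. The paper's own Step~3, with its rescaling by the ``effective sample size'' $nh$, is heuristic at exactly this point, and the moment-form bound is what makes it rigorous as well.
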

\begin{proof}
    Immediate from~\Cref{lem:uniform-consistency} together with~\Cref{eq:an_bn} (\(A(u)\equiv 2\), \(B(u_{0})=\rho(u_{0})\)), and the assumption (A4) which guarantees \(\sqrt{\log(1/h)/(nh)}+h^{2}\to 0\).
\end{proof}
\Cref{thm:uniform_consistency_an_bn} upgrades pointwise consistency of the coefficients of the cubic equation to uniform convergence over compact subsets of the design interval. This stronger form of convergence ensures that the root-continuity argument holds uniformly rather than just pointwise, in \(u_{0}\).

For finite-sample, the cubic equation may have 3 real roots, or 1 real root and 2 complex roots. The following proposition shows that the cubic equation has at least one real root in \((-1,1)\) for all \(u_{0}\) in compact subsets of the design interval.

\begin{proposition}\label{pro:finite-sample}
    For every \(n\) and every \(u_{0}\) with \(\sum_{i}w_{i}(u_{0})>0\), where \( w_{i}(u_{0})=K_{h}(U_{i}-u_{0}) \), the empirical cubic \(g_{n}(\rho;u_{0})\coloneqq\rho^{3}-B_{n}(u_{0})\rho^{2}+(A_{n}(u_{0})-1)\rho-B_{n}(u_{0})=0\) has at least one real root in \([-1,1]\).
\end{proposition}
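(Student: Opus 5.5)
Since $g_n(\cdot;u_0)$ is a real polynomial in $\rho$, hence continuous, the plan is to evaluate it at the endpoints $\rho=\pm1$ and apply the intermediate value theorem. Direct substitution gives
\begin{equation*}
g_n(1;u_0)=1-B_n+A_n-1-B_n=A_n(u_0)-2B_n(u_0),\qquad
g_n(-1;u_0)=-1-B_n-A_n+1-B_n=-\bigl(A_n(u_0)+2B_n(u_0)\bigr).
\end{equation*}
The condition $\sum_i w_i(u_0)>0$ is used only to guarantee that $A_n(u_0)$ and $B_n(u_0)$ are well defined as weighted averages with nonnegative weights $w_i(u_0)/\sum_j w_j(u_0)$; the weights are nonnegative because $K\geqq 0$ by (A1).

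The key step is the elementary inequality $x_{i1}^2+x_{i2}^2\ge 2|x_{i1}x_{i2}|$, which follows from $(|x_{i1}|-|x_{i2}|)^2\ge 0$. Averaging with the nonnegative normalised weights gives
\begin{equation*}
A_n(u_0)\ \ge\ \frac{2\sum_i w_i(u_0)\,|x_{i1}x_{i2}|}{\sum_i w_i(u_0)}\ \ge\ 2\,|B_n(u_0)|.
\end{equation*}
Hence $g_n(1;u_0)=A_n-2B_n\ge 0$ and $g_n(-1;u_0)=-(A_n+2B_n)\le 0$. By the intermediate value theorem, $g_n(\cdot;u_0)$ has a root in $[-1,1]$. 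This also recovers the observation of \citet{kendallInferenceRelationship1973} in the kernel-weighted setting, and shows that $|\hat\rho^{\mathrm{NW}}(u_0)|\le 1$.

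There is no real obstacle here; the only point needing care is the degenerate case of equality. If $A_n=2|B_n|$, the root may sit exactly at $\pm1$ rather than in the open interval $(-1,1)$. The statement asks only for a root in the closed interval $[-1,1]$, so this case is covered. To get a root strictly inside, I would additionally note that equality requires $|x_{i1}|=|x_{i2}|$, with a common sign of $x_{i1}x_{i2}$, for every $i$ with $w_i(u_0)>0$. This event has probability zero under the bivariate normal model, so almost surely $g_n(1;u_0)>0>g_n(-1;u_0)$ and the root lies in $(-1,1)$.
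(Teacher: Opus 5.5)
Your proof is correct and follows the paper's argument essentially verbatim. The paper also evaluates $g_n$ at $\rho=\pm1$, derives $|B_n(u_0)|\le A_n(u_0)/2$ from $(x_{i1}\pm x_{i2})^2\ge 0$ preserved under nonnegative kernel weighting (your route through $(|x_{i1}|-|x_{i2}|)^2\ge 0$ and the triangle inequality is equivalent), and concludes by the intermediate value theorem; your extra remark that the root lies in the open interval $(-1,1)$ almost surely, because equality forces $x_{i1}=\pm x_{i2}$ for every positively weighted $i$, is a valid small strengthening that the paper does not state.
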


\begin{proof}
    Because \(w_{i}\geqq 0\), the elementary inequalities \(2X_{i1}X_{i2}\leq X_{i1}^{2}+X_{i2}^{2}\) and \(-2X_{i1}X_{i2}\leq X_{i1}^{2}+X_{i2}^{2}\) (i.e. \((X_{i1}\pm X_{i2})^{2}\geq 0\)) are preserved under nonnegative weighted averaging, giving the exact, deterministic bound
    \begin{equation}\label{eq:bounds_An_Bn}
        |B_{n}(u_{0})|\leq \frac{1}{2} A_{n}(u_{0}),\qquad A_{n}(u_{0})\geq 0.
    \end{equation}

    Evaluate \(g_{n}(\rho;u_{0})\) at \(\rho=\pm1\):
    \begin{equation*}
        g_{n}(1;u_{0})=A_{n}(u_{0})-2B_{n}(u_{0}),\qquad g_{n}(-1;u_{0})=-A_{n}(u_{0})-2B_{n}(u_{0}).
    \end{equation*}
    By~\eqref{eq:bounds_An_Bn}, \(B_{n}(u_{0})\leq A_{n}(u_{0})/2\Rightarrow g_{n}(1;u_{0})\geq 0\), and \(B_{n}(u_{0})\geq-A_{n}(u_{0})/2\Rightarrow g_{n}(-1;u_{0})\leq 0\). Since \(g_{n}(\cdot\,;u_{0})\) is continuous, the Intermediate Value Theorem yields a root in \([-1,1]\).
\end{proof}

This establishes the empirical, finite-sample analogue of the classical population-level result by~\citet{kendallInferenceRelationship1973}. Crucially, our proof demonstrates that this relationship holds exactly for any non-negative kernel weights, without relying on asymptotic approximations.

\begin{theorem}\label{thm:uniform_consistency}
    \textbf{Uniform Consistency of \(\hat{\rho}(u_{0})\)}. Under Assumptions (A1)--(A6), for any \(\delta>0\) and compact set \(\mathcal{U}_{\delta}=[-1+\delta,1-\delta]\), we have
    \begin{equation*}
        \sup_{u_{0}\in\mathcal{U}_{\delta}}\big|\hat{\rho}(u_{0})-\rho(u_{0})\big|\xrightarrow{p}0,\qquad n\to\infty.
    \end{equation*}
\end{theorem}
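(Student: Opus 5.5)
Building on \Cref{thm:uniform_consistency_an_bn}, the plan is a uniform root-continuity argument: the coefficients of the empirical cubic converge uniformly to those of the population cubic, so its relevant root must also converge uniformly. At the population level, substituting \(A\equiv 2\) and \(B=\rho(u_{0})\) into \eqref{eq:cubic_equation_2} gives
\begin{equation*}
g(r;u_{0})=r^{3}-\rho_{0}r^{2}+r-\rho_{0}=(r-\rho_{0})(r^{2}+1),\qquad \rho_{0}=\rho(u_{0}).
\end{equation*}
So the limit cubic has the unique real root \(r=\rho_{0}\). Its derivative there is \(\partial_{r}g(\rho_{0};u_{0})=\rho_{0}^{2}+1\geq 1\), so the root is simple with a derivative bounded below uniformly in \(u_{0}\). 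More concretely, for every real \(r\) we have \(|g(r;u_{0})|=|r-\rho_{0}|(r^{2}+1)\geq|r-\rho_{0}|\). This single lower bound is what I would exploit.

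The first step is to bound the perturbation uniformly on \([-1,1]\). For \(r\in[-1,1]\),
\begin{equation*}
|g_{n}(r;u_{0})-g(r;u_{0})|\leq |B_{n}(u_{0})-\rho_{0}|(r^{2}+1)+|A_{n}(u_{0})-2||r|\leq 2|B_{n}-\rho_{0}|+|A_{n}-2|.
\end{equation*}
Call the right-hand side \(\Delta_{n}(u_{0})\). By \Cref{thm:uniform_consistency_an_bn}, \(\sup_{u_{0}\in\mathcal{U}_{\delta}}\Delta_{n}(u_{0})\xrightarrow{p}0\).

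The second step applies this to any real root \(r^{*}\in[-1,1]\) of \(g_{n}\). Since \(g_{n}(r^{*};u_{0})=0\), we get \(|r^{*}-\rho_{0}|\leq|g(r^{*};u_{0})|=|g(r^{*};u_{0})-g_{n}(r^{*};u_{0})|\leq\Delta_{n}(u_{0})\). Hence every real root in \([-1,1]\) lies within \(\Delta_{n}(u_{0})\) of \(\rho(u_{0})\), uniformly in \(u_{0}\). \Cref{pro:finite-sample} guarantees that at least one such root exists whenever \(\sum_{i}w_{i}(u_{0})>0\). Under (A1), (A2) and (A4) this positivity holds simultaneously for all \(u_{0}\in\mathcal{U}_{\delta}\) with probability tending to one. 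I would justify that via the uniform convergence of the kernel density estimate to \(f_{U}\geq f_{\min}\), which is implicitly contained in the lemma behind \Cref{thm:uniform_consistency_an_bn}.

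The main obstacle is linking \(\hat\rho(u_{0})\), defined as the minimiser of \(Q_{n}\) over \((-1,1)\), to a root of \(g_{n}\) lying in \([-1,1]\). The minimiser is an interior stationary point, provided it exists. Note that \(Q_{n}\to+\infty\) as \(\rho\to\pm1\) whenever \(A_{n}>2|B_{n}|\), and this inequality holds with probability one for continuous data with at least two effective observations. The derivative \(\partial_{\rho}Q_{n}\) equals a positive factor \(\propto(1-\rho^{2})^{-2}\sum_{i}w_{i}\) times \(g_{n}\) up to sign. So every interior minimiser is a real root of \(g_{n}\) in \((-1,1)\), and by the second step it is within \(\Delta_{n}(u_{0})\) of \(\rho(u_{0})\). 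Taking the supremum over \(\mathcal{U}_{\delta}\) then gives \(\sup_{u_{0}\in\mathcal{U}_{\delta}}|\hat\rho(u_{0})-\rho(u_{0})|\leq\sup_{u_{0}\in\mathcal{U}_{\delta}}\Delta_{n}(u_{0})\xrightarrow{p}0\) on an event of probability tending to one. The bound also shows that with high probability all real roots in \([-1,1]\) cluster near \(\rho(u_{0})\), so the choice of root is asymptotically immaterial.
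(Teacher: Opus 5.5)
Your proof is correct, but it takes a different route from the paper's. The paper works in the complex plane. After factoring the population cubic as $(\rho-\rho_0)(\rho^2+1)$, it applies Rouch\'e's theorem (\Cref{lem:continuity_polynomial_roots}) around $\rho_0$ and around $\pm i$. This shows that, with probability tending to one, the empirical cubic has exactly one real root and that this root converges to $\rho_0$. It then uses \Cref{pro:finite-sample} to identify $\hat\rho(u_0)$ with that root, and obtains uniformity by asserting that the root-separation radius can be chosen independently of $u_0$.

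You stay on the real line and use the single inequality $|g(r;u_0)|\ge|r-\rho_0|$. This turns root continuity into the explicit bound $|\hat\rho(u_0)-\rho(u_0)|\le 2|B_n(u_0)-\rho(u_0)|+|A_n(u_0)-2|$ on a high-probability event. Your route has three advantages:
\begin{enumerate}
\item Uniformity over $\mathcal U_\delta$ is automatic. The Rouch\'e route instead needs uniform control of $\varepsilon$, $N(\varepsilon)$ and $\min_{\Gamma}|p|$, which the paper only sketches.
\item The rate $O_p\bigl(\sqrt{\log(1/h)/(nh)}+h^2\bigr)$ from \Cref{lem:uniform-consistency} passes directly to $\hat\rho$.
\item Your coercivity argument makes rigorous the link between the argmin definition and the cubic root, which the paper treats informally. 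It rests on the exact identity $\partial_\rho Q_n=2(1-\rho^2)^{-2}\bigl(n^{-1}\sum_i w_i(u_0)\bigr)g_n(\rho)$; the sign here is $+$, not merely ``up to sign''.
\end{enumerate}

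What the Rouch\'e route gives and yours does not is that the real root is unique and simple. The implicit-function linearisations in \Cref{thm:interior_asymptotic_normality} and \Cref{thm:uniform_boundary_interior_normality} rely on this. It is not needed for consistency, and you could recover it cheaply. The derivative $g_n'(r)=3r^2-2B_nr+A_n-1$ has negative discriminant whenever $B_n^2<3(A_n-1)$. That condition holds uniformly on $\mathcal U_\delta$ with probability tending to one, so $g_n$ is then strictly increasing on $\mathbb{R}$.
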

\Cref{thm:uniform_consistency} establishes that the estimator \(\hat{\rho}(u_{0})\) converges uniformly to the true correlation function \(\rho(u_{0})\) over compact subsets of the design interval. The main idea of proof is to first  show  that the population cubic equation has a unique simple root \( \rho_{0} \), then use the uniform convergence of the empirical coefficients to the population coefficients (\Cref{thm:uniform_consistency_an_bn})  and the continuity of roots of polynomials, i.e., Rouché's theorem,~\citep{ahlforsComplexAnalysisIntroduction2007}, to conclude that the empirical root \( \hat{\rho}(u_{0}) \) converges uniformly to \( \rho(u_{0}) \), the proof details of~\Cref{thm:uniform_consistency} is in Appendix~\ref{subsec:proof}. Having shown that \( \hat{\rho}(u_{0}) \) is uniformly consistent, we next examine its limiting distribution. The derivation is carried out separately for interior points and for boundary points of the design interval, reflecting the fact that the asymptotic behaviour differs between these two regions.

\begin{theorem}\label{thm:interior_asymptotic_normality}
    \textbf{Interior Asymptotic Normality}. Under Assumptions (A1)--(A6), for any fixed interior point \(u_{0}\in(-1,1)\) with \(h\to0\) and \(nh\to\infty\), we have
    \begin{equation}\label{eq:interior_limit}
        \sqrt{nh}\Big(\hat{\rho}(u_{0})-\rho(u_{0})-\mathrm{Bias}_\rho(u_{0})\Big)\xrightarrow{d}N\!\left(0,\ \frac{R(K)}{f_{U}(u_{0})}\cdot\frac{\big(1-\rho(u_{0})^{2}\big)^{2}}{1+\rho(u_{0})^{2}}\right),
    \end{equation}
    where the bias term is
    \begin{equation}\label{eq:interior_bias}
        \mathrm{Bias}_\rho(u_{0})=\frac{h^{2}\mu_{2}(K)}{2}\Big[\rho^{\prime\prime}(u_{0})+2\frac{f_{U}^{\prime}(u_{0})}{f_{U}(u_{0})}\rho^{\prime}(u_{0})\Big].
    \end{equation}
\end{theorem}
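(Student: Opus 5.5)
The plan is a delta-method argument: write the selected root $\hat\rho(u_0)$ as a smooth function of the two kernel averages $(A_n(u_0),B_n(u_0))$, linearise around the population values $(2,\rho(u_0))$, and apply a standard Nadaraya--Watson bias--variance expansion and CLT to the resulting linear combination.

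\textbf{Step 1 (linearising the root).} Write $g(r;a,b)=r^{3}-br^{2}+(a-1)r-b$. At the population point $(r,a,b)=(\rho_0,2,\rho_0)$, with $\rho_0=\rho(u_0)$, we have $g=0$ and
\[
\partial_r g=3\rho_0^2-2\rho_0^2+1=1+\rho_0^2>0,\qquad
\partial_a g=\rho_0,\qquad
\partial_b g=-(1+\rho_0^2).
\]
So the root is simple. By the implicit function theorem there is a $C^\infty$ map $r(a,b)$ on a neighbourhood of $(2,\rho_0)$ with $r(2,\rho_0)=\rho_0$ and
\[
r(a,b)-\rho_0=(b-\rho_0)-\frac{\rho_0}{1+\rho_0^2}(a-2)+O\big(|a-2|^2+|b-\rho_0|^2\big).
\]
By \Cref{thm:uniform_consistency} (pointwise consistency suffices here), with probability tending to one the root $\hat\rho(u_0)$ selected as the minimiser of $Q_n$ lies in this neighbourhood. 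It therefore equals $r(A_n(u_0),B_n(u_0))$, since the root near $\rho_0$ is unique there.

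\textbf{Step 2 (bias and variance of the linear term).} Put $c=\rho_0/(1+\rho_0^2)$ and $Z_i=x_{i1}x_{i2}-c(x_{i1}^2+x_{i2}^2)$. Then the linear term is
\[
(B_n-\rho_0)-c(A_n-2)=\frac{\sum_i w_i Z_i}{\sum_i w_i}-m(u_0),\qquad m(u)=\mathbb{E}[Z\mid U=u]=\rho(u)-2c .
\]
This is a Nadaraya--Watson estimator of the regression $m(u)$, where $c$ is a fixed constant evaluated at $u_0$. The standard interior expansion under (A1)--(A4) applies:
\begin{itemize}
\item Its bias is $\tfrac{h^2\mu_2(K)}{2}\big[m''+2m'f_U'/f_U\big](u_0)+o(h^2)$. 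Because $A(u)\equiv2$ by \eqref{eq:an_bn}, $m''=\rho''$ and $m'=\rho'$, which gives exactly \eqref{eq:interior_bias}.
\item Its variance is $\frac{R(K)}{nhf_U(u_0)}\sigma^2(u_0)\{1+o(1)\}$, where $\sigma^2(u_0)=\mathrm{Var}(Z\mid U=u_0)$.
\end{itemize}
The Gaussian moments at correlation $\rho$ are $\mathrm{Var}(X_1X_2)=1+\rho^2$, $\mathrm{Var}(X_1^2+X_2^2)=4(1+\rho^2)$ and $\mathrm{Cov}(X_1X_2,X_1^2+X_2^2)=4\rho$. Hence
\[
\sigma^2=1+\rho^2-\frac{8\rho^2}{1+\rho^2}+\frac{4\rho^2}{1+\rho^2}=\frac{(1-\rho^2)^2}{1+\rho^2},
\]
which matches \eqref{eq:interior_limit}. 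Asymptotic normality of the centred numerator $(nh)^{-1/2}\sum_i K((U_i-u_0)/h)\{Z_i-m(U_i)\}$ follows from the Lindeberg--Lyapunov CLT for i.i.d.\ triangular arrays. The needed $2+\epsilon$ moments come from (A5) and boundedness of $K$. The denominator satisfies $n^{-1}\sum_i w_i\to f_U(u_0)$ in probability, and Slutsky's lemma finishes this step.

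\textbf{Step 3 (remainder), the expected main obstacle.} We must show $\sqrt{nh}\cdot O_p(|A_n-2|^2+|B_n-\rho_0|^2)=o_p(1)$. Since $A_n-2=O_p((nh)^{-1/2}+h^2)$, and likewise for $B_n$, the remainder is $O_p((nh)^{-1}+h^4)$. Thus $\sqrt{nh}$ times it is $O_p((nh)^{-1/2}+\sqrt{nh}\,h^4)$.
\begin{itemize}
\item The first term vanishes because $nh\to\infty$.
\item The second term is $o(1)$ relative to the retained $\sqrt{nh}\,h^2$ bias. It becomes negligible outright under the mild condition $nh^{9}\to0$; otherwise one carries it as an $o(h^2)$ correction absorbed into $\mathrm{Bias}_\rho$.
\end{itemize}
Care is also needed to make the selection of the minimising root rigorous. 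I would argue that on the event $\{|A_n-2|+|B_n-\rho_0|<\epsilon\}$, whose probability tends to one, the cubic has exactly one real root within $\epsilon'$ of $\rho_0$. The other roots stay bounded away from $\rho_0$, and $Q_n$ is locally strictly convex there because $\partial_r g>0$. This identifies $\hat\rho(u_0)$ with $r(A_n,B_n)$.
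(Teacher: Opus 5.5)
Your proposal is correct and takes essentially the same route as the paper. Both use the implicit-function linearisation with $\partial_r g=1+\rho_0^2$, $\partial_a g=\rho_0$, $\partial_b g=-(1+\rho_0^2)$, take all the bias from $B_n$ because $A\equiv 2$, and get the variance $(1-\rho_0^2)^2/(1+\rho_0^2)$ from the Isserlis moments. Folding the linear combination into a single Nadaraya--Watson regression of $Z_i$ is just the Cram\'er--Wold form of the paper's joint CLT for $(A_n,B_n)$.

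Your handling of the quadratic remainder and of root selection is more careful than the paper's, with one adjustment. The statement implicitly needs $nh^5=O(1)$, already to kill the $o(h^2)$ term of the Nadaraya--Watson bias after scaling by $\sqrt{nh}$. Under that condition $\sqrt{nh}\,h^4\to0$ automatically, so no separate ``absorption'' into $\mathrm{Bias}_\rho$ is needed.
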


\begin{remark}
    The asymptotic variance in~\eqref{eq:interior_limit} has the form \(R(K)/\big(f_{U}(u_{0})I(\rho_{0})\big)\), where \(I(\rho)=(1+\rho^{2})/(1-\rho^{2})^{2}\) is the Fisher information for the correlation parameter \(\rho\) in a bivariate normal distribution with known unit marginal variances. This coincides with the general local-likelihood theory for nonparametric varying-coefficient models~\citep{fanLocalPolynomialKernel1995,staniswalisNonparametricRegressionAnalysis1998}, providing an independent verification of the calculations.
\end{remark}

\begin{theorem}\label{thm:boundary_asymptotic_normality}
    \textbf{Boundary Asymptotic Normality}. Under Assumptions (A1)--(A6), let \(u_{0}=-1+ch\) for fixed \(c\in[0,\infty)\) (left boundary), with \(h\to0\) and \(nh\to\infty\). Define the truncated kernel moments
    \begin{equation}\label{eq:truncated_moments_0}
        \mu_{j}(K,c)=\int_{-c}^{1} t^{j}K(t)\,\mathrm{d}t,\qquad R(K,c)=\int_{-c}^{1}K(t)^{2}\,\mathrm{d}t.
    \end{equation}
    Then
    \begin{equation}\label{eq:boundary_limit}
        \sqrt{nh}\left(\hat{\rho}(u_{0})-\rho(u_{0})-h\rho^{\prime}(u_{0})\frac{\mu_{1}(K,c)}{\mu_{0}(K,c)}\right)\xrightarrow{d}
        N\!\left(0,\ \frac{R(K,c)}{\mu_{0}(K,c)^{2}f_{U}(-1^+)}\cdot\frac{\big(1-\rho(u_{0})^{2}\big)^{2}}{1+\rho(u_{0})^{2}}\right).
    \end{equation}
    The corresponding result holds symmetrically for the right boundary \(u_{0}=1-ch\).
\end{theorem}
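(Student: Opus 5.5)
We follow the same route as for \Cref{thm:interior_asymptotic_normality}: linearise the root map $(A,B)\mapsto\rho$ defined by the cubic \eqref{eq:cubic_equation_2} around the population point $(2,\rho(u_0))$. Then we carry out the local expansions of $A_n,B_n$ with the kernel truncated at the boundary.

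\textbf{Step 1 (linearisation).} Write $g(\rho;A,B)=\rho^3-B\rho^2+(A-1)\rho-B$. At $(A,B)=(2,\rho_0)$ with $\rho_0=\rho(u_0)$, the root is $\rho_0$ and
\[
\partial_\rho g=3\rho_0^2-2\rho_0^2+1=1+\rho_0^2>0 .
\]
Hence the root is simple, and by the implicit function theorem it is a smooth function $r(A,B)$ near this point. The partial derivatives are
\[
\partial_A r=-\rho_0/(1+\rho_0^2),\qquad \partial_B r=(1+\rho_0^2)/(1+\rho_0^2)=1 .
\]
The consistency argument of \Cref{thm:uniform_consistency} (Rouché) works equally at $u_0=-1+ch$, since it only needs $A_n\to2$ and $B_n\to\rho_0$ in probability at that point. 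This follows from the Step 2 expansions. Consequently $\hat\rho(u_0)=r(A_n,B_n)$ with probability tending to one, and
\[
\hat\rho-\rho_0=(B_n-\rho_0)-\frac{\rho_0}{1+\rho_0^2}(A_n-2)+O_p\big((A_n-2)^2+(B_n-\rho_0)^2\big).
\]

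\textbf{Step 2 (bias at the boundary).} The key structural fact is that $\mathbb E[X_1^2+X_2^2\mid U]\equiv2$, so $A_n-2$ has no bias term at all. For $B_n$, substitute $t=(U-u_0)/h\in[-c,1]$. Using $\rho(u_0+ht)=\rho_0+ht\rho'(u_0)+O(h^2)$ and $f_U(u_0+ht)=f_U(-1^+)+O(h)$, the ratio of expectations of numerator and denominator gives
\[
\mathbb E\text{-centre of }B_n=\rho_0+h\rho'(u_0)\,\frac{\mu_1(K,c)}{\mu_0(K,c)}+O(h^2).
\]
The $O(h^2)$ remainder comes from $\rho''$ and from the one-sided derivative of $f_U$ in (A2). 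Ratio linearisation adds only $O_p(1/(nh))$ terms. The $O(h^2)$ piece is absorbed provided $nh^5\to0$. Otherwise it is $O(\sqrt{nh}\,h^2)$, which is lower order than the displayed $O(h)$ bias, and I would state the result with the implicit understanding, as is standard, that only leading bias is retained.

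\textbf{Step 3 (variance and CLT).} Define
\[
Z_i=\big(X_{i1}X_{i2}-\rho(U_i)\big)-\frac{\rho_0}{1+\rho_0^2}\big(X_{i1}^2+X_{i2}^2-2\big).
\]
Then the linear term equals
\[
\frac{\sum_i K_h(U_i-u_0)Z_i}{\sum_i K_h(U_i-u_0)} ,
\]
plus bias. The denominator $n^{-1}\sum_i K_h(U_i-u_0)$ converges to $f_U(-1^+)\mu_0(K,c)$. Gaussian moment identities give $\mathrm{Var}(X_1X_2)=1+\rho^2$, $\mathrm{Var}(X_1^2+X_2^2)=4(1+\rho^2)$ and $\mathrm{Cov}(X_1X_2,X_1^2+X_2^2)=4\rho$. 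From these,
\[
\mathrm{Var}(Z\mid U=u_0)=(1+\rho^2)-\frac{8\rho^2}{1+\rho^2}+\frac{4\rho^2}{1+\rho^2}=\frac{(1-\rho^2)^2}{1+\rho^2}.
\]
Next,
\[
nh\,\mathrm{Var}\Big(n^{-1}\sum_i K_h Z_i\Big)\to f_U(-1^+)\,R(K,c)\,\frac{(1-\rho_0^2)^2}{1+\rho_0^2}.
\]
The Lyapunov condition holds by (A5), since fourth moments are bounded, and the kernel is bounded on $[-c,1]$. Slutsky then yields the stated variance $R(K,c)/(\mu_0(K,c)^2 f_U(-1^+))$ times this factor. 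The quadratic remainder is $O_p(1/(nh)+h^2)$, which is $o_p((nh)^{-1/2})$ under the bandwidth conditions. The right boundary follows by the reflection $u\mapsto-u$.

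\textbf{Main obstacle.} The delicate point is the uniform validity of root selection at a moving point $u_0=-1+ch$. Specifically, one must show that the minimiser of $Q_n$ is the root near $\rho_0$, not one of the other (possibly real) roots. I would handle this by noting that the other two roots of the population cubic,
\[
(\rho-\rho_0)(\rho^2+1)=0,
\]
are the non-real roots $\pm i$. Hence for $(A_n,B_n)$ close to $(2,\rho_0)$, only one real root exists, and the selection issue disappears with probability tending to one.
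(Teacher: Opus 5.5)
Your proposal is correct and follows essentially the same route as the paper. It uses implicit-function linearisation of the cubic root at $(A,B)=(2,\rho_0)$ with $g_\rho=1+\rho_0^2$, exact unbiasedness of $A_n$ because $A\equiv 2$, the truncated-moment bias $h\rho'(u_0)\mu_1(K,c)/\mu_0(K,c)$ for $B_n$, and an Isserlis-based CLT with truncated kernel integrals; your single summand $Z_i$ is just the paper's joint CLT for $(A_n,B_n)$ composed with the delta-method vector.

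Your remark that the $O(h^2)$ bias remainder is negligible only when $nh^5\to 0$ is a genuine caveat on the statement, and the paper's proof glosses over it by writing $o(h)$. You silently use the same condition when you claim the quadratic remainder is $o_p((nh)^{-1/2})$. There it can be avoided: $g(\rho;2,B)=(\rho-B)(\rho^2+1)$ gives the root map $r(2,B)=B$ exactly, so the remainder is only $O_p(1/(nh)+h(nh)^{-1/2})$.
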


\begin{remark}
    The qualitative difference between~\Cref{thm:interior_asymptotic_normality} and~\Cref{thm:boundary_asymptotic_normality} is the order of bias: \(O(h^{2})\) in the interior versus \(O(h)\) at the boundary. This is a first-order boundary effect characteristic of local-constant (Nadaraya--Watson) type estimators. Higher-order kernels or local-linear modifications can reduce the boundary bias to \(O(h^{2})\).
\end{remark}

The asymptotic bias-variance trade-off in~\Cref{thm:interior_asymptotic_normality,thm:boundary_asymptotic_normality} determines the optimal bandwidth selection and the resulting convergence rates. The following corollary characterises these optimal rates.

\begin{corollary}\label{cor:optimal_bandwidth}
    \textbf{Optimal Bandwidth and Convergence Rates}. Under Assumptions (A1)--(A6), the optimal bandwidth and convergence rates differ between interior and boundary points as follows:
    \begin{enumerate}[label=(\roman*)]
        \item For interior points \(u_{0}\in(-1,1)\), the asymptotic mean squared error \(\mathrm{AMSE}(h)=\mathbb{E}[(\hat{\rho}(u_{0})-\rho(u_{0}))^{2}]\) is minimised at
            \begin{equation}\label{eq:opt_bandwidth_interior}
                h_{\mathrm{opt}}=\left(\frac{C_{2}(u_{0})}{4C_{1}(u_{0})\,n}\right)^{1/5}\asymp n^{-1/5},
            \end{equation}
            where
            \begin{align*}
                C_{1}(u_{0})& =\left(\frac{\mu_{2}(K)}{2}\right)^{2}\Big[\rho^{\prime\prime}(u_{0})+2\frac{f_{U}^{\prime}(u_{0})}{f_{U}(u_{0})}\rho^{\prime}(u_{0})\Big]^{2}, \\
                C_{2}(u_{0})& =\frac{R(K)}{f_{U}(u_{0})}\cdot\frac{\big(1-\rho(u_{0})^{2}\big)^{2}}{1+\rho(u_{0})^{2}}.
            \end{align*}
            At this bandwidth, \(\mathrm{AMSE}(h_{\mathrm{opt}})\asymp n^{-4/5}\) and \(\hat{\rho}(u_{0})-\rho(u_{0})=O_{p}(n^{-2/5})\).

        \item For boundary points \(u_{0}=-1+ch\) with fixed \(c\in[0,\infty)\), the optimal bandwidth is
            \begin{equation}\label{eq:opt_bandwidth_boundary}
                h_{\mathrm{opt}}=\left(\frac{C_{4}}{2C_{3}\,n}\right)^{1/3}\asymp n^{-1/3},
            \end{equation}
            where
            \begin{align*}
                C_{3}& =\Big[\rho^{\prime}(u_{0})\frac{\mu_{1}(K,c)}{\mu_{0}(K,c)}\Big]^{2},                                      \\
                C_{4}& =\frac{R(K,c)}{\mu_{0}(K,c)^{2}f_{U}(-1^+)}\cdot\frac{\big(1-\rho(u_{0})^{2}\big)^{2}}{1+\rho(u_{0})^{2}}.
            \end{align*}
            At this bandwidth, \(\mathrm{AMSE}(h_{\mathrm{opt}})\asymp n^{-2/3}\) and \(\hat{\rho}(u_{0})-\rho(u_{0})=O_{p}(n^{-1/3})\).
    \end{enumerate}
\end{corollary}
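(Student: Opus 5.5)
The plan is to read the AMSE off the bias and variance already identified in \Cref{thm:interior_asymptotic_normality,thm:boundary_asymptotic_normality}, and then minimise the resulting function of $h$ by elementary calculus. For an interior point $u_{0}$, \Cref{thm:interior_asymptotic_normality} shows that $\hat{\rho}(u_{0})-\rho(u_{0})$ equals $\mathrm{Bias}_\rho(u_{0})+(nh)^{-1/2}Z_{n}$, where $Z_{n}\xrightarrow{d}N(0,C_{2}(u_{0}))$. We interpret the AMSE as the squared asymptotic bias plus the asymptotic variance, as is standard. Justifying this as the limit of the actual second moment would need uniform integrability of $nh(\hat{\rho}-\rho-\mathrm{Bias})^{2}$.

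For this uniform integrability step, $\hat{\rho}\in[-1,1]$ is bounded, so the trouble is confined to the scaling. The natural route is the linearisation behind \Cref{thm:interior_asymptotic_normality}. There $\hat{\rho}-\rho$ is a smooth function of $(A_{n}-2,B_{n}-\rho)$ with nonvanishing derivative $\partial_{\rho}g=3\rho^{2}-2B\rho+A-1=1+\rho^{2}>0$ at the population root. The quantities $A_{n},B_{n}$ are kernel ratio averages whose second moments are controlled by (A5), and the remainder is handled on the high-probability event of \Cref{thm:uniform_consistency_an_bn}. This step is the main technical obstacle, and if necessary we simply take the AMSE as the definition via the asymptotic distribution. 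Granting it, we have
\begin{equation*}
\mathrm{AMSE}(h)=C_{1}(u_{0})h^{4}+\frac{C_{2}(u_{0})}{nh},
\end{equation*}
with $C_{1}$ equal to the square of the bracket in \eqref{eq:interior_bias} multiplied by $(\mu_{2}(K)/2)^{2}$.

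Setting the derivative to zero gives $4C_{1}h^{3}=C_{2}/(nh^{2})$, so $h^{5}=C_{2}/(4C_{1}n)$, which is \eqref{eq:opt_bandwidth_interior}. Strict convexity of the map $h\mapsto C_{1}h^{4}+C_{2}/(nh)$ on $(0,\infty)$ makes this the unique minimiser, assuming $C_{1}(u_{0})>0$. In the degenerate case where the bias bracket vanishes, a higher-order expansion is needed and we exclude it. Substituting back, both terms are of order $n^{-4/5}$, so $\mathrm{AMSE}(h_{\mathrm{opt}})\asymp n^{-4/5}$. Also $\sqrt{nh_{\mathrm{opt}}}\asymp n^{2/5}$, and the bias is $O(h_{\mathrm{opt}}^{2})=O(n^{-2/5})$. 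Together with tightness of the normalised error from \Cref{thm:interior_asymptotic_normality}, this gives $\hat{\rho}(u_{0})-\rho(u_{0})=O_{p}(n^{-2/5})$. Note that $h_{\mathrm{opt}}$ satisfies (A4), since $nh_{\mathrm{opt}}\asymp n^{4/5}\to\infty$.

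The boundary case is identical in structure, using \Cref{thm:boundary_asymptotic_normality}. There the bias is $h\rho'(u_{0})\mu_{1}(K,c)/\mu_{0}(K,c)$ and the variance is $C_{4}/(nh)$, so
\begin{equation*}
\mathrm{AMSE}(h)=C_{3}h^{2}+\frac{C_{4}}{nh}.
\end{equation*}
The first-order condition $2C_{3}h=C_{4}/(nh^{2})$ gives $h^{3}=C_{4}/(2C_{3}n)$, which is \eqref{eq:opt_bandwidth_boundary}. Here we need $C_{3}>0$, that is $\rho'(u_{0})\neq 0$ and $\mu_{1}(K,c)\neq0$. The latter holds for $c<1$ by symmetry and nonnegativity of $K$, since then the truncated kernel has positive mean. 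For $c\ge1$ the point is effectively interior and part (i) applies instead. One caveat is that $u_{0}=-1+ch$ moves with $h$; we treat $c$ and the limit values $\rho(-1),\rho'(-1),f_{U}(-1^{+})$ as fixed, with continuity from (A2) and (A3) absorbing the difference. Both terms are then of order $n^{-2/3}$, and $\sqrt{nh_{\mathrm{opt}}}\asymp n^{1/3}$ while the bias is $O(n^{-1/3})$, which gives $O_{p}(n^{-1/3})$.
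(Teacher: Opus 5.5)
Your proposal is correct and follows the same route as the paper: read off $\mathrm{AMSE}(h)$ as squared bias plus variance from \Cref{thm:interior_asymptotic_normality,thm:boundary_asymptotic_normality}, then minimise $C_{1}h^{4}+C_{2}/(nh)$ and $C_{3}h^{2}+C_{4}/(nh)$ via the first-order condition. Your extra remarks go beyond the paper's proof without departing from it. The paper silently skips the requirement $C_{1}(u_{0})>0$ and $C_{3}>0$; the latter needs $\rho'(u_{0})\neq 0$ and $c<1$, since $\mu_{1}(K,c)=0$ for $c\geq 1$. It also skips the point that identifying the AMSE with the literal $\mathbb{E}[(\hat{\rho}(u_{0})-\rho(u_{0}))^{2}]$ requires a uniform-integrability argument, and it ignores the dependence of $u_{0}=-1+ch$ on $h$.
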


\begin{remark}
    The boundary convergence rate \(n^{-1/3}\) is strictly slower than the interior rate \(n^{-2/5}\). This is the standard penalty for local-constant smoothing at design boundaries, where the bias is \(O(h)\) rather than \(O(h^{2})\). This penalty can be eliminated by using local-linear likelihood methods, which achieve \(O(h^{2})\) bias uniformly across the domain, including at boundaries.
\end{remark}

Next, we address the boundary rate discrepancy. The suboptimal \(n^{-1/3}\) convergence rate at boundaries is not inherent to the cubic-equation framework itself, but rather a consequence of using the Nadaraya--Watson estimator for \(B_{n}(u_{0})\). We develop a boundary-corrected variant using a local linear method that restores the faster \(n^{-2/5}\) rate uniformly across the entire design interval, including boundary points.

The Nadaraya--Watson estimator \(B_{n}(u_{0})\) estimates \(\rho(u_{0})=\mathbb{E}[X_{1}X_{2}\mid U=u_{0}]\) using a truncated kernel whose first moment
\begin{equation}\label{eq:truncated_first_moment}
    \mu_{1}(K,c)=\int_{-c}^{1}t\,K(t)\,\mathrm{d}t
\end{equation}
is nonzero once the window \([u_{0}-h,u_{0}+h]\) is truncated by the boundary \(U\geq -1\). In contrast, \(A_{n}(u_{0})\) suffers no bias at any order, because the population function \(A(u)\equiv 2\) is constant; thus \(A_{n}(u_{0})\) estimates a zero derivative, and kernel asymmetry is irrelevant. Consequently, the boundary correction need only modify \(B_{n}(u_{0})\).

This is precisely the situation addressed by local-linear kernel regression~\citep{fanDesignadaptiveNonparametricRegression1992,ruppertMultivariateLocallyWeighted1994,fanLocalPolynomialModelling1996,fanEstimationConditionalDensities1996,fanEfficientEstimationConditional1998}. While the Nadaraya--Watson smoother exhibits \(O(h)\) design bias at boundaries, the local-linear smoother achieves \(O(h^{2})\) bias uniformly across the support, including boundaries. This automatic boundary correction is a well-known advantage of local polynomial methods. Because \(B(u_{0})=\rho(u_{0})\) is the regression function of \(X_{1}X_{2}\) on \(U\), we can replace \(B_{n}(u_{0})\) by its local-linear approximation in the cubic equation without altering the consistency and asymptotic normality arguments previously established.

We define the boundary-corrected estimator by replacing the Nadaraya--Watson component \(B_{n}(u_{0})\) with a local-linear fit, while retaining the local-constant estimator for \(A_{n}(u_{0})\).

\begin{definition}[Boundary-Corrected Estimator]\label{def:boundary_corrected_estimator}
    For \(u_{0}\in[-1,1]\), define \((\hat{\rho}_{0},\hat{\rho}_{1})\) as the minimiser of the local weighted least squares criterion
    \begin{equation}\label{eq:local_linear_wls}
        (\hat{\rho}_{0},\hat{\rho}_{1})=\arg\min_{a,b}\sum_{i=1}^{n} K_{h}(U_{i}-u_{0})\Big[x_{i1}x_{i2}-a-b(U_{i}-u_{0})\Big]^{2},
    \end{equation}
    and set \(B_{n}^{\mathrm{CL}}(u_{0})\coloneqq\hat{\rho}_{0}\).

    Explicitly, let
    \begin{align*}
        \hat{S}_{j}(u_{0})& =\frac{1}{n}\sum_{i=1}^{n} K_{h}(U_{i}-u_{0})(U_{i}-u_{0})^{j},                           \\
        \hat{T}_{j}(u_{0})& =\frac{1}{n}\sum_{i=1}^{n} K_{h}(U_{i}-u_{0})(U_{i}-u_{0})^{j}x_{i1}x_{i2},\quad j=0,1,2.
    \end{align*}
    Then
    \begin{equation}\label{eq:local_linear_explicit}
        B_{n}^{\mathrm{CL}}(u_{0})=\frac{\hat{S}_{2}(u_{0})\hat{T}_{0}(u_{0})-\hat{S}_{1}(u_{0})\hat{T}_{1}(u_{0})}{\hat{S}_{2}(u_{0})\hat{S}_{0}(u_{0})-\hat{S}_{1}(u_{0})^{2}}
        =\frac{1}{n}\sum_{i=1}^{n} \ell_{i}(u_{0})\,x_{i1}x_{i2},
    \end{equation}
    where
    \begin{equation}\label{eq:equivalent_kernel_weights}
        \ell_{i}(u_{0})=K_{h}(U_{i}-u_{0})\,\frac{\hat{S}_{2}(u_{0})-\hat{S}_{1}(u_{0})(U_{i}-u_{0})}{\hat{S}_{2}(u_{0})\hat{S}_{0}(u_{0})-\hat{S}_{1}(u_{0})^{2}}
    \end{equation}
    are the equivalent kernel weights. Unlike the Nadaraya--Watson weights \(K_{h}(U_{i}-u_{0})\geq0\), these weights can be negative near the boundary, which is the mechanism that eliminates the first-order bias.

    Define the boundary-corrected estimator \(\hat{\rho}^{\mathrm{CL}}(u_{0})\) as the unique real root in \((-1,1)\) of the modified cubic equation
    \begin{equation}\label{eq:modified_cubic}
        \rho^{3}-B_{n}^{\mathrm{CL}}(u_{0})\rho^{2}+\big(A_{n}(u_{0})-1\big)\rho-B_{n}^{\mathrm{CL}}(u_{0})=0,
    \end{equation}
    where \(A_{n}(u_{0})\) remains the local-constant estimator defined previously.
\end{definition}

The local-linear estimator \(B_{n}^{\mathrm{CL}}(u_{0})\) can equivalently be viewed as a local-constant estimator using a boundary-corrected kernel. At the population level, the weights in~\eqref{eq:local_linear_explicit} correspond to the boundary kernel
\begin{equation}\label{eq:boundary_kernel}
    K^{*}(t;c)\coloneqq\frac{S_{2}(c)-S_{1}(c)\,t}{S_{0}(c)S_{2}(c)-S_{1}(c)^{2}}\,K(t),\qquad t\in[-c,1],
\end{equation}
where
\begin{equation}\label{eq:truncated_moments}
    S_{j}(c)\coloneqq\mu_{j}(K,c)=\int_{-c}^{1}t^{j} K(t)\,\mathrm{d}t,\quad j=0,1,2,3,
\end{equation}
and \(c\) parameterizes the distance from the left boundary via \(u_{0}=-1+ch\). This construction is analogous to the Müller and Gasser--Müller boundary kernels~\citep{mullerSmoothOptimumKernel1991,gasserResidualVarianceResidual1986}. The key property of \(K^{*}(t;c)\) is that it satisfies the moment conditions
\begin{equation}\label{eq:boundary_kernel_moments}
    \int_{-c}^{1} K^{*}(t;c)\,\mathrm{d}t=1,\qquad
    \int_{-c}^{1} t\,K^{*}(t;c)\,\mathrm{d}t=0,
\end{equation}
which hold exactly for every \(c\geq 0\). The zero first moment is what eliminates the \(O(h)\) bias term, regardless of the degree of truncation.

\begin{remark}
    As \(u_{0}\) moves into the interior, \(c\to\infty\) and the kernel support becomes effectively symmetric. By Assumption~(A1), \(S_{1}(c)\to\mu_{1}(K)=0\), so \(K^{*}(t;c)\to K(t)\). The boundary correction vanishes precisely where it is not needed, ensuring continuity between boundary and interior behaviour.
\end{remark}

We now establish that the local-linear modification eliminates the \(O(h)\) boundary bias while preserving the variance structure. Combining~\Cref{lem:local_linear_bias_variance} with the implicit-function-theorem linearization from the cubic equation, we obtain the following uniform asymptotic normality result.

\begin{theorem}\label{thm:uniform_boundary_interior_normality}
    Under Assumptions~(A1)--(A6), for \(u_{0}=-1+ch\) with \(c\geq 0\) fixed,
    \begin{equation}\label{eq:corrected_asymptotic_normality}
        \sqrt{nh}\Big(\hat{\rho}^{\mathrm{CL}}(u_{0})-\rho(u_{0})-\mathrm{Bias}_{\rho}^{\mathrm{CL}}(u_{0})\Big)\xrightarrow{d} N\!\left(0,\frac{R(K^{*},c)}{f_{U}(u_{0})}\cdot\frac{\big(1-\rho(u_{0})^{2}\big)^{2}}{1+\rho(u_{0})^{2}}\right),
    \end{equation}
    where
    \begin{equation}\label{eq:corrected_bias}
        \mathrm{Bias}_{\rho}^{\mathrm{CL}}(u_{0})=\frac{h^{2}}{2}\rho^{\prime\prime}(u_{0})\,\mu_{2}(K^{*},c).
    \end{equation}
\end{theorem}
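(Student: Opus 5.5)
The plan is to treat $\hat{\rho}^{\mathrm{CL}}(u_{0})$ as a smooth function of the two coefficients $(A_{n}(u_{0}),B_{n}^{\mathrm{CL}}(u_{0}))$ and apply the delta method. The cubic coefficients enter in a simple way, so the argument reduces to the joint behaviour of these two coefficients. Write
\[
G(\rho;a,b)=\rho^{3}-b\rho^{2}+(a-1)\rho-b .
\]
At the population values $(a,b)=(2,\rho_{0})$, with $\rho_{0}=\rho(u_{0})$, we have
\[
G(\rho;2,\rho_{0})=(\rho-\rho_{0})(\rho^{2}+1),
\]
so $\rho_{0}$ is the unique real root and it is simple. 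Moreover
\[
\partial_{\rho}G=1+\rho_{0}^{2},\qquad \partial_{a}G=\rho_{0},\qquad \partial_{b}G=-(1+\rho_{0}^{2}).
\]
By the implicit function theorem, the root map $\rho(a,b)$ is smooth near $(2,\rho_{0})$ with
\[
\partial\rho/\partial b=1,\qquad \partial\rho/\partial a=-\rho_{0}/(1+\rho_{0}^{2}).
\]
I would first show that $A_{n}(u_{0})\to 2$ and $B_{n}^{\mathrm{CL}}(u_{0})\to\rho_{0}$ in probability at $u_{0}=-1+ch$. For $A_{n}$ this holds because it is a ratio of truncated-kernel averages of a variable with constant conditional mean. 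For $B^{\mathrm{CL}}_{n}$ it follows from Lemma~\ref{lem:local_linear_bias_variance}. The Rouché argument used for Theorem~\ref{thm:uniform_consistency} then gives that the root in $(-1,1)$ exists, is unique with probability tending to one, and equals $\rho(A_{n},B_{n}^{\mathrm{CL}})$. A first-order Taylor expansion yields
\[
\hat{\rho}^{\mathrm{CL}}-\rho_{0}=(B_{n}^{\mathrm{CL}}-\rho_{0})-\frac{\rho_{0}}{1+\rho_{0}^{2}}(A_{n}-2)+O_{p}\big(|A_{n}-2|^{2}+|B_{n}^{\mathrm{CL}}-\rho_{0}|^{2}\big),
\]
and the remainder is $O_{p}((nh)^{-1}+h^{4})=o_{p}((nh)^{-1/2})$.

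Next comes the bias. Since $A(u)\equiv 2$, the conditional mean of $A_{n}$ equals $2$ exactly given $U_{1},\dots,U_{n}$, so $A_{n}$ contributes no bias at any order. Lemma~\ref{lem:local_linear_bias_variance} (the equivalent-kernel representation with $K^{*}(t;c)$) supplies
\[
\mathbb{E}[B_{n}^{\mathrm{CL}}\mid U]-\rho_{0}=\tfrac{h^{2}}{2}\rho''(u_{0})\mu_{2}(K^{*},c)+o_{p}(h^{2}).
\]
The $O(h)$ term vanishes by the moment identities~\eqref{eq:boundary_kernel_moments}, and the design-density terms cancel because the local-linear fit is design-adaptive. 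This gives exactly $\mathrm{Bias}^{\mathrm{CL}}_{\rho}$.

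For the stochastic part, I write $\sqrt{nh}$ times the centred linear combination as $\sum_{i}\xi_{i}$, where
\[
\xi_{i}=\frac{1}{\sqrt{nh}}\,\tilde w_{i}\Big[(X_{i1}X_{i2}-\rho(U_{i}))\,\frac{K^{*}(\cdot)}{K(\cdot)}-\frac{\rho_{0}}{1+\rho_{0}^{2}}\,\frac{X_{i1}^{2}+X_{i2}^{2}-2}{\mu_{0}(K,c)}\Big]
\]
after replacing the random denominators by their probability limits (an $o_{p}$ error), with $\tilde w_{i}$ the normalised kernel weight. The triangular-array Lindeberg condition follows from the fourth moments in (A5) via Lyapunov. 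The variance calculation uses the Gaussian identities
\[
\mathrm{Var}(X_{1}X_{2})=1+\rho^{2},\qquad \mathrm{Var}(X_{1}^{2}+X_{2}^{2})=4(1+\rho^{2}),\qquad \mathrm{Cov}(X_{1}X_{2},X_{1}^{2}+X_{2}^{2})=4\rho .
\]
This is the step I expect to be the main obstacle. The two kernels differ ($K^{*}$ for $B$ and the truncated $K/\mu_{0}$ for $A$), so the cross term involves $\int K^{*}K/\mu_{0}$, and the stated variance is $\frac{R(K^{*},c)}{f_{U}}\frac{(1-\rho^{2})^{2}}{1+\rho^{2}}$, which carries only the $K^{*}$ factor.

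Combining the three pieces in the interior case with the same kernel gives
\[
(1+\rho^{2})-\frac{8\rho^{2}}{1+\rho^{2}}+\frac{4\rho^{2}(1+\rho^{2})}{(1+\rho^{2})^{2}}=\frac{(1-\rho^{2})^{2}}{1+\rho^{2}},
\]
which matches. At the boundary, however, I would first check whether $\int K^{*}K/\mu_{0}=\int K^{2}/\mu_{0}^{2}=R(K^{*},c)$ holds. If it does not, I would try either:
\begin{itemize}
\item showing that the mismatch is absorbed by the stated constant, by verifying that $\int K^{*}(t)K(t)\,dt/\mu_{0}=\int K^{*2}$ using the orthogonality $\int tK^{*}=0$ against the linear projection; or
\item otherwise interpreting the result with $A_{n}$ built from the same equivalent weights, so that all terms share $K^{*}$ and the algebra reduces to the interior identity with $R(K)$ replaced by $R(K^{*},c)$.
\end{itemize}
Slutsky's lemma then concludes.
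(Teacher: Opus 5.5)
Your linearization $\hat{\rho}^{\mathrm{CL}}-\rho_0=(B_n^{\mathrm{CL}}-\rho_0)-\kappa(A_n-2)+\dots$, with $\kappa=\rho_0/(1+\rho_0^2)$, is correct, and so is your bias argument. The step you single out as the main obstacle, however, is a genuine gap. It cannot be closed, because the stated variance is false at genuine boundary points.

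The term $A_n-2$ is of exact order $(nh)^{-1/2}$. It is built from $\tilde K:=K/\mu_0(K,c)$, while $B_n^{\mathrm{CL}}$ is built from $K^*$. Your own expansion therefore gives the asymptotic variance
\[
\frac{1}{f_U(u_0)}\Big[(1+\rho_0^2)R(K^*,c)-8\rho_0\kappa\int_{-c}^{1}K^*\tilde K\,\mathrm{d}t+4(1+\rho_0^2)\kappa^2\int_{-c}^{1}\tilde K^2\,\mathrm{d}t\Big]
=\frac{1}{f_U(u_0)}\Big[\frac{(1-\rho_0^2)^2}{1+\rho_0^2}R(K^*,c)+\frac{4\rho_0^2}{1+\rho_0^2}\int_{-c}^{1}\big(K^*(t;c)-\tilde K(t)\big)^2\,\mathrm{d}t\Big].
\]
The claimed constant is obtained if and only if $\rho_0=0$ or $K^*(\cdot;c)=\tilde K$. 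The latter means $S_1(c)=0$, which for the usual kernels means $c\ge 1$, i.e. the interior.

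Your first fallback fails. The identity $\int tK^*=0$ is orthogonality with respect to $K\,\mathrm{d}t$, whereas the variance terms are integrals against $K^2\,\mathrm{d}t$. No identity of the form $\int K^*K/\mu_0=R(K^*,c)$ follows from it. Take the uniform kernel at $c=0$:
\begin{itemize}
\item $K^*(t;0)=4-6t$ on $[0,1]$, so $R(K^*,0)=4$;
\item $\int K^*K/\mu_0=\int K^2/\mu_0^2=1$;
\item the true variance therefore exceeds the stated one by $12\rho_0^2/\{(1+\rho_0^2)f_U(-1^+)\}$;
\item at $\rho_0=0.5$ this is $4.2/f_U$ versus the claimed $1.8/f_U$.
\end{itemize}
Your second fallback proves a result for a different estimator. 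The definition of $\hat{\rho}^{\mathrm{CL}}$ deliberately keeps the local-constant $A_n(u_0)$.

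The paper's proof will not supply the missing step either. It asserts $\hat{\rho}^{\mathrm{CL}}-\rho_0=(1+\rho_0^2)^{-1}(B_n^{\mathrm{CL}}-\rho_0)+o_p((nh)^{-1/2})$. This drops the non-negligible $A_n$ term and puts the wrong coefficient on $B_n^{\mathrm{CL}}$; its own interior proof gives $1$, as you found. Taken literally, that expansion would give bias $(1+\rho_0^2)^{-1}\mathrm{Bias}_{\rho}^{\mathrm{CL}}(u_0)$ and variance $R(K^*,c)/\{f_U(u_0)(1+\rho_0^2)\}$, not the stated ones.

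The honest conclusion of your argument is the corrected variance displayed above. The statement as written holds only for $c\ge1$, for $\rho_0=0$, or for a variant in which $A_n$ also uses the local-linear weights.

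Separately, you claim that the $O_p(h^4)$ part of the quadratic remainder and the $o(h^2)$ bias remainder are both $o_p((nh)^{-1/2})$. That needs an extra rate condition such as $nh^5=O(1)$, which (A4) does not provide.
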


\begin{proof}
    The proof mirrors that of~\Cref{thm:interior_asymptotic_normality}, replacing \(B_{n}(u_{0})\) with \(B_{n}^{\mathrm{CL}}(u_{0})\). The implicit function theorem applied to~\eqref{eq:modified_cubic} yields
    \begin{equation*}
        \hat{\rho}^{\mathrm{CL}}(u_{0})-\rho(u_{0})=\frac{1}{1+\rho(u_{0})^{2}}\Big(B_{n}^{\mathrm{CL}}(u_{0})-\rho(u_{0})\Big)+o_{p}\big((nh)^{-1/2}\big).
    \end{equation*}
    Applying~\Cref{lem:local_linear_bias_variance} and the central limit theorem for \(B_{n}^{\mathrm{CL}}(u_{0})\) produces~\eqref{eq:corrected_asymptotic_normality}. Existence and uniqueness of the real root near \(\rho(u_{0})\) follow from the same Rouché-theorem argument as in~\Cref{thm:uniform_consistency}, which requires only \((A_{n},B_{n}^{\mathrm{CL}})\xrightarrow{p}(2,\rho_{0})\). This is guaranteed by~\eqref{eq:local_linear_bias}--\eqref{eq:local_linear_variance} and Assumption~(A4).
\end{proof}

\begin{remark}
    The critical feature of~\Cref{thm:uniform_boundary_interior_normality} is that the bias is \(O(h^{2})\) uniformly for all \(c\geq 0\), from the extreme boundary (\(c=0\)) through the deep interior (\(c\to\infty\)). This stands in contrast to~\Cref{thm:boundary_asymptotic_normality}, where boundary bias was \(O(h)\), necessitating a slower optimal bandwidth.
\end{remark}

The uniform \(O(h^{2})\) bias in~\Cref{thm:uniform_boundary_interior_normality} restores the faster convergence rate at boundary points. The asymptotic mean squared error is
\begin{equation}\label{eq:corrected_amse}
    \mathrm{AMSE}_{\mathrm{LL}}(h;u_{0})=h^{4}\,C_{1}^{*}(u_{0},c)+\frac{1}{nh}\,C_{2}^{*}(u_{0},c),
\end{equation}
where
\begin{align}
    C_{1}^{*}(u_{0},c)& =\frac{1}{4}\Big[\rho^{\prime\prime}(u_{0})\mu_{2}(K^{*},c)\Big]^{2},\label{eq:c1_corrected}                           \\
    C_{2}^{*}(u_{0},c)& =\frac{R(K^{*},c)}{f_{U}(u_{0})}\cdot\frac{\big(1-\rho(u_{0})^{2}\big)^{2}}{1+\rho(u_{0})^{2}}.\label{eq:c2_corrected}
\end{align}
This has the same \(h^{4}+1/(nh)\) structure as the interior case in~\Cref{cor:optimal_bandwidth}. Minimizing~\eqref{eq:corrected_amse} yields
\begin{equation}\label{eq:corrected_opt_bandwidth}
    h_{\mathrm{opt}}(u_{0},c)=\left(\frac{C_{2}^{*}(u_{0},c)}{4C_{1}^{*}(u_{0},c)\,n}\right)^{1/5}\asymp n^{-1/5},
\end{equation}
with corresponding convergence rate
\begin{equation}\label{eq:corrected_convergence_rate}
    \hat{\rho}^{\mathrm{CL}}(u_{0})-\rho(u_{0})=O_{p}\big(n^{-2/5}\big).
\end{equation}

\begin{corollary}\label{cor:uniform_optimal_rate}
    Under Assumptions~(A1)--(A6), the boundary-corrected estimator \(\hat{\rho}^{\mathrm{CL}}(u_{0})\) achieves the optimal rate \(n^{-2/5}\) uniformly across the entire design interval \([-1,1]\), including boundary points. The optimal bandwidth is \(h_{\mathrm{opt}}\asymp n^{-1/5}\) uniformly, eliminating the boundary penalty present in the local-constant estimator.
\end{corollary}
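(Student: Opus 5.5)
The plan is to deduce \Cref{cor:uniform_optimal_rate} from the pointwise expansion in \Cref{thm:uniform_boundary_interior_normality}, made uniform in the location parameter \(c\in[0,\infty)\), together with the interior result \Cref{thm:interior_asymptotic_normality}. Write every \(u_{0}\in[-1,1]\) as \(u_{0}=-1+ch\) (or \(1-ch\) on the right, by symmetry). Points with \(c\geq 1\) are exactly those whose kernel window \([u_{0}-h,u_{0}+h]\) is not truncated. For these, \(S_{0}(c)=1\), \(S_{1}(c)=0\) and \(S_{2}(c)=\mu_{2}(K)\), so \(K^{*}(\cdot;c)=K\). Hence \(B_{n}^{\mathrm{CL}}\) coincides with the interior local-linear smoother, whose bias \(\tfrac{h^{2}}{2}\rho''\mu_{2}(K)\) has no design term. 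The first step is therefore to observe that the constants \(C_{1}^{*}(u_{0},c)\) and \(C_{2}^{*}(u_{0},c)\) in \eqref{eq:c1_corrected}--\eqref{eq:c2_corrected} depend on \(c\) only through \(c\wedge 1\). These constants are continuous in \(c\) on the compact set \([0,1]\).

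The second step is to show that \(C_{2}^{*}\) is bounded, and \(C_{1}^{*}\) is bounded, uniformly in \((u_{0},c)\). Boundedness of \(C_{2}^{*}\) needs the denominator \(S_{0}(c)S_{2}(c)-S_{1}(c)^{2}\) to stay bounded away from zero on \([0,1]\). Since \(K\geq0\) has support \([-1,1]\), Cauchy--Schwarz gives strict positivity, because \(t\) is not constant on \([0,1]\subset[-c,1]\). Continuity in \(c\) then gives a positive minimum. The same fact, together with boundedness of \(K\), bounds \(R(K^{*},c)\) and \(\mu_{2}(K^{*},c)\). The remaining factors are bounded using (A2), which gives \(f_{U}\geq f_{\min}\), and (A3), which gives \(\sup|\rho''|<\infty\) and \(1-\rho^{2}\) bounded away from \(0\).

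The third step is to take any \(h\asymp n^{-1/5}\). Then \(\sup_{u_{0}}\mathrm{AMSE}_{\mathrm{LL}}(h;u_{0})\leq h^{4}\sup C_{1}^{*}+(nh)^{-1}\sup C_{2}^{*}\asymp n^{-4/5}\). The pointwise minimiser \eqref{eq:corrected_opt_bandwidth} is also \(\asymp n^{-1/5}\) whenever \(C_{1}^{*}\) is bounded below, i.e.\ when \(\rho''(u_{0})\mu_{2}(K^{*},c)\neq0\). Where this quantity vanishes, the rate can only improve. So I will phrase the claim ``\(h_{\mathrm{opt}}\asymp n^{-1/5}\) uniformly'' as: a single deterministic sequence \(h\asymp n^{-1/5}\) attains the \(n^{-4/5}\) AMSE bound simultaneously at every \(u_{0}\). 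To upgrade this to \(\hat\rho^{\mathrm{CL}}(u_{0})-\rho(u_{0})=O_{p}(n^{-2/5})\) uniformly, I use the linearisation from the proof of \Cref{thm:uniform_boundary_interior_normality}, namely \(\hat\rho^{\mathrm{CL}}-\rho=(1+\rho^{2})^{-1}(B_{n}^{\mathrm{CL}}-\rho)+(\text{remainder})\), together with the bias and variance bounds of \Cref{lem:local_linear_bias_variance}.

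The main obstacle is uniformity of the remainder terms. These are the \(o_{p}((nh)^{-1/2})\) term from the implicit-function step and the \(o(h^{2})\) Taylor remainders, and both must hold uniformly over \(c\) and \(u_{0}\), not only for fixed \(c\). I would handle this in three parts. First, the Taylor remainders are controlled by the modulus of continuity of \(\rho''\) on \([-1,1]\) and the compact support of \(K^{*}\). Second, the random denominators \(\hat S_{0}\hat S_{2}-\hat S_{1}^{2}\), after scaling by \(h^{2}\), converge uniformly to their population counterparts \(f_{U}^{2}(S_{0}S_{2}-S_{1}^{2})\). This follows from the same uniform kernel-average argument as \Cref{lem:uniform-consistency}, using \(nh/\log(1/h)\to\infty\) from (A4). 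Third, the quadratic term in the root expansion is \(O_{p}((B_{n}^{\mathrm{CL}}-\rho)^{2}+(A_{n}-2)^{2})\), which is negligible given \Cref{thm:uniform_consistency_an_bn} extended to the boundary; \(A_{n}\) has no bias because \(A\equiv2\). If a fully uniform \(\sup_{u_{0}}\) statement is intended, the variance part acquires an extra \(\sqrt{\log n}\) factor. I would then state the result as uniform over \(u_{0}\) of the pointwise rate, i.e.\ \(\sup_{u_{0}}\mathrm{AMSE}=O(n^{-4/5})\), which is what the corollary asserts.
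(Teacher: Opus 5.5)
Your overall route is the paper's: read the rate off the AMSE \eqref{eq:corrected_amse} with \(h\asymp n^{-1/5}\). The uniformity you add is sound and goes beyond what the paper writes. This includes the reduction to \(c\wedge1\), the Cauchy--Schwarz lower bound on \(S_{0}S_{2}-S_{1}^{2}\), and the \(\sqrt{\log n}\) caveat. Your hedge on \(C_{1}^{*}\) is actually needed, not hypothetical. On \([0,1]\), Cauchy--Schwarz gives \(S_{2}(0)^{2}<S_{1}(0)S_{3}(0)\), so \(\mu_{2}(K^{*},0)<0<\mu_{2}(K)=\mu_{2}(K^{*},1)\). Hence \(\mu_{2}(K^{*},\cdot)\) vanishes at some \(c^{*}\in(0,1)\), and there \eqref{eq:corrected_opt_bandwidth} is undefined.

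The step that fails as written is the linearisation you take from the proof of \Cref{thm:uniform_boundary_interior_normality}. The proof of \Cref{thm:interior_asymptotic_normality} computes \(g_{\rho}=1+\rho_{0}^{2}\), \(g_{A}=\rho_{0}\), \(g_{B}=-(1+\rho_{0}^{2})\). With these, the implicit function theorem applied to \eqref{eq:modified_cubic} gives, exactly as in \eqref{eq:rho_linearization},
\begin{equation*}
    \hat{\rho}^{\mathrm{CL}}(u_{0})-\rho_{0}=\big(B_{n}^{\mathrm{CL}}(u_{0})-\rho_{0}\big)-\frac{\rho_{0}}{1+\rho_{0}^{2}}\big(A_{n}(u_{0})-2\big)+O_{p}\big(|A_{n}-2|^{2}+|B_{n}^{\mathrm{CL}}-\rho_{0}|^{2}\big).
\end{equation*}
This is also the expansion that the stated bias \eqref{eq:corrected_bias} and the factor \((1-\rho_{0}^{2})^{2}/(1+\rho_{0}^{2})\) presuppose. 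After subtracting \((1+\rho_{0}^{2})^{-1}(B_{n}^{\mathrm{CL}}-\rho_{0})\), your ``remainder'' is \(O_{p}((nh)^{-1/2})\) but not \(o_{p}((nh)^{-1/2})\) whenever \(\rho_{0}\neq0\).

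The rate conclusion survives, but you must carry the \(A_{n}\) term and bound it uniformly over \([-1,1]\). It has no leading bias since \(A\equiv2\), and its denominator is bounded below because \(\mu_{0}(K,c)\geq1/2\). So it is \(O_{p}((nh)^{-1/2})\) pointwise, with the usual \(\sqrt{\log(1/h)}\) factor in sup norm.

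This also changes the constant you bound in step two. \(A_{n}\) uses the truncated kernel \(K\) while \(B_{n}^{\mathrm{CL}}\) uses \(K^{*}\), so at \(u_{0}=-1+ch\) the variance constant is
\begin{equation*}
    \frac{1}{f_{U}(u_{0})}\Big[(1+\rho_{0}^{2})R(K^{*},c)-\frac{8\rho_{0}^{2}}{(1+\rho_{0}^{2})\mu_{0}(K,c)}\int_{-c}^{1}K(t)K^{*}(t;c)\,\mathrm{d}t+\frac{4\rho_{0}^{2}R(K,c)}{(1+\rho_{0}^{2})\mu_{0}(K,c)^{2}}\Big].
\end{equation*}
It reduces to \(C_{2}^{*}(u_{0},c)\) of \eqref{eq:c2_corrected} for \(c\geq1\), but in general not for \(c<1\). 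Each kernel functional here depends only on \(c\wedge1\) and is bounded by your step-two argument. With this correction, your proof of the corollary goes through.
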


To summarise the improvement, the local-linear correction raises the boundary convergence rate from \(n^{-1/3}\) to \(n^{-2/5}\), matching the interior rate. Only the finite-sample constants \(C_{1}^{*}(u_{0},c)\) and \(C_{2}^{*}(u_{0},c)\) depend on boundary proximity through \(\mu_{2}(K^{*},c)\) and \(R(K^{*},c)\); the polynomial orders in \(n\) and \(h\) are uniform across the support.

\paragraph{Finite-sample existence guarantee.}
The deterministic existence result in~\Cref{pro:finite-sample} relied on the nonnegativity of Nadaraya--Watson weights, which guaranteed \(|B_{n}(u_{0})|\leq A_{n}(u_{0})/2\) via the Cauchy--Schwarz inequality. The local-linear weights~\eqref{eq:equivalent_kernel_weights} can be negative, so this bound need not hold in finite samples. Consequently,~\eqref{eq:modified_cubic} may occasionally have no root in \([-1,1]\) for particular finite samples.
This does not affect the asymptotic theory: by~\Cref{thm:uniform_boundary_interior_normality}, with probability approaching one, a unique real root exists near \(\rho(u_{0})\in(-1,1)\). In practice, two approaches restore the finite-sample guarantee: Clip \(B_{n}^{\mathrm{CL}}(u_{0})\) to \([-A_{n}(u_{0})/2, A_{n}(u_{0})/2]\) before solving~\eqref{eq:modified_cubic}. This has asymptotically negligible effect but ensures existence.

\section{Simulation Study}\label{sec:simulation_study}
In this section, we conduct a series of simulation studies to compare the performance of the estimation method based on solving the cubic equation with the other methods. The estimators we consider in this section include the Nadaraya-Watson estimator in~\eqref{eq:correlation_coefficient_nonparametric_2}, the cubic-equation-based estimator established in~\Cref{sec:Methodology}, the cubic-equation-based estimator with boundary correction using local linear adjustments, and the direct local linear estimator described in Appendix~\ref{sec:appendix_proofs}. We call these four estimators NW, CE, CL, and LL, respectively.

\paragraph{Correlation Function}\label{para:correlation_function}
In this simulation study, we consider using the underlying correlation functions as follows: parabolic function \( \rho(u)=1-2u^{2} \), sinusoidal function \( \rho(u)=\sin(\pi u/2) \), Cosine function \( \rho(u)=\cos(\pi u/2) \), cubic function \( \rho(u)=u^{3} \), linear function \( \rho(u)=u \), exponential function \( \rho(u)=(\exp(u)-\exp(-1))/ (\exp(1)-\exp(-1))\), upper half semi-circle \( \rho(u)=\sqrt{1-u^{2}} \), normalised inverse sinusoidal function \( 2/\pi*\arcsin(u) \) and the third Chebyshev polynomial \( \rho(u)=4u^{3}-3u \). For simplicity, we denote them in order as \( F1, F2, F3, F4, F5, F6, F7, F8, F9 \). These correlation functions represent a wide range of correlation structures, including monotonic, non-monotonic, and oscillatory behaviours. The choice of these functions allows us to evaluate the performance of the four estimators under various scenarios that may arise in practical applications, see~\Cref{fig:figs/correlation_functions.pdf}.

\paragraph{Kernel Function}\label{para:kernel_function}
We use Gaussian, Epanechnikov, Triangular, Uniform, Quadratic, and Triweight kernel functions in the simulation study, see~\Cref{fig:figs/kernel_functions_comparison.pdf}. For simplicity, we denote them in order as \( K1, K2, K3, K4, K5, K6 \) respectively.
\begin{figure}[ht]
    \centering
    \begin{subfigure}[bt]{0.4\textwidth}
        \centering
        \includegraphics[width=\textwidth]{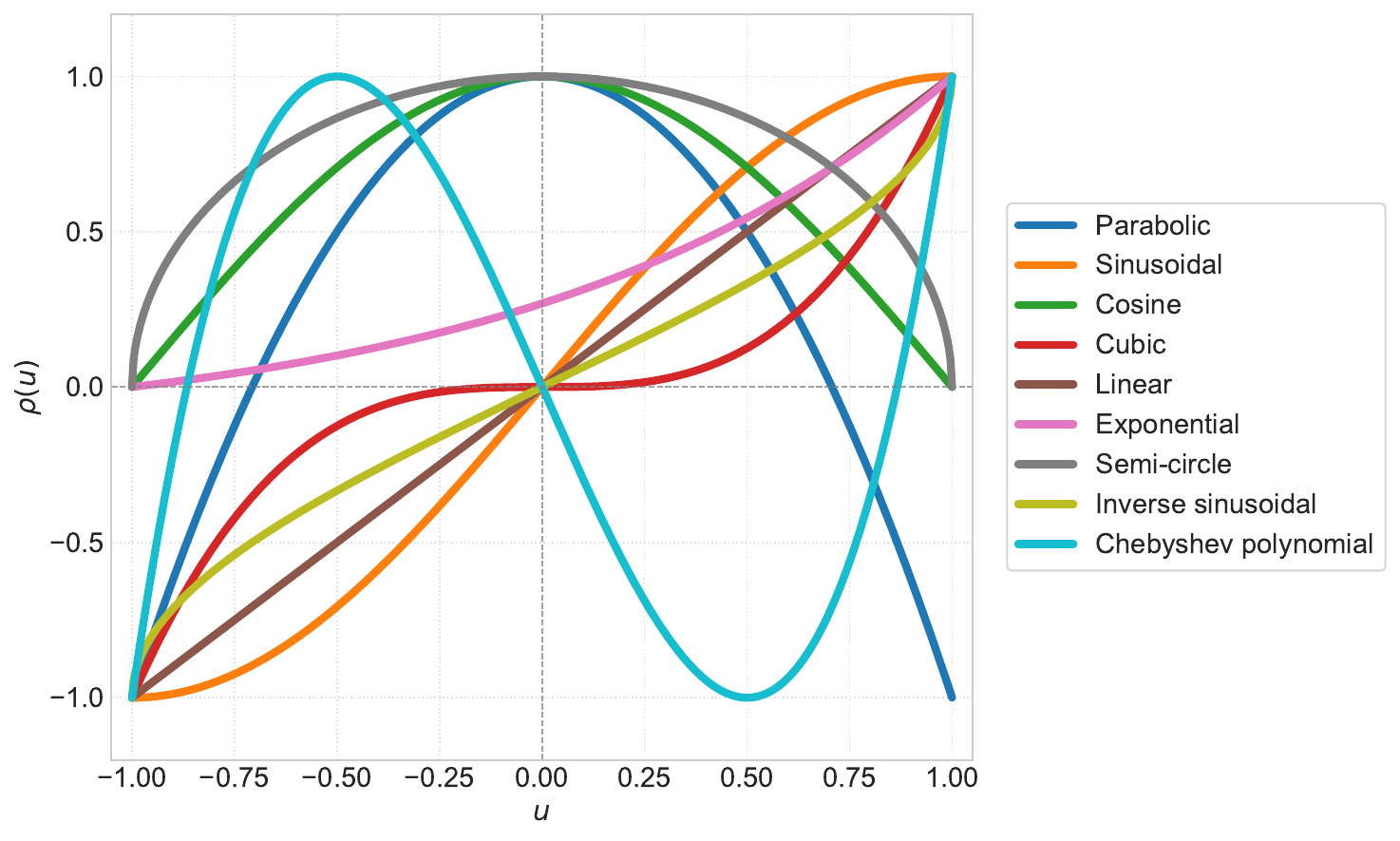}
        \caption{The Underlying Correlation Functions}\label{fig:figs/correlation_functions.pdf}
    \end{subfigure}
    \begin{subfigure}[bt]{0.4\textwidth}
        \centering
        \includegraphics[width=\textwidth]{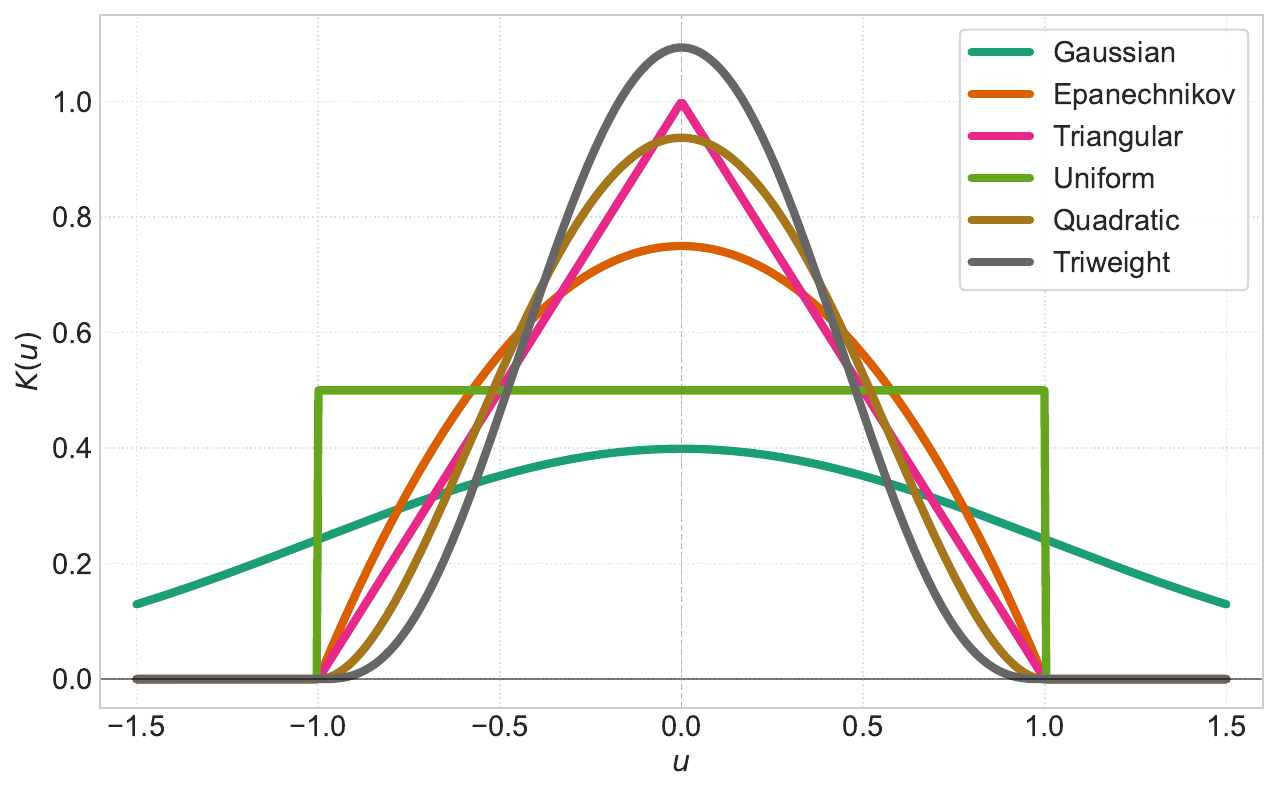}
        \caption{The Kernel Functions}\label{fig:figs/kernel_functions_comparison.pdf}
    \end{subfigure}
    \caption{Correlation and Kernel Functions}\label{fig:correlation_kernel_functions}
\end{figure}
\paragraph{Data generation and parameter settings}\label{para:parameter_settings}
The sample size \( n \) is set to 100, 200, 400, 800, 1600 and 3200. Given the underlying correlation function and kernel function, we first generate the \( U_{i}, i=1,2,\ldots,n \) from the Uniform distribution \( \mathcal{U}(-1,1) \), then for each \( U_{i} \), randomly generate the \( x_{i1}, x_{i2} \) from the bivariate normal distribution with zero mean and covariance matrix \(
    \begin{bmatrix}
        1          & \rho(U_{i}) \\
        \rho(U_{i})& 1
\end{bmatrix}  \). Once we obtain the observations \( (U_{i}, x_{i1}, x_{i2}) \), we apply the four estimators to estimate the correlation function. The optimal bandwidth is selected via grid search over 100 points spaced logarithmically from \( 10^{-3} \) to \( 10^{0} \). To evaluate the performance of the four estimators at boundaries, we adopt two sampling strategies: (i) randomly sample \( u_{j}^{\text{test}}, j=1,2,\ldots,m \) from the Uniform distribution \( \mathcal{U}(-1,1) \), and (ii) randomly sample 25\%, 50\%, 25\% of \( u_{j}^{\text{test}}, j=1,2,\ldots,m \) from the Uniform distributions \( \mathcal{U}(-1,-0.9]\), \( \mathcal{U}(-0.9,0.9)\) and \( \mathcal{U}[0.9,1)\) respectively. The first strategy allows us to evaluate the performance of the four estimators across the entire support, while the second strategy focuses on the boundary regions. We use the mean squared error (MSE) to evaluate the performance of the four estimators, which is defined as \( \text{MSE} = \frac{1}{m}\sum_{j=1}^{m}(\hat{\rho}(u_{j}^{\text{test}})-\rho(u_{j}^{\text{test}}))^{2} \), where \( m \) is the number of test points. We set \( m=100 \) in this simulation. The total number of combinations of the sample size, the correlation functions, kernel functions and sampling strategies is 6*9*6*2=648. For each combination, we repeat the simulation 200 times to obtain the MSE for each estimator. Furthermore, we also record the computation time for each estimator to evaluate their computational efficiency.

\paragraph{The effect of sample size}\label{para:the_effect_of_sample_size}
\Cref{fig:figs/sample_size_vs_mse.pdf} shows the log-transformed MSEs of four methods under the two sampling strategies against the different sample size. The ``uniform'' and ``boundary'' sampling strategies are denoted as ``S1'' and ``S2''  respectively in the rest of this section. Given the sample size \( n \), for every combination of methods and sampling strategies, we estimate the kernel density of the log-transformed MSEs by pooling them from across 9  correlation functions, 6 kernel functions, and 200 replications. The left dot on the kernel density estimation (KDE) is the corresponding median of the log-transformed MSEs. The KDE is estimated as follows: we adopt a univariate Gaussian KDE; the bandwidth is selected according to the normal-reference rule \( h=1.06\hat{\sigma}N^{-1/5} \), where \( N=9\times6\times200=10,800 \) and \( \hat{\sigma} \) is the standard deviation of log-transformed MSEs~\citep{silvermanDensityEstimationStatistics1986}. The resulting KDEs are used solely to provide a smoothed visual representation of the empirical distributions and are not used for formal statistical inference.

\Cref{fig:figs/sample_size_vs_mse.pdf} highlights several important facts.
\begin{itemize}
    \item For every combination of estimation method and sampling strategy, the MSE decreases as the sample size increases, in agreement with the theoretical results established in~\Cref{sec:Methodology}. For example, under the NW+S1 configuration, the MSE decreases from approximately \( 10^{-1.75} \) at \( n=100 \) to nearly \( 10^{-3.0} \) at \( n=3200 \) in~\Cref{fig:figs/sample_size_vs_mse.pdf}.
    \item When comparing the two sampling strategies, we find that, for every combination of estimation method and sample size, the uniform sampling strategy (S1) consistently yields a more favourable distribution of the MSE than the boundary sampling strategy (S2). This pattern is consistent across all combinations of methods and sample sizes. This result is largely attributable to the differences in the design of sampling strategies. S2 draws 50\% of the test points from the narrow boundary regions \((-1,-0.9]\cup[0.9,1)\), which are more challenging for estimation due to the boundary effects. In contrast, S1 samples across the entire support, leading to a more balanced representation of the correlation function and thus lower MSEs. The scheme of S2 is introduced specifically to assess the effectiveness of the boundary correction in LL and CL methods.~\Cref{fig:figs/sample_size_vs_mse.pdf} provide clear evidence that the boundary correction indeed reduces the estimation error. For instance, when \( n=200 \), the vertical distance between the median MSEs of LL under S1 and S2 (represented by two green dots) is substantially smaller than that of NW under S1 and S2 (represented by two blue dots). Similarly, the vertical distance between the median MSEs of CL under S1 and S2 (represented by two orange dots) is smaller than that of CE under S1 and S2 (represented by two red dots). This indicates that the boundary correction effectively mitigates the additional estimation error induced by boundary sampling. A more detailed assessment of the boundary effect is provided later, based on the MSE calculated specifically from test points in the boundary regions.
    \item From a method comparison perspective, NW and CE do not incorporate boundary correction, while LL and CL do. Consequently, the latter two methods consistently outperform the former two in terms of MSE, regardless of the sampling strategy adopted. A more detailed comparison of the four methods is provided later in this section.
\end{itemize}
\begin{figure}[ht]
    \centering
    \includegraphics[width=0.95\textwidth]{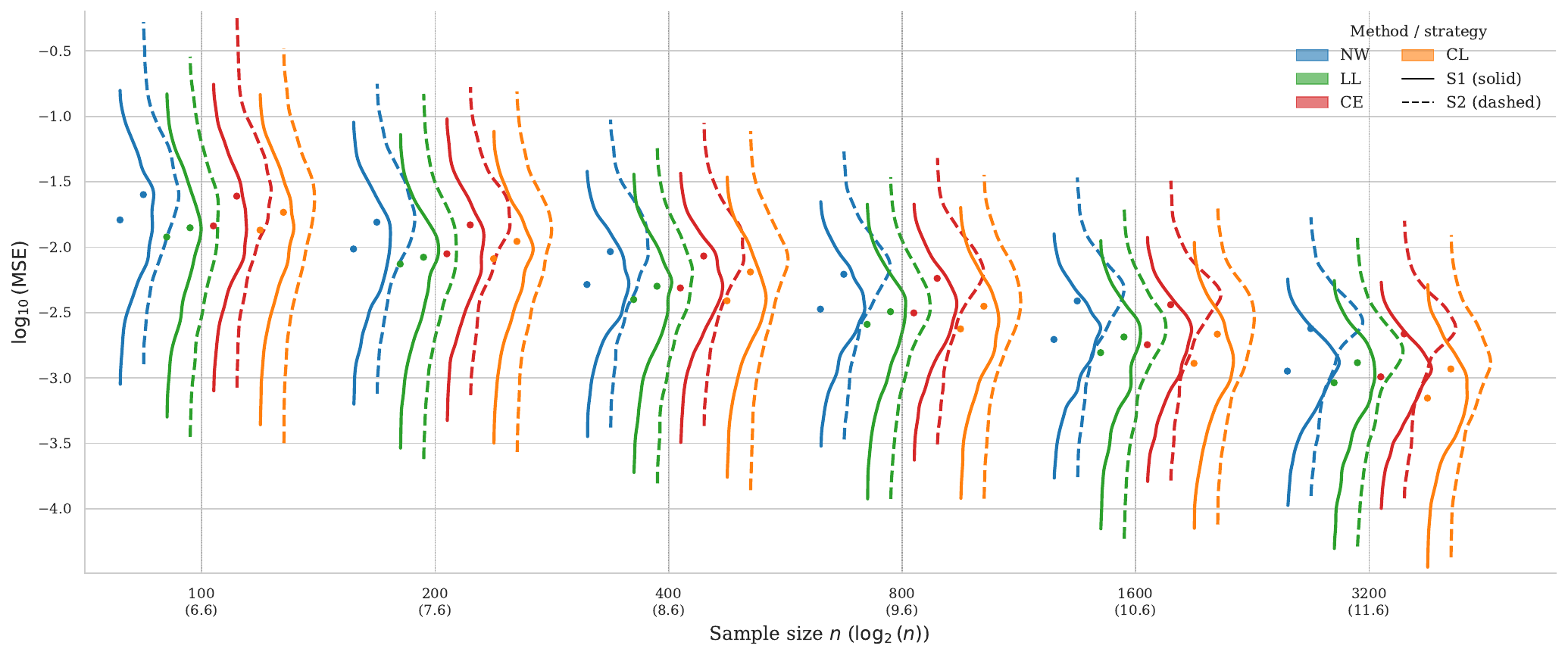}
    \caption{The MSEs of four methods (NW, LL, CE and CL) against the sample size $n$. Both the
        $x$-axis and $y$-axis are plotted on logarithmic scales to accommodate the wide range of the data
        and improve visual clarity. ``S1'' and ``S2'' represent the uniform sampling strategy and bound-
        ary strategy, respectively. For each sample size $n$, the speciﬁc method and sampling strategy,
        the kernel density of the log-transformed MSEs is estimated from pooled MSE values across 9
        correlation functions, 6 kernel functions, and 200 replications. The dot on the left of each kernel density
    function represents the corresponding median value of the log-transformed MSE.}\label{fig:figs/sample_size_vs_mse.pdf}
\end{figure}

Furthermore, we generate the heatmap of the median logarithmic MSEs across different correlation functions and kernel functions for each combination of estimation method, sampling strategy, and sample size in~\Cref{fig:figs/panel_mse_heatmap.pdf} of the Appendix. This heatmap provides a comprehensive overview of the performance of the four methods under various experimental settings, allowing for a more detailed comparison beyond the sample size analysis presented earlier.

\paragraph{The boundary correction}\label{para:the_boundary_correction}
To specifically evaluate the performance of the four estimators in the boundary regions, we calculate the MSEs based on the test points sampled from the boundary regions \((-1,-0.9]\cup[0.9,1)\) under the boundary sampling strategy. The boundary MSEs for each estimator in the boundary regions are defined as follows:
\begin{equation*}
    \mathrm{MSE}_{\mathrm{B}} = \frac{1}{m_{\mathrm{B}}}\sum_{j\in \mathcal{J}}(\hat{\rho}(u_{j}^{\mathrm{test}})-\rho(u_{j}^{\mathrm{test}}))^{2},
\end{equation*}
where \( \mathcal{J}\coloneqq\{ j \vert u_{j}^{\mathrm{test}}\in(-1,-0.9]\cup[0.9,1) \} \) and \( m_{B} \) is the cardinality of set \( \mathcal{J} \). The boundary MSEs of the four methods against the sample size \( n \) are shown in~\Cref{fig:figs/sample_size_vs_mse_boundary.pdf}. Similar to~\Cref{fig:figs/sample_size_vs_mse.pdf}, for each sample size \( n \), the violin boxplot of boundary MSEs is based on the pooled boundary \( \mathrm{MSE}_{\mathrm{B}} \) values across 9 correlation functions, 6 kernel functions, and 200 replications. From~\Cref{fig:figs/sample_size_vs_mse_boundary.pdf}, we can observe that the boundary MSEs of LL and CL are consistently lower than those of NW and CE across all sample sizes. This indicates that the boundary correction implemented in LL and CL effectively reduces the estimation error in the boundary regions, confirming the theoretical results established in~\Cref{sec:Methodology}. Among the four methods, LL exhibits the best performance in the boundary regions, suggesting that direct local linear estimation in the Fisher z-space is the most effective approach. However, we can also observe from the figure that as \( n \) increases, the median \( \mathrm{MSE}_{\mathrm{B}} \) of CL gradually approaches that of LL, and by \( n=3200 \), it surpasses LL's median \( \mathrm{MSE}_{\mathrm{B}} \). This indicates that with larger sample sizes, the CL method can achieve performance comparable to that of LL.\ It is noteworthy that CL outperforms CE in the boundary regions, demonstrating that local linear adjustment for \( B_n \) is effective. On the other hand, CL only corrects \( B_n \) without applying boundary correction to \( A_n \), which may explain why CL performs slightly worse than LL in smaller sample sizes.

\begin{figure}[ht]
    \centering
    \includegraphics[width=0.95\textwidth]{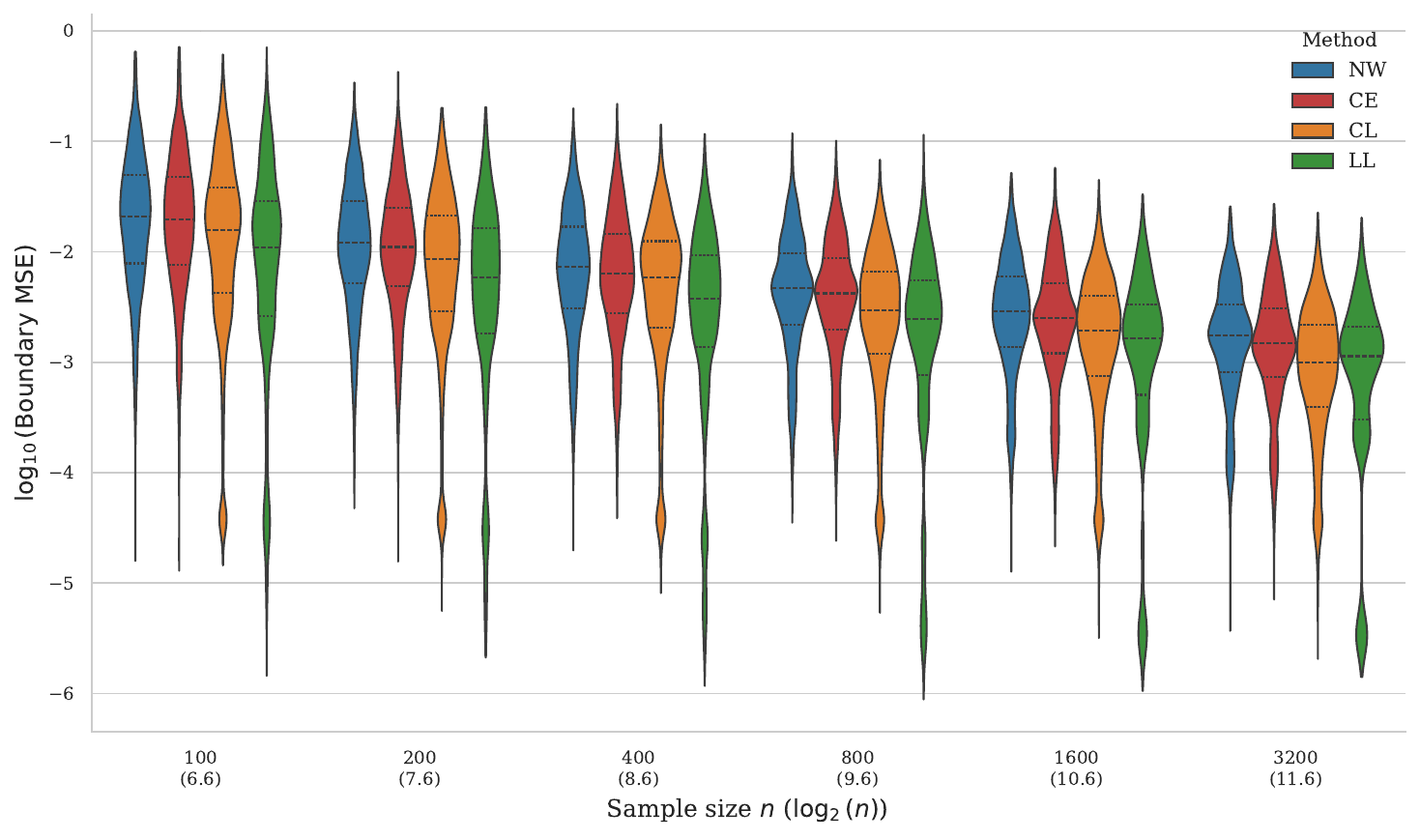}
    \caption{The boundary MSEs of four methods (NW, LL, CE and CL) against the sample size $n$. Both the
        $x$-axis and $y$-axis are plotted on logarithmic scales to accommodate the wide range of the data
        and improve visual clarity. For each sample size $n$, the speciﬁc method and sampling strategy,
        the violin boxplot is based on the pooled $\mathrm{MSE}_{\mathrm{B}}$ values across 9
    correlation functions, 6 kernel functions, 200 replications.}\label{fig:figs/sample_size_vs_mse_boundary.pdf}
\end{figure}

\paragraph{The time consumption}\label{para:the_time_consumption}
~\Cref{fig:figs/panel_time_heatmap.pdf} presents heatmaps of the time consumption across different methods, sample sizes, sampling strategies, correlation functions, and kernel functions. The horizontal axis consists of the 54 combinations of correlation functions and kernel functions, while the vertical axis represents the 48 combinations of estimation methods, sample sizes, and sampling strategies. Each cell reports the median time consumption over the 200 simulation replications for a given combination of correlation function, kernel function, method, sample size, and sampling strategy. Darker shading indicates a larger median time consumption.
Among the four methods, LL consistently incurs the highest computational cost. For all methods, time consumption increases with sample size, reflecting the expected decline in computational efficiency as the sample size increases. In contrast, the choice of correlation function and kernel function has relatively little effect on the time consumption.

\begin{figure}[ht]
    \centering
    \includegraphics[width=0.95\textwidth]{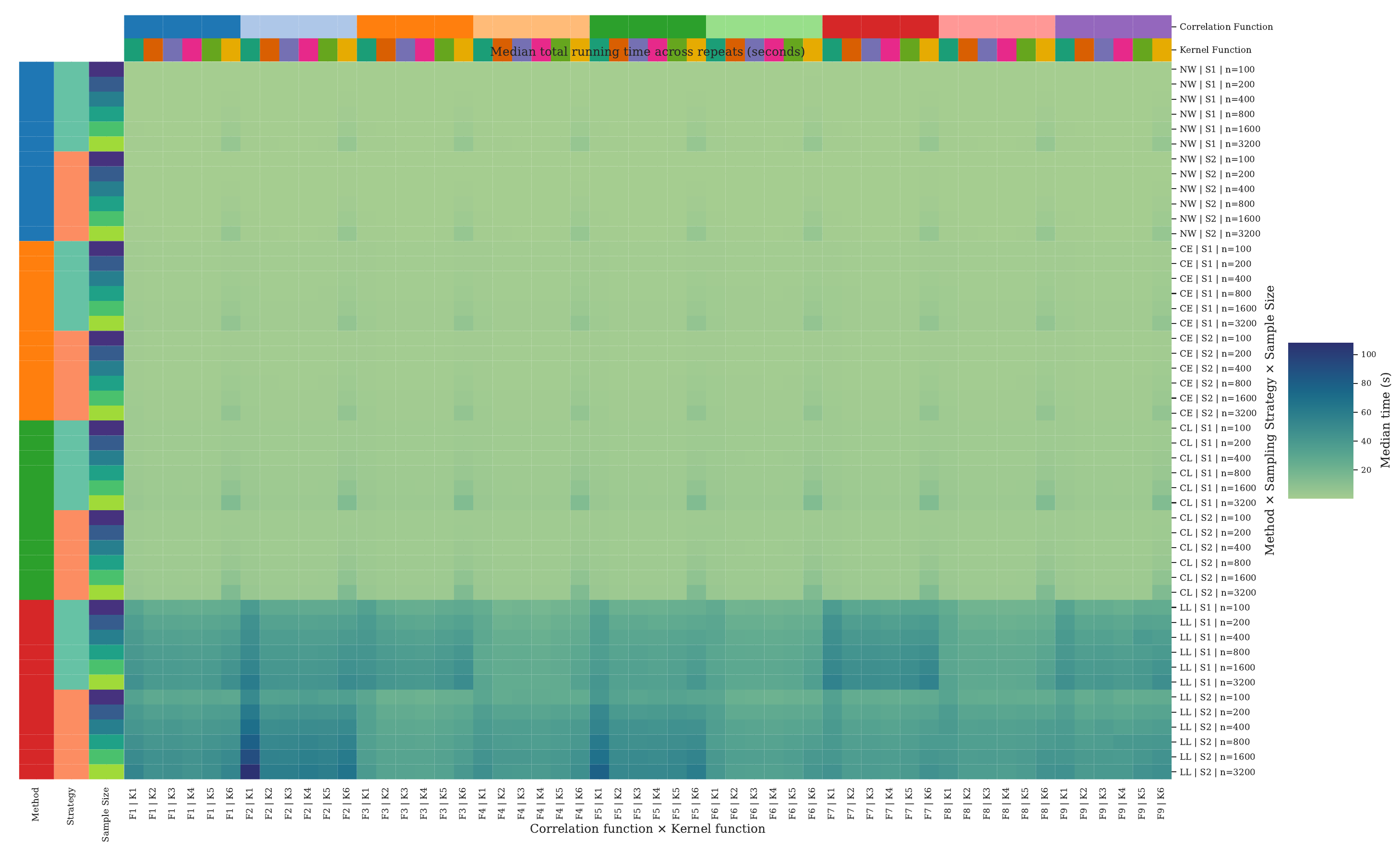}
    \caption{Heatmap of the median time consumption (in seconds) for each combination of estimation method, sample size, sampling strategy, correlation function, and kernel function, with darker shading indicating larger time consumption.}\label{fig:figs/panel_time_heatmap.pdf}
\end{figure}

To further examine how time consumption varies with sample size, we take the median time consumption across the 9 correlation functions, 6 kernel functions, and 200 simulation replications, and plot time consumption against sample size for the eight combinations of method and sampling strategy. The resulting plots are shown in \Cref{fig:figs/panel_time_vs_sample_size.pdf}. The solid line with square markers represents the median time consumption under sampling strategy S1 as a function of sample size, whereas the dashed line with star markers represents the corresponding median under sampling strategy S2. The 95\% empirical confidence intervals are also shown using different colours, with their boundaries indicated by solid and dashed lines for S1 and S2, respectively.
As expected, the time consumption increases with sample size for all methods and under both sampling strategies. Moreover, S2 consistently requires more time consumption than S1 across all sample sizes and methods. This difference is attributable to the greater number of test points located near the boundaries under S2, which increases the computational burden associated with estimation in the boundary regions.

To provide a more direct illustration of the trade-off between estimation accuracy and computational efficiency, we plot the accuracy-speed trade-off for the four methods, as shown in \Cref{fig:figs/panel_accuracy_speed_tradeoff.pdf}. Each point represents the median time consumption plotted against the median MSE.\ From this perspective, a particularly noteworthy result occurs when n=3200: CL achieves a lower MSE than LL while requiring substantially less time consumption. Overall, LL provides the highest estimation accuracy but at the greatest computational cost. In contrast, CL offers a more favourable balance between accuracy and computational efficiency and, in some settings, even outperforms LL in both respects. NW and CE are computationally more efficient, but this advantage comes at the cost of lower estimation accuracy.
\begin{figure}[ht]
    \centering
    \begin{subfigure}[bt]{0.48\textwidth}
        \centering
        \includegraphics[width=\textwidth]{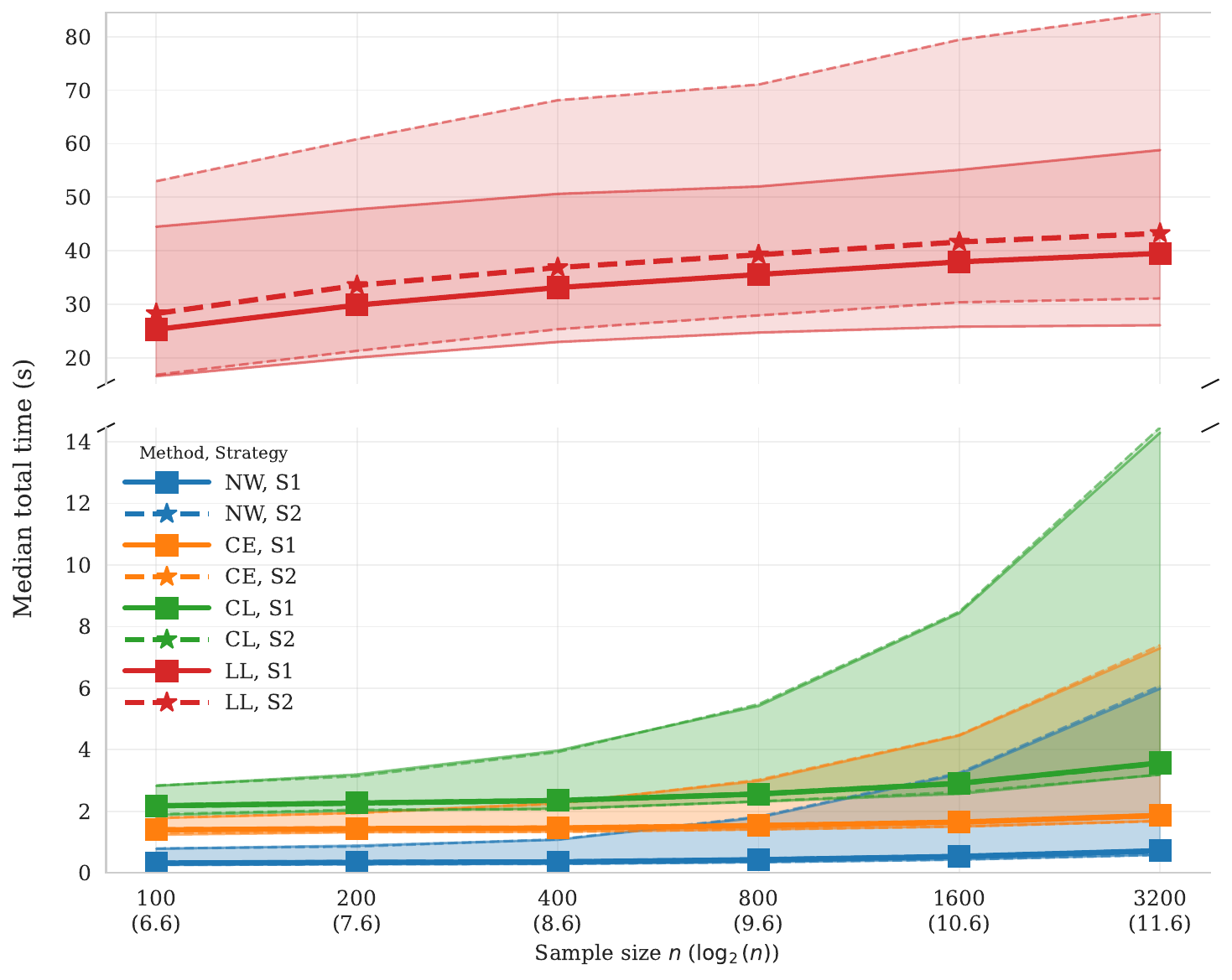}
        \caption{Median time consumption (in seconds) against sample size}\label{fig:figs/panel_time_vs_sample_size.pdf}
    \end{subfigure}
    \begin{subfigure}[bt]{0.48\textwidth}
        \centering
        \includegraphics[width=\textwidth]{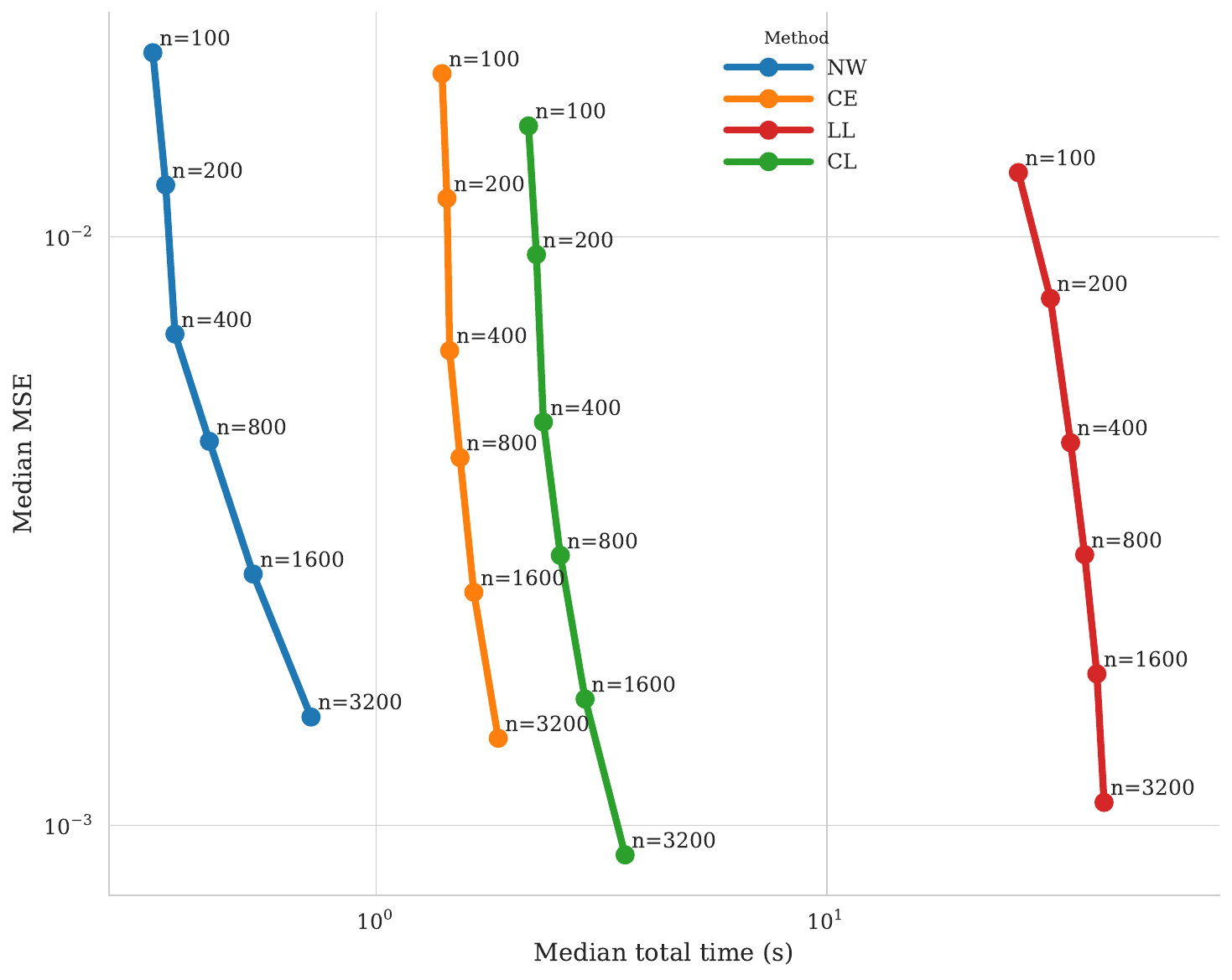}
        \caption{The trade-off between estimation accuracy and computational efficiency for the four methods.}\label{fig:figs/panel_accuracy_speed_tradeoff.pdf}
    \end{subfigure}
    \caption{Computational cost of the four estimation methods and their trade-off with estimation accuracy. (a) Median time consumption as a function of sample size for the four methods under the sampling strategies S1 and S2, showing that time consumption increases with sample size and is larger under S2; (b) the accuracy-speed trade-off, where each point plots the median time consumption against the median MSE, illustrating that LL achieves the highest accuracy at the greatest computational cost while CL offers the most favourable balance.}\label{fig:the_time_consump}
\end{figure}

\paragraph{How to choose the method?}\label{para:how_to_choose_method}
In practice, estimation accuracy alone is insufficient; the associated computational burden imposes a critical constraint, particularly in online or high-throughput settings. For instance, in typical MEG source reconstruction, the sampling rate is often 1000 Hz, yielding 1000 time points per second, amounting to 600,000 time points over a 10-minute recording. Even when downsampled to 100 Hz, this still results in 60,000 observation points, and the combinatorial pairing across cortical regions further inflates the total computational load, making algorithmic efficiency a primary practical concern.

To balance these two competing dimensions: statistical precision and computational cost, we introduce two complementary summary metrics. Let each experimental configuration be indexed by \(c=(n,F,K,S)\in\mathcal{C}\), where \(|\mathcal{C}|=6\times 9\times 6\times 2=648\), and let \(m\in\{\text{NW},\text{LL},\text{CE},\text{CL}\}\) index the candidate methods. For each configuration \(c\), method \(m\), and replication \(r=1,\dots,200\), we compute the mean squared error (MSE) based on the estimates at the test location \(u_{j}^{\mathrm{test}}, j=1,2,\ldots,200\). To obtain a robust summary per scenario, we first take the median across replications:
\begin{equation*}
    \widehat{\mathrm{MSE}}_{c,m}=\operatorname{median}_{r=1,\dots,200}\bigl(\mathrm{MSE}_{c,m,r}\bigr).
\end{equation*}
Similarly, the observed runtime is summarised by the median time required to estimate the 200 test points \(u_j^{\mathrm{test}}, j=1,2,\ldots,200\) under configuration \(c\) and method \(m\), denoted by \(\hat{t}_{c,m}\).

For each configuration \(c\), we rank the four methods separately according to \(\widehat{\mathrm{MSE}}_{c,m}\) and \(\hat{t}_{c,m}\), yielding the ordinal scores \(R^{\mathrm{MSE}}_{c,m}\) and \(R^{\mathrm{time}}_{c,m}\), where a lower rank indicates superior performance. Averaging these ranks across all configurations gives the marginal mean ranks:
\[
    \bar{R}^{\mathrm{MSE}}_m=\frac{1}{|\mathcal{C}|}\sum_{c\in\mathcal{C}} R^{\mathrm{MSE}}_{c,m},\qquad
    \bar{R}^{\mathrm{time}}_m=\frac{1}{|\mathcal{C}|}\sum_{c\in\mathcal{C}} R^{\mathrm{time}}_{c,m}.
\]
Our first composite index is then defined as the average of the two marginal ranks:
\[
    \bar{R}^{\text{dual}}_m=\frac{1}{|\mathcal{C}|}\sum_{c\in\mathcal{C}}
    \left(\frac{R^{\mathrm{MSE}}_{c,m}+R^{\mathrm{time}}_{c,m}}{2}\right),
\]
which assigns equal implicit weight to accuracy and speed; smaller values indicate a better trade-off.

Recognising that the relative importance of MSE versus runtime is inherently application-dependent, we further devise a flexible weighted criterion. For each configuration \(c\), we apply min-max normalisation separately to \(\log_{10}(\widehat{\mathrm{MSE}}_{c,m})\) and \(\log_{10}(\hat{t}_{c,m})\) across the four methods, so that both transformed quantities lie in \([0,1]\) with 0 corresponding to the best performance within that scenario. Denote these normalised scores by \(\log_{10}\mathrm{MSE}^{\text{norm}}_{c,m}\) and \(\log_{10}t^{\text{norm}}_{c,m}\). For a user-specified weight \(w\in[0,1]\) that controls the emphasis on estimation accuracy, we define the effective score
\[
    \text{eff}_m(w)=\frac{1}{|\mathcal{C}|}\sum_{c\in\mathcal{C}}
    \Bigl(w\cdot \log_{10}\mathrm{MSE}^{\text{norm}}_{c,m}
    +(1-w)\cdot \log_{10}t^{\text{norm}}_{c,m}\Bigr).
\]
We then evaluate this criterion over a range of weights \(w\in\{ 0.1,0.2,0.3,0.4,0.5,0.6,0.7,0.8,0.9 \}\), and for each \(w\), we rank the four methods according to \(\text{eff}_m(w)\) to obtain \(R^{\mathrm{eff}}_{m}(w)\). The resulting rankings in~\Cref{tab:method_ranking} provide a comprehensive picture of how the methods compare under varying preferences for accuracy versus computational efficiency. We can observe that CL exhibits the most robust overall performance: it attains the best effective rank \(R^{\mathrm{eff}}_{m}(w)\) for all weights \(w=0.4\) to \(0.8\) and ranks second at \(w=0.3\) and \(w=0.9\), and it also achieves the smallest dual rank \(\bar{R}^{\mathrm{dual}}_{m}=2.398\), indicating a strong balance between estimation accuracy and computational cost. LL delivers the highest estimation accuracy, as reflected by the smallest mean MSE rank \(\bar{R}^{\mathrm{MSE}}_{m}=1.455\), but its dual rank \(\bar{R}^{\mathrm{dual}}_{m}=2.728\) is the least favourable among the four methods owing to its substantially higher computational cost; consequently, LL attains the top effective rank only when accuracy is assigned the largest weight (\(w=0.9\)). NW is computationally the most efficient (\(w=0.1,0.2,0.3\)), yielding the second-best dual rank \(\bar{R}^{\mathrm{dual}}_{m}=2.400\), yet it suffers from the worst estimation accuracy (\(\bar{R}^{\mathrm{MSE}}_{m}=3.799\)), and its effective rank deteriorates as \(w\) increases, so that it is only attractive when computational speed is prioritised over precision. CE performs moderately on both dimensions, occupying an intermediate position in the effective ranking across all weights.

\begin{table}[ht]
    \centering
    \begin{tabular}{lccc*{9}{c}}
        \toprule
        \multirow{2}{*}{Method}& \multirow{2}{*}{\(\bar{R}^{\mathrm{MSE}}_{m}\)}& \multirow{2}{*}{\(\bar{R}^{\mathrm{dual}}_{m}\)}& \multicolumn{9}{c}{\(R^{\mathrm{eff}}_{m}\) with different weight} \\
        \cmidrule(lr){4-12}
                               &                                                &                                                 &0.1 & 0.2& 0.3& 0.4& 0.5& 0.6& 0.7& 0.8& 0.9 \\
        \midrule
        CL                     & 1.796                                          & 2.398                                           &3   &3   &2   &1   & 1  & 1  & 1  & 1  & 2 \\
        LL                     & 1.455                                          & 2.728                                           &4   &4   &4   &4   & 4  & 2  & 2  & 2  & 1 \\
        CE                     & 2.949                                          & 2.475                                           &2   &2   &3   &3   & 3  & 4  & 3  & 3  & 3 \\
        NW                     & 3.799                                          & 2.400                                           &1   &1   &1   &2   & 2  & 3  & 4  & 4  & 4 \\
        \bottomrule
    \end{tabular}
    \caption{The ranking of the four methods in terms of MSE, dual criterion, and effective score rank under different weights \(w\).}\label{tab:method_ranking}
\end{table}

\paragraph{Wilcoxon Signed-rank Analysis: CL vs LL}\label{para:wilcoxon_signedrank_analysis_cl_vs_ll}
In this section, we perform one-sided paired Wilcoxon signed-rank tests to compare the performance of the CL and LL methods across all simulation cells. The tests are conducted by pairing Monte Carlo replicates via the repetition index, with the alternative hypothesis that CL has smaller mean squared error (MSE) than LL. To be concrete, let \( \mathrm{MSE}^{\mathrm{CL}}_{r,c} \) and \( \mathrm{MSE}^{\mathrm{LL}}_{r,c} \) denote the MSEs of the CL and LL methods, respectively, for the \(r\)-th repetition in the \(c\)-th simulation cell. For each cell \( c \in \mathcal{C}\), we define the  paired differences \( d_{r,c}\coloneqq  \mathrm{MSE}^{\mathrm{CL}}_{r,c} - \mathrm{MSE}^{\mathrm{LL}}_{r,c} \). The one-sided paired Wilcoxon signed-rank test is then applied to test whether CL is better than LL in terms of MSE:
\begin{equation*}
    \mathbb{H}_{0}: ~\text{median}(d_{r,c}) \geq 0 \quad \text{vs} \quad \mathbb{H}_{1}: \text{median}(d_{r,c}) < 0.
\end{equation*}
For each simulation cell \( c \), we report the one-sided \( p\)-value (\( p_{c} \)), and mean difference \( \bar{d}_c \coloneqq \frac{1}{200} \sum_{r=1}^{200} d_{r,c} \). A cell is marked as ``CL better'' if \( p_{c} < 0.05 \) and \( \bar{d}_c < 0 \). To summarise the overall performance across all simulation cells, we unify significance and direction by the transformed score \( s_c \coloneqq -\log_{10}(p_c) \cdot \text{sign}(\bar{d}_c) \). The resulting scores can then be visualised using heatmaps in~\Cref{fig:figs/wilcoxon_heatmap_s1_s2.pdf}. Furthermore, we compute the significance win rate, defined as the proportion of cells where CL is significantly better than LL, and demonstrate it in~\Cref{fig:figs/wilcoxon_significance_rate.pdf}.
\begin{figure}[!ht]
    \centering
    \begin{subfigure}[bt]{0.9\textwidth}
        \centering
        \includegraphics[width=\textwidth]{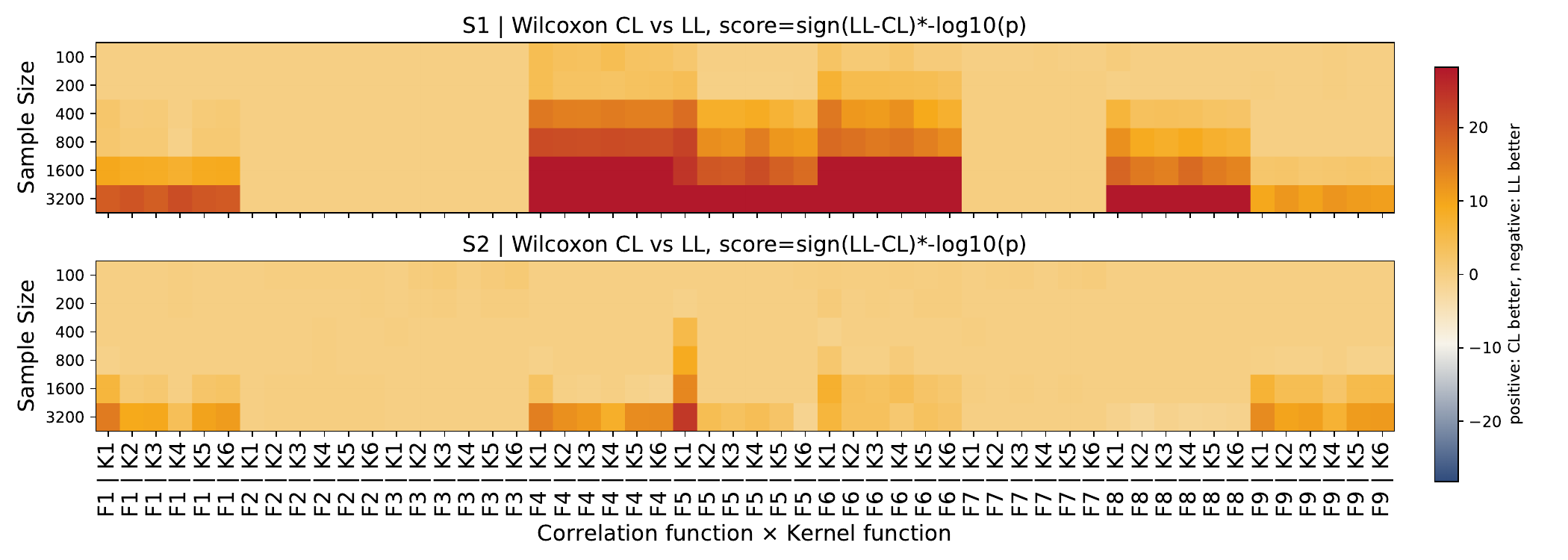}
        \caption{The heatmap of Wilcoxon signed-rank test transformed scores for CL vs LL across simulation cells.}\label{fig:figs/wilcoxon_heatmap_s1_s2.pdf}
    \end{subfigure}
    \begin{subfigure}[bt]{0.8\textwidth}
        \centering
        \includegraphics[width=\textwidth]{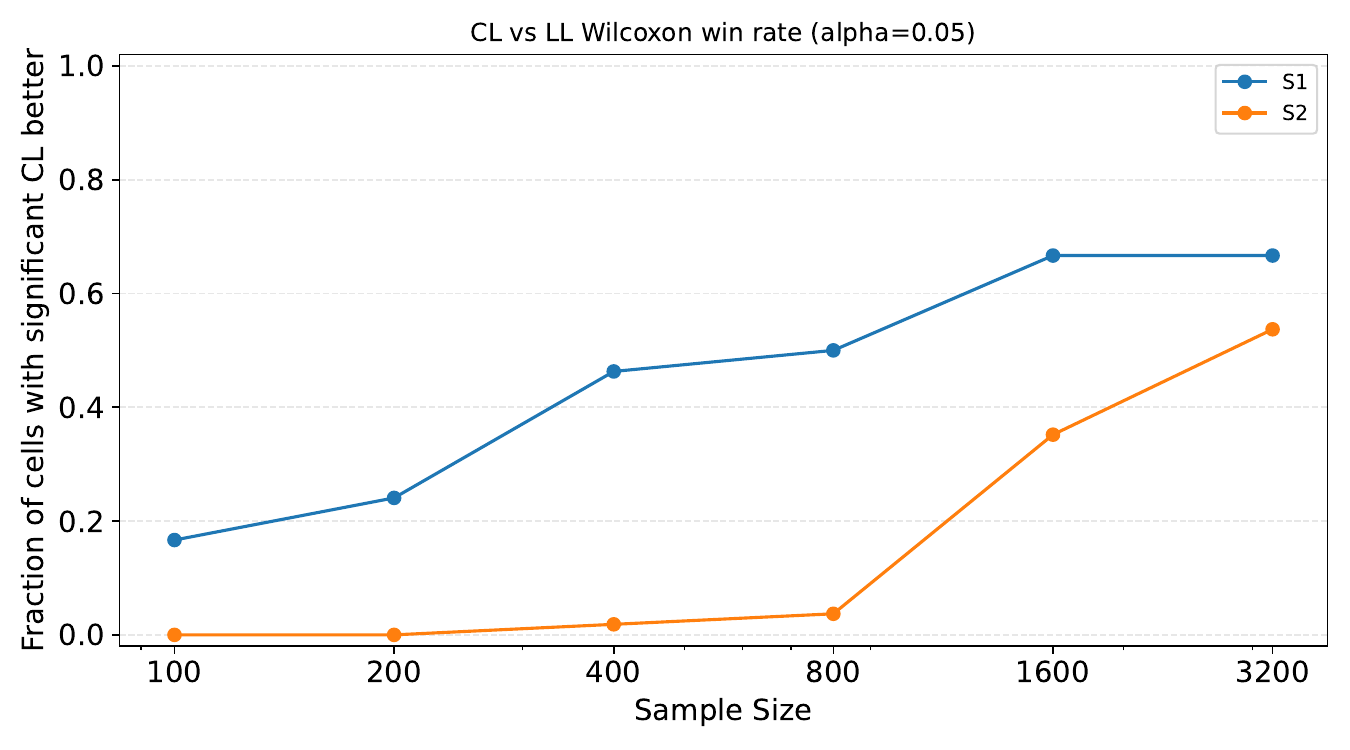}
        \caption{The significance win rate of CL over LL across simulation cells.}\label{fig:figs/wilcoxon_significance_rate.pdf}
    \end{subfigure}
    \caption{Wilcoxon signed-rank test results for CL vs LL across simulation cells.}\label{fig:wilcoxon_results}
\end{figure}

Across 648 simulation cells, 199 show nominal significance (\( p_c <0.05\)), of which 197 favoured CL over LL and 2 favoured LL over CL. The two cases favouring LL over CL are $(F8,K2,S2,n=3200), p=0.0104,\bar{d}=6.515\times 10^{-6}$and $(F8,K4,S2,n=3200), p=0.0296,\bar{d}=7.901\times 10^{-6}$. Furthermore, we apply BH-FDR at \( q=0.05 \) over 648 cells and yield 191 discoveries, of which 190 favoured CL over LL and 1 favoured LL over CL.~Results are robust to BH-FDR correction. The proportion of CL-favouring significant cells increases with sample size (from 8.33\% at $n=100$ to 60.19\% at $n=3200$), and is markedly higher under S1 than S2 (45.06\% vs 15.74\%).

\section{Real Data Analysis}\label{sec:real_data_analysis}
Magnetoencephalography (MEG) provides a non-invasive measure of the magnetic fields generated by neuronal currents in the cerebral cortex, with millisecond-scale temporal resolution, making it particularly well suited for investigating the dynamic organisation of functional brain network~\citep{schoffelenSourceConnectivityAnalysis2009,bastosTutorialReviewFunctional2016}. In resting-state MEG studies, power envelope correlation (PEC) has become a widely used approach for quantifying the synchrony of low-frequency (<1 Hz) fluctuations in the amplitudes of neural oscillations across cortical regions~\citep{liuLargescaleSpontaneousFluctuations2010,brookesInvestigatingElectrophysiologicalBasis2011,hippLargescaleCorticalCorrelation2012,oneillMeasuringElectrophysiologicalConnectivity2015}. A key finding from this line of research is that the power envelopes of band-limited neural oscillations exhibit significant and reproducible correlations between distinct brain regions. These inter-regional correlations are thought to reflect the spatial organisation of resting-state functional networks, including the default mode and visual networks~\citep{liuLargescaleSpontaneousFluctuations2010,brookesInvestigatingElectrophysiologicalBasis2011,hippLargescaleCorticalCorrelation2012}.

Traditional PEC estimates functional connectivity by computing the Pearson correlation coefficient between power-envelope time series over the entire recording period, thereby yielding a static, time-invariant measure of connectivity:

\begin{equation*}
    \rho_{rq}=\mathrm{corr}\big(e_r(t),,e_q(t)\big)
    =\frac{\sum_t\big(e_r(t)-\bar e_r\big)\big(e_q(t)-\bar e_q\big)}
    {\sqrt{\sum_t\big(e_r(t)-\bar e_r\big)^2+\sum_t\big(e_q(t)-\bar e_q\big)^2}},
\end{equation*}
where \(e_r(t)\) denotes the power envelope of the \(r\)-th ROI at time \(t\). However, functional connectivity in the resting brain is inherently dynamic, with the strength of inter-regional coupling varying over time~\citep{hippLargescaleCorticalCorrelation2012,bastosTutorialReviewFunctional2016}. The nonparametric correlation estimator CL proposed in this study extends the conventional static correlation framework by incorporating time \(t\) into the observation model, thereby representing the correlation as a smooth function \(\rho(t)\). This formulation allows the strength of functional connectivity to vary continuously over time, without imposing a specific parametric form on its temporal evolution. In the MEG application, \(\hat{\rho}(t)\) provides a nonparametric estimate of time-varying functional connectivity between pairs of ROIs.

In this case study, we use MEG data from the Cam-CAN dataset~\citep{cam-canCambridgeCentreAgeing2014} to demonstrate the practical utility of the proposed nonparametric correlation estimation method. The Cam-CAN dataset contains resting-state MEG recordings from healthy adults across a broad range of ages. The recordings were acquired at a sampling frequency of 1000 Hz and lasted 8 min 40 s, providing a rich temporal profile of ongoing neural activity.
To provide a clear illustration of the proposed method, we focused on a single participant (sub-CC110033).

To keep the computational burden manageable, we restricted the analysis to a moderate number of brain regions. Instead of the commonly used 68-region Desikan–Killiany (DK) atlas~\citep{desikanAutomatedLabelingSystem2006}, we adopted the 200-region Schaefer Local-Global Parcellation rather than its 1000-region counterpart~\citep{schoffelenSourceConnectivityAnalysis2009}. Because functional network organisation can vary substantially across frequency bands~\citep{kajimuraFrequencyspecificBrainNetwork2023,guglielmiFrequencyDependentFunctionalConnectivity2022,zinkRestingstateEEGDynamics2021,contiFrequencyNetworkSpecificChanges2026}, we followed the conventional classification of neural oscillations and extracted power envelopes separately for six frequency bands: delta (1-4 Hz), theta (4-8 Hz), alpha (8-12 Hz), beta (12-40 Hz), slow gamma (40-80 Hz), and fast gamma (80-120 Hz)~\citep{colginFrequencyGammaOscillations2009,buzsakiRhythmsBrain2011,zhengSpatialSequenceCoding2016}. The extracted power envelopes were subsequently log-transformed and used as input to the proposed nonparametric correlation estimation method.

We first describe the procedure for extracting power envelopes from the raw MEG recordings and then demonstrate how the proposed method can be applied to these real-world data.

\subsection{Power Envelope Extraction}\label{subsec:power_envelope_extraction}

We first prepare the raw data required for the analysis. Specifically, we utilize the resting-state MEG recording, which has been preprocessed using MaxFilter and the Signal Space Separation (SSS) method, alongside the corresponding empty-room recording for environmental noise estimation. The spatial alignment between the MEG sensor space and the structural MRI space is achieved using a pre-calculated MEG–MRI coregistration transformation matrix. Because the experimental recordings have already undergone the SSS procedure~\citep{tauluApplicationsSignalSpace2005,tauluSpatiotemporalSignalSpace2006}, Maxwell filtering is not repeated in the subsequent analysis pipeline.

\paragraph{Data preprocessing}\label{para:data_preprocessing}
The preprocessing pipeline consists of the following steps. (1) Channel selection and bad-channel handling. For both the experimental and empty-room recordings, we retained only MEG channels and excluded channels marked as ``bads''. We then restricted both datasets to their common set of channels to ensure consistency in the subsequent noise covariance estimation. The empty-room data were reordered to match the channel ordering of the experimental recording. (2) Band-pass filtering. To remove very slow drifts while retaining the lowest frequency range of interest (1–4 Hz), both recordings were high-pass filtered at 0.5 Hz using a zero-phase FIR filter designed with `firwin`~\citep{widmannDigitalFilterDesign2015,gramfortMEGEEGData2013}. This cutoff preserves the lowest frequencies of the delta band without introducing substantial attenuation. (3) Power line noise removal. Notch filters were applied at 50, 100, and 150 Hz to suppress power-line interference and its harmonics.

\paragraph{Source Reconstruction and Atlas Mapping  }\label{para:Source-Reconstruction-and-Atlas-Mapping  }
It consists of four main steps:
\begin{enumerate}
    \item \textbf{Anatomical processing and source space.} The T1-weighted structural MRI was processed with FreeSurfer~\citep{fischlFreeSurfer2012} to reconstruct the cortical surfaces. An ``oct6'' source-space discretisation was then used to generate a cortical source grid containing approximately 8,196 vertices, with 4,098 vertices per hemisphere.
    \item \textbf{Boundary element model and forward solution.} The geometry of the volume conductor was encapsulated by a single-layer Boundary Element Method (BEM) model. The inner skull surface boundary was automatically segmented from structural T1-weighted MRI data utilising the watershed algorithm~\citep{segonneHybridApproachSkull2004} via FreeSurfer. Inside this bounding surface, the brain tissue compartment was treated as a homogeneous and isotropic volume conductor with a standard consensus conductivity assigned at 0.3 S/m~\citep{geddesSpecificResistanceBiological1967,vorwerkGuidelineHeadVolume2014}. To establish the forward solution, the MEG sensor locations were aligned with the structural MRI coordinate space via an MEG-MRI coregistration transform based on anatomical fiducials and digitised headshape points. Subsequently, the source-to-sensor gain matrix (lead-field matrix) was computed using the single-layer BEM framework~\citep{hamalainenRealisticConductivityGeometry1989}. To prevent mathematical singularity and numerical instability in closer sensors, a minimum source-to-sensor distance boundary of 5 mm was enforced during the forward field calculation~\citep{huangVoxelwiseRestingstateMEG2014}. Both the forward solution and the source space were cached to avoid unnecessary recomputation.
    \item \textbf{Noise covariance and inverse operator.} The noise covariance matrix, \(\hat{\Sigma}_n\), was estimated from the empty-room recording. A minimum-norm inverse operator was then constructed using a loose orientation constraint of 0.2, depth weighting of 0.8, and an assumed signal-to-noise ratio (SNR) of 3~\citep{linSpectralSpatiotemporalImaging2004,linAssessingImprovingSpatial2006}. The inverse solution was computed using dynamic statistical parametric mapping (dSPM)~\citep{hamalainenInterpretingMagneticFields1994,daleDynamicStatisticalParametric2000,marinkovicSpatiotemporalDynamicsModalityspecific2003}. The regularisation parameter was determined from the assumed SNR as \( \lambda^2=\frac{1}{\mathrm{SNR}^2}=\frac{1}{9}\). After whitening the sensor-space data, dSPM obtains cortical current estimates through the minimum-norm solution and subsequently normalises them by their estimated noise sensitivity. The resulting estimates therefore have an approximately unit-variance \(t\)-statistic interpretation~\citep{daleDynamicStatisticalParametric2000,marinkovicSpatiotemporalDynamicsModalityspecific2003}.
    \item \textbf{Broadband Source Reconstruction and ROI Time Series.} The inverse operator was applied to the broadband MEG data following 0.5 Hz high-pass filtering. To constrain the inverse solution to the physiologically relevant anatomy, a cortical surface normal constraint was employed to project the estimated current vectors exclusively onto the local surface normals, reflecting the anatomical alignment of macroscopically oriented cortical pyramidal cells~\citep{daleImprovedLocalizadonCortical1993,linAssessingImprovingSpatial2006}. This produced continuous cortical source estimates \(S(\mathbf{r},t)\). The source estimates were then parcellated according to the Schaefer-200 cortical atlas~\citep{schaeferLocalGlobalParcellationHuman2018}. To extract a single representative time course for each region of interest (ROI), we applied an orientation-aligned sign-flipping averaging procedure~\citep{woolrichDynamicStateAllocation2013}.
        Specifically, the ROI-level time series was defined as \( x_r(t)=\frac{1}{n_r}\sum_{j\in\mathrm{ROI}_r}\sigma_j\hat{S}(\mathbf{r}_j,t), \) where \(n_r\) denotes the number of vertices in ROI \(r\), and \(\sigma_j\in\{-1,+1\}\) is a sign-flip factor determined by the orientation of the surface normal at vertex \(j\). This procedure addresses the complex folding geometry of the cortex; it mathematically flips the sign of time series from vertices whose local surface normals oppose the dominant orientation of the ROI, thereby preventing phase cancellation and ensuring an accurate, robust representation of regional population activity~\citep{woolrichDynamicStateAllocation2013,ahlforsSensitivityMEGEEG2010}.
\end{enumerate}
\paragraph{Source Leakage Correction}\label{para:source_leakage_correction}
Because the forward-model point-spread functions of neighbouring sources overlap on the cortex, the time course reconstructed for a given parcel contains contributions from the surrounding tissue, an effect known as spatial leakage. Leakage inflates and distorts inter-regional correlation estimates, and a standard remedy in MEG connectomics is to whiten the parcel-level signals by symmetric orthogonalization~\citep{colcloughSymmetricMultivariateLeakage2015}. Let $\mathbf{X}$ denote the $R\times T$ matrix of band-limited ROI time courses, with the $R=200$ rows corresponding to the Schaefer parcels and each row centred over time. The symmetric orthogonalisation applies the inverse matrix square root of the row covariance,
\begin{equation}\label{eq:lowdin_full}
    \mathbf{X}_{\mathrm{orth}}=\bigl(\mathbf{X}\mathbf{X}^{\top}\bigr)^{-\frac{1}{2}}\mathbf{X},
\end{equation}
which, for full-rank $\mathbf{X}$, decorrelates the ROI signals while rotating them as little as possible, in a least-squares sense, from the original time courses.

The full-rank construction does not apply to the present data. The 200 parcels are reconstructed from MEG measurements whose effective rank, determined by the empty-room noise covariance used for whitening, is only about 73, and the row-standardised band-limited ROI matrices show a pronounced singular-value gap: in the delta band, for instance, the relative spectrum falls from $3.7\times10^{-2}$ to $2.0\times10^{-7}$ between the 70th and 71st singular values. The row covariance $\mathbf{X}\mathbf{X}^{\top}$ is therefore numerically singular, and the inverse square root in~\eqref{eq:lowdin_full} is unstable. We instead used a rank-aware, rank-truncated L\"owdin orthogonalisation. Truncating the eigenvalue spectrum to 73 components redistributes the variance associated with the discarded dimensions among the retained components, thereby inflating the retained eigenvalues and potentially altering the magnitude of the leakage correction~\citep{colcloughSymmetricMultivariateLeakage2015}. Writing the singular value decomposition of the row-centred, row-standardised ROI matrix as $\mathbf{X}=\mathbf{U}\mathbf{S}\mathbf{V}^{\top}$ with $s_{1}\geq s_{2}\geq\cdots\geq s_{R}$, and denoting by $\mathbf{U}_{r}$, $\mathbf{S}_{r}$ and $\mathbf{V}_{r}$ the leading $r$ singular vectors and values, the leakage-corrected signals are
\begin{equation}\label{eq:lowdin_trunc}
    \mathbf{X}_{\mathrm{orth}}=\mathbf{U}_{r}\mathbf{V}_{r}^{\top}
    =\bigl(\mathbf{U}_{r}\mathbf{S}_{r}^{2}\mathbf{U}_{r}^{\top}\bigr)^{-\frac{1}{2}}\mathbf{X},
\end{equation}
the Moore--Penrose analogue of~\eqref{eq:lowdin_full} restricted to the retained subspace. All 200 rows are kept; the truncation confines the corrected signals to an $r$-dimensional subspace in which they are mutually uncorrelated, so that the row Gram matrix $\mathbf{X}_{\mathrm{orth}}\mathbf{X}_{\mathrm{orth}}^{\top}$ is a rank-$r$ projector rather than the identity. The correction is therefore best described as a rank-deficient modification of the standard symmetric orthogonalization, not as an ordinary 200-dimensional orthogonalization. Because the truncation does not preserve row norms, each corrected row was rescaled to restore the root-mean-square amplitude of the original ROI signal, so that the envelopes computed below retain the between-region amplitude structure of the source estimates.

The effective rank was selected separately for each frequency band from the spectrum of the row-standardised band-limited ROI matrix. We computed the adjacent singular-value ratios $s_{i}/s_{i+1}$ and selected the rank $r$ at which $s_{r}/s_{r+1}$ is largest, restricting the search to $2\leq r\leq 200$ and accepting the candidate only when this ratio exceeded 10. When no gap satisfied this criterion, we fell back to the smallest rank whose cumulative explained variance $\sum_{i=1}^{r}s_{i}^{2}/\sum_{i=1}^{R}s_{i}^{2}$ reached 0.999. The full singular spectrum, the selected rank, the explained variance and the gap ratio were saved for every band as diagnostics alongside the envelope data.

Power envelopes were derived from the leakage-corrected, band-limited signals by the Hilbert transform. For each ROI, the analytic signal was formed and the envelope taken as its modulus; with the power option, the envelope is the squared modulus, so that for a band-limited signal $x(t)$ the envelope is $|x(t)+i\,\mathcal{H}[x(t)]|^{2}$, where $\mathcal{H}$ denotes the Hilbert transform~\citep{brunsFourierHilbertWaveletbased2004}. The envelope represents the slowly varying amplitude of the oscillation, and the coupling of these slow amplitude fluctuations between regions is exactly the quantity that power envelope correlation is designed to measure \citep{liuLargescaleSpontaneousFluctuations2010,brookesInvestigatingElectrophysiologicalBasis2011,hippLargescaleCorticalCorrelation2012}. Two processing choices were made before the envelopes entered the connectivity analysis. The envelope was low-pass filtered at 1 Hz with a zero-phase FIR filter (firwin) \citep{widmannDigitalFilterDesign2015}. The 1 Hz cutoff retains the sub-hertz amplitude modulations that carry the resting-state network structure \citep{hippLargescaleCorticalCorrelation2012} and removes the faster fluctuations inherited from the carrier band, which would otherwise inflate the correlations; the zero-phase design preserves the timing of the modulations, on which the correlation estimates directly depend. The filtered envelope was then resampled to 10 Hz. Because the filtered envelope is effectively band-limited to 1 Hz, a 10 Hz output rate leaves ample margin above the Nyquist frequency while shortening each ROI time series from 574,000 to 5,740 samples, a reduction that matters given that correlation is subsequently computed for all 19,900 ROI pairs. These settings, the power option, the 1 Hz cutoff and the 10 Hz output rate, follow the conventions of the power envelope correlation literature \citep{brookesInvestigatingElectrophysiologicalBasis2011,hippLargescaleCorticalCorrelation2012}.

Power envelopes were extracted from dSPM source estimates, and therefore represent squared noise-normalized source amplitudes rather than absolute source power. The envelope matrix returned by this step, one row per ROI sampled at 10 Hz, is the variable that enters the correlation calculation. After a log transform, each pair of rows supplies the two time series from which the time-varying correlation $\hat{\rho}(t)$ is estimated for that ROI pair.

\subsection{Pairwise Correlation Estimation}\label{subsec:pairwise_correlation_estimation}
The time-varying correlation $\hat{\rho}(u_0)$ is defined only after a local
bandwidth $h$ has been fixed, and the temporal scale of the amplitude
modulations that carry the resting-state networks differs between ROI pairs and
frequency bands~\citep{hippLargescaleCorticalCorrelation2012,
brookesInvestigatingElectrophysiologicalBasis2011}. We therefore selected $h$
separately for each pair by leave-local-block-out cross-validation rather than
imposing a single global scale. For a pair of envelope time series
$(x_{i1},x_{i2})$ on a common grid $t_1<\cdots<t_T$, the local statistics at
$u_0$ were formed with a Gaussian kernel $K_h(t-u_0)$ from which a central
exclusion gap of 20 ms had been removed, so that every validation point is
predicted from its temporal neighbourhood rather than from its immediate
vicinity. The choice of 20 ms is consistent with the temporal resolution commonly used in MEG connectivity analyses, where 20 ms sampling intervals or temporal lags have been employed~\citep{depasqualeDynamicCoreNetwork2016,bettiTopologyFunctionalConnectivity2018}. The parameter $A_n(u_0)$, the local weighted mean of $x_1^2+x_2^2$,
was estimated by the Nadaraya--Watson smoother, whereas $B_n(u_0)$, the local
cross-moment of $x_1x_2$, was estimated by the local-linear fit
\begin{equation}\label{eq:cl_B}
    B_n^{\mathrm{CL}}(u_0)
    =\frac{\hat S_2(u_0)\hat T_0(u_0)-\hat S_1(u_0)\hat T_1(u_0)}
    {\hat S_2(u_0)\hat S_0(u_0)-\hat S_1(u_0)^2},
\end{equation}
where $\hat S_j(u_0)=\sum_i K_h(t_i-u_0)(t_i-u_0)^j$ and
$\hat T_j(u_0)=\sum_i K_h(t_i-u_0)(t_i-u_0)^j x_{i1}x_{i2}$. For a symmetric
kernel the first moment $\hat S_1$ vanishes in the interior and~\eqref{eq:cl_B}
coincides with the Nadaraya--Watson estimate; near the edges of the recording
the kernel is truncated, $\hat S_1(u_0)\neq0$, and the local-linear form removes
the resulting first-order boundary bias in the manner of local-polynomial
smoothing~\citep{fanLocalPolynomialModelling1996,
fanDesignadaptiveNonparametricRegression1992,fanEfficientEstimationConditional1998}. The correlation estimate
$\hat{\rho}(u_0)$ was taken as the real root in $(-1,1)$ of the cubic
$\rho^3-B_n^{\mathrm{CL}}(u_0)\rho^2+(A_n(u_0)-1)\rho-B_n^{\mathrm{CL}}(u_0)=0$,
solved by a few Newton iterations. Following established grid-based bandwidth selection strategies~\citep{staudenmayerLocalPolynomialRegression2004,tholkageConditionalKaplanMeier2022}, we employed a coarse-to-fine search scheme as described below. The bandwidth was chosen by minimising the
predictive Gaussian negative log-likelihood evaluated at a sparsely sampled set
of validation times, first over a coarse logarithmic grid of candidate
bandwidths and then by a pair-specific fine search centred on the coarse
winner; the Gaussian kernel was truncated at four standard deviations, and the
candidate bandwidths spanned the physiologically relevant scales of the
sub-hertz envelope fluctuations.

For each frequency band, we estimated the nonparametric correlation for all 19,900 unique ROI pairs and selected a pair-specific bandwidth for each pair using the CL method. The resulting bandwidths were then used to estimate the time-varying correlation between each pair of ROIs. Specifically, for frequency band \(f\), we denote the estimated time-varying correlation between ROIs \(i\) and \(j\) at time \(t\) by  \( \hat{\rho}_{ij}^{f}(t), i=1,\ldots,200, j=i+1,\ldots,200\), where \(f\) indexes the frequency band. To save the memory, we only store the upper triangular part of the correlation matrix. The final output is a correlation coefficient matrix of size \(19,900\times 5740\times 6\), where 19,900 is the number of ROI pairs, 5,740 is the number of time points, and 6 is the number of frequency bands. For each frequency band and each time point, the correlation coefficients can be arranged into a \(200\times 200\) symmetric matrix. However, this symmetric correlation matrix is not guaranteed to be positive definite. We applied the method proposed by~\citet{highamComputingNearestCorrelation2002} to project it onto the nearest positive definite matrix while ensuring that the diagonal elements are 1 and the off-diagonal elements are between -1 and 1. The variance of eigenvalues of the correlation matrix is proportional to the mean linkage index~\citep{durandLinkageIndexVariables2017}. Therefore, we computed the variance of eigenvalues for each frequency band and each time point's correlation matrix and used it as the mean linkage index for that frequency band and time point. The final output is an eigenvalue variance matrix of size \(5740\times 6\). To better visualize the connectivity between the 200 ROIs, we grouped them into 7 networks based on Yeo-7 network parcellation and computed the average linkage index between each network pair. Since correlation coefficients can be both positive and negative, we applied Fisher z-transformation before averaging and then applied the inverse Fisher z-transformation to obtain the final correlation matrix. To clearly show the Yeo-7 networks' connectivity, we truncate the \(7\times 7 \) matrix by its median.  The final output is a mean linkage index matrix of size \(7\times 7\times 5740\times 6\), where 7 is the number of Yeo-7 networks, 5,740 is the number of time points, and 6 is the number of frequency bands. For each frequency band, we selected the instantaneous correlation matrices at time points 100, 1100, 2100, 3100, 4100, and 5100 and marked them on the eigenvalue variance line plot, as shown in~\Cref{fig:figs/eig_var_circle_connect_Delta.pdf,fig:figs/eig_var_circle_connect_Fast_gamma.pdf}.

\begin{figure}[!ht]
    \centering
    \includegraphics[width=\textwidth]{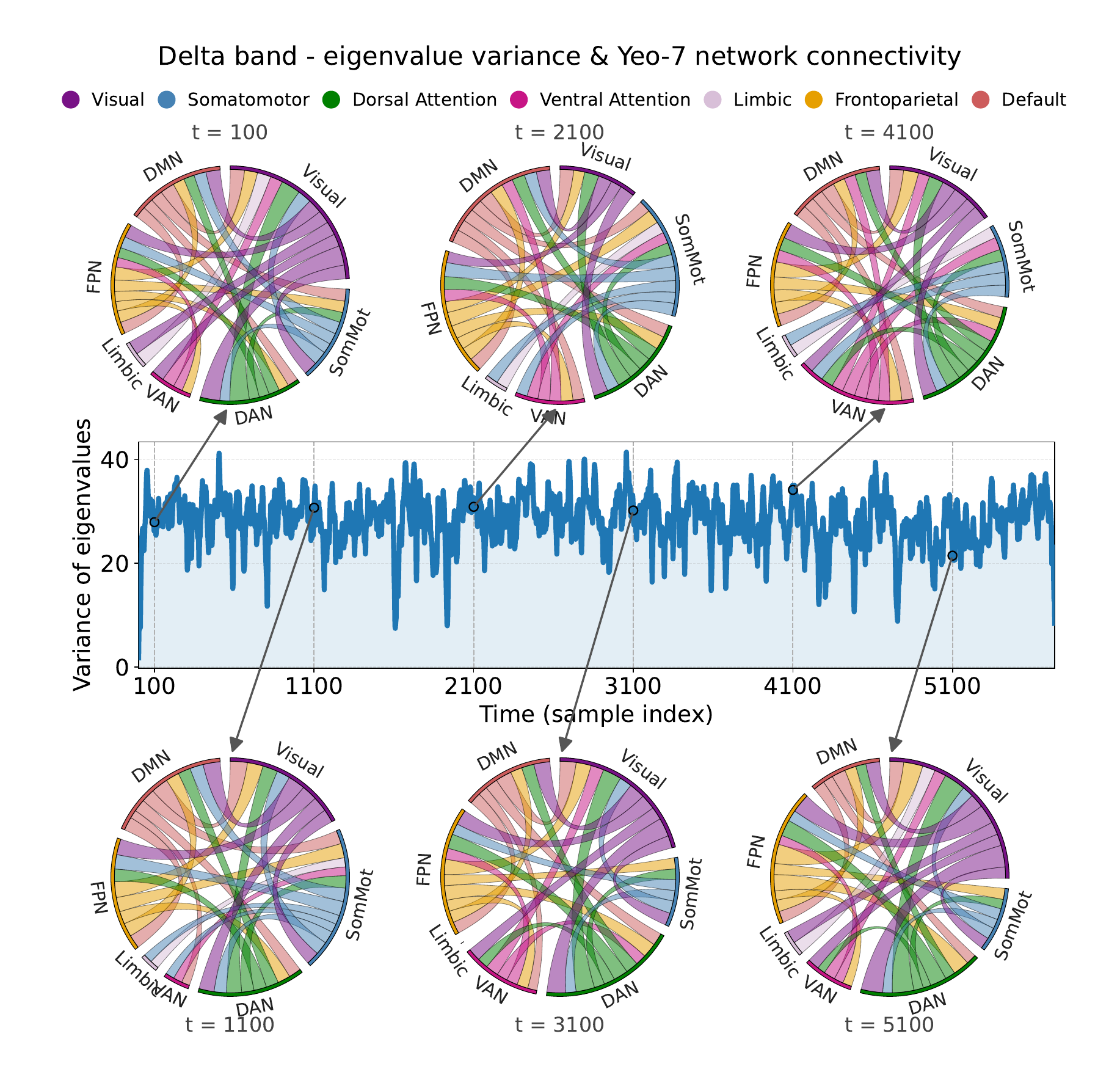}
    \caption{The eigenvalue variance of the dynamic correlation matrix for the Delta band. At the time (sample index) 100,1100,2100,3100,4100,5100, we draw the chord connectivity of Yeo-7 networks. As the syncitical \( 7\times7 \) network is still dense, we truncated the Yeo-7 networks by its median value. The names of Yeo-7 networks are ``Visual'',``Somatomotor'',``Dorsal Attention'',``Ventral Attention'',``Limbic'',``Frontoparietal'' and ``Default'', while the short names on the chord connectivity are ``Visual'',``SomMot'',``DAN'',``VAN'',``Limbic'',``FPN'' and ``DMN'' respectively.}\label{fig:figs/eig_var_circle_connect_Delta.pdf}
\end{figure}
\begin{figure}[!ht]
    \centering
    \includegraphics[width=\textwidth]{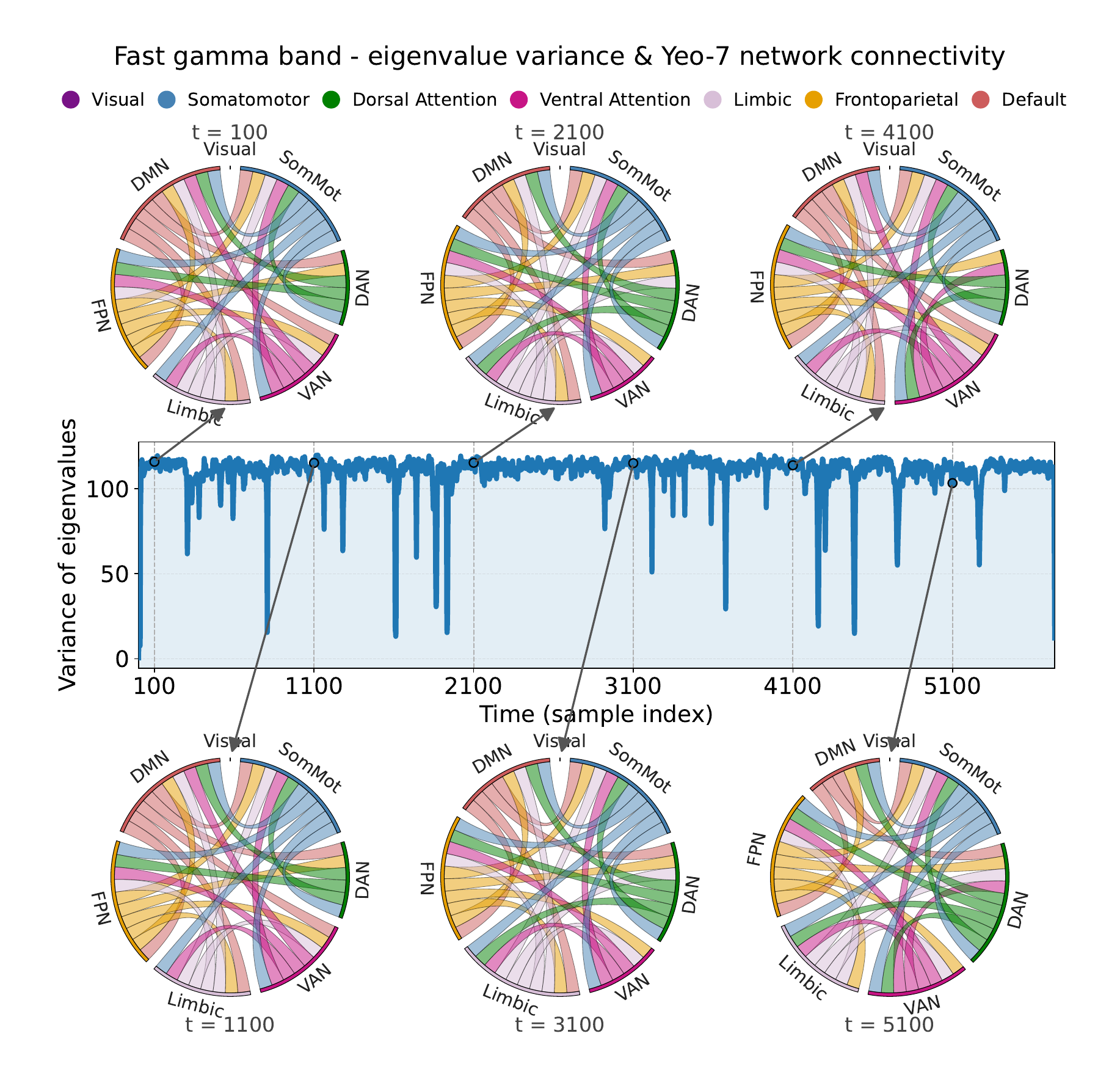}
    \caption{The eigenvalue variance of the dynamic correlation matrix for the fast Gamma band. At the time (sample index) 100,1100,2100,3100,4100,5100, we draw the chord connectivity of Yeo-7 networks. As the syncitical \( 7\times7 \) network is still dense, we truncated the Yeo-7 networks by its median value. The names of Yeo-7 networks are ``Visual'',``Somatomotor'',``Dorsal Attention'',``Ventral Attention'',``Limbic'',``Frontoparietal'' and ``Default'', while the short names on the chord connectivity are ``Visual'',``SomMot'',``DAN'',``VAN'',``Limbic'',``FPN'' and ``DMN'' respectively.}\label{fig:figs/eig_var_circle_connect_Fast_gamma.pdf}
\end{figure}

From~\Cref{fig:figs/eig_var_circle_connect_Delta.pdf,fig:figs/eig_var_circle_connect_Fast_gamma.pdf}, we can see that the variance of the eigenvalue of delta band fluctuates substantially over time, with values generally falling in the range of approximately 26-32 and frequent short-lived excursions. By contrast, the fast gamma band shows a considerably larger absolute eigenvalue variance, typically around 110-115, and remains comparatively stable for most of the recording, although multiple transient downward fluctuations are observable. Furthermore, the eigenvalue variance of theta, alpha, beta, and slow gamma bands fluctuates approximately between 67-75, 62-70, 47-59, and 106-111, respectively, see~\Cref{fig:figs/eig_var_circle_connect_Beta.pdf,fig:figs/eig_var_circle_connect_Alpha.pdf,fig:figs/eig_var_circle_connect_Theta.pdf,fig:figs/eig_var_circle_connect_Slow_gamma.pdf} in Appendix. The six frequency bands reveal a clear frequency-dependent pattern in the eigenvalue variance, with the gamma (fast and slow) band exhibiting the highest values, followed by the theta, alpha, beta and delta bands. But the connectivity does not become more stable as the frequency increases. The beta band shows markedly greater temporal variability, with a 97.31 variance of ``eigenvalue variance'' across time, than the theta and alpha bands (55.00 and 50.08, respectively), despite its intermediate position in the frequency spectrum.

For the circle connectivity plots, we can see that the delta band Yeo-7 circle connectivity shows a similar pattern across the six time indices. The relative strengths of connections involving the Visual, Somatomotor, Dorsal Attention (DAN), Ventral Attention (VAN), Frontoparietal (FPN), and Default Mode (DMN) networks vary across the six sampled time points. The Limbic and Ventral Attention (VAN) networks show relatively weaker connections compared to the other networks. However, compared with the delta band, the Fast Gamma band Yeo-7 circle connectivity shows a different pattern across the six time indices. The visual network shows relatively weaker connections compared to the other networks (in~\Cref{fig:figs/eig_var_circle_connect_Fast_gamma.pdf}, the circle connectivity is truncated by its median value). These phenomena appear in the slow gamma band as well; see~\Cref{fig:figs/eig_var_circle_connect_Slow_gamma.pdf} in the Appendix.  For the alpha band in~\Cref{fig:figs/eig_var_circle_connect_Alpha.pdf}, the visual network has strong connections with other networks, which is consistent with the conclusion in~\citet{colcloughHowReliableAre2016}. Overall, the Yeo-7 network connectivity patterns exhibit frequency-dependent variations, with different networks showing varying strengths of connections across the six sampled time points for each frequency band.

\textbf{Periodicity Analysis} To assess whether the observed temporal variations in functional connectivity exhibited significant periodicity, we applied an autoregressive (AR)-based global periodicity test to each network-pair connectivity time series. This procedure was designed to account for the substantial temporal dependence expected in dynamic functional connectivity~\citep{xuSeasonalVariationsFunctional2023,eklundFastRandomPermutation2011,choeReproducibilityTemporalStructure2015,liegeoisInterpretingTemporalFluctuations2017,vanesLargescaleCorticalFunctional2025}. For each frequency band and each pair of Yeo-7 functional networks, the resulting connectivity time series was first centred and standardised. An AR($p$) model was then fitted to the standardised series \( y_{t} \),
\begin{equation*}
    y_t=c+\sum_{\ell=1}^{p}\phi_\ell y_{t-\ell}+\varepsilon_t,
\end{equation*}
where the autoregressive order $p$ was selected by minimising the Akaike information criterion (AIC) over $p=1,\ldots,20$. The fitted autoregressive coefficients were subsequently used to whiten both the observed connectivity series and the design matrices.

To test for periodic components over a prespecified range of connectivity fluctuation frequencies, we considered a grid of 500 equally spaced frequencies within the interval corresponding to periods of 2--200 s, i.e. $f\in[1/200,1/2]$ Hz. For each candidate frequency $f$, we compared a null model containing only an intercept with a harmonic model,
\begin{equation*}
    X_1(f)=
    \begin{bmatrix}
        \mathbf{1} & \cos(2\pi f t) & \sin(2\pi f t)
    \end{bmatrix},
    \qquad
    X_0=\mathbf{1}.
\end{equation*}
Let \( L \in \mathbb{R}^{n_{eff}\times T} \) denote the whitening matrix, where \( n_{eff} \) is the effective number of observations after accounting for the AR model and \( T \) is the total number of time points. Hence, the whitened response and design matrix are:
\begin{equation*}
    \tilde{y}=Ly\in \mathbb{R}^{n_{eff}},
    \qquad
    \tilde{X}_1(f)=LX_1(f)\in \mathbb{R}^{n_{eff}\times 3},
    \qquad
    \tilde{X}_0=LX_0\in \mathbb{R}^{n_{eff}\times 1}.
\end{equation*}
After AR whitening, the two nested models \( \tilde{y}\sim \tilde{X}_1(f) \) and \( \tilde{y}\sim \tilde{X}_0 \) were fitted by least squares and compared using the statistic
\begin{equation*}
    F(f)=
    \frac{[RSS_0(f)-RSS_1(f)]/2}
    {RSS_1(f)/(n_{\mathrm{eff}}-3)},
\end{equation*}
which, under the corresponding model assumptions, was evaluated against an $F_{2,n_{\mathrm{eff}}-3}$ distribution. The frequency providing the strongest evidence for a periodic component was defined as \( \widehat f_{\mathrm{AR}}=\argmin_f p(f)\), \(p_{\mathrm{AR}}=\min_f p(f), \) where \( p(f) \) is the $p$-value corresponding to the F-statistic \( F(f) \).
The corresponding dominant period was calculated as $\widehat T_{\mathrm{AR}}=1/\widehat f_{\mathrm{AR}}$. Finally, the resulting $p_{\mathrm{AR}}$ values were adjusted for multiple comparisons across all frequency-band and network-pair combinations using the Benjamini--Hochberg false discovery rate (FDR) procedure, yielding $q_{\mathrm{AR}}$ values. Connectivity fluctuations were considered statistically significant when $q_{\mathrm{AR}}<0.005$.
\begin{figure}[ht]
    \centering
    \includegraphics[width=\textwidth]{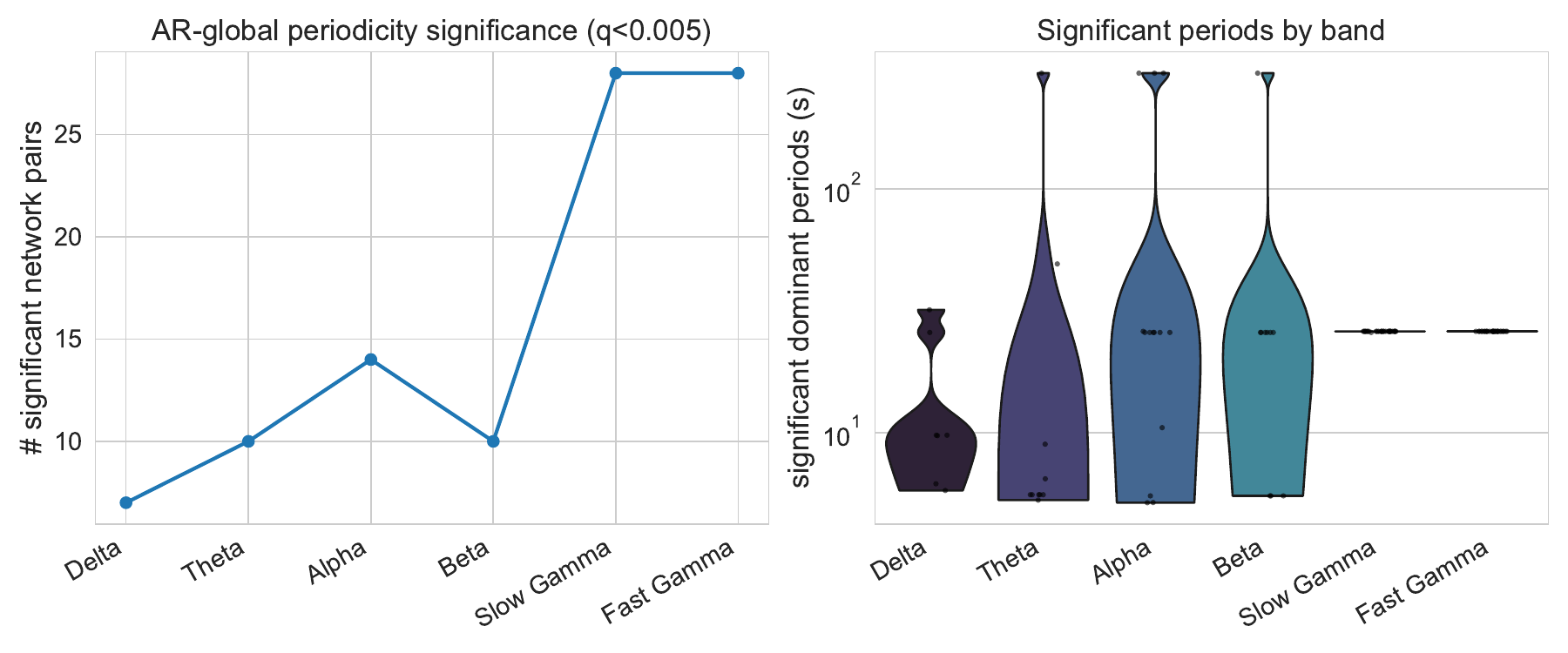}
    \caption{Multiband periodicity analysis of dynamic functional connectivity. Left: number of Yeo-7 network pairs (out of 28) whose connectivity-fluctuation spectrum contains a significant periodic component at the AR-global level \( q_{\mathrm{AR}}<0.005 \), Benjamini–Hochberg FDR across all 168 band–pair combinations. Right: violin plots of the estimated dominant period \( \widehat T_{\mathrm{AR}} \), restricted to significant pairs, overlaid with individual estimates as jittered points; the ordinate is logarithmic. The slow- and fast-gamma distributions are degenerate, because in those bands every pair is significant (28/28) and the estimated periods coincide almost exactly \( \approx 26 s \).}\label{fig:figs/multiband_significant_periods.pdf}
\end{figure}

The results are illustrated in~\Cref{fig:figs/multiband_significant_periods.pdf}. A total of 97 of the 168 band–pair combinations (57.7\%) exhibited a statistically significant periodic component at \( q_{\mathrm{AR}}<0.005 \). The number of significant pairs differed markedly across bands (left-hand panel): 7/28 (25.0\%) in delta, 10/28 (35.7\%) in theta, 14/28 (50.0\%) in alpha, 10/28 (35.7\%) in beta, and 28/28 (100\%) in both slow gamma and fast gamma. Significance therefore did not increase monotonically with carrier frequency: alpha produced the largest count outside the gamma range (14 pairs), outnumbering theta and beta (10 each), whilst delta yielded the fewest (7).

Theta, alpha and beta cluster at 5.55 s, 25.77 s and 25.77 s with the range 5.27-300, 5.14-300,  5.49-300 respectively. Delta is the most dispersed of the low-frequency bands, with three pairs near 9.74 s and the remainder scattered between 5.77 s and 31.93 s. The most striking feature is the behaviour of the gamma bands. Every one of the 28 Yeo-7 pairs was significant in both slow gamma (40-80 Hz) and fast gamma (80-120 Hz). The slow and fast gamma violins collapse into single horizontal lines at approximately 26 s, in marked contrast to the broad, multi-modal distributions of the other four bands.
The gamma band’s role in shaping large-scale functional connectivity appears to be mediated by slow, periodic fluctuations in its power envelope rather than by any intrinsic fast rhythm. Spontaneous infra-slow (<0.1 Hz) co-modulations of gamma band power have been shown to reflect the functional organisation of the cortex, with such co-modulations identifying sensory, motor, and default-mode networks and supporting the hypothesis that they represent the neurophysiological basis underlying resting-state networks~\citep{koIdentifyingFunctionalNetworks2013}. Converging EEG-fMRI and computational evidence indicates that infra-slow gamma band co-modulation, driven by balanced excitatory-inhibitory gamma oscillations, tracks fMRI connectivity dynamics across the connectome, suggesting that a common gamma band slow period may serve as a shared temporal scaffold for large-scale functional connectivity~\citep{koQuasiperiodicFluctuationsDefault2011,wirsichConcurrentEEGFMRIderived2020}.
\section{Conclusion and Discussion}\label{sec:conclusion}
In this paper, we proposed a nonparametric correlation estimator via solving a cubic equation (CE) to carry out the time-varying correlation analysis in MEG power envelope. Furthermore, we also corrected the boundary effects by implementation of the local linear regression (CL) method. The proposed estimator is a nonparametric estimator that does not require any distributional assumptions on the data, making it more flexible and robust in practice. We not only provided the theoretical properties of the proposed estimators (CE and CL), but also conducted extensive simulation studies to demonstrate their superior performance over the existing methods. Considering the trade-off between MSE and computational cost, we recommend using the CL method for large-scale time-varying correlation analysis, as the CL method reduces computational cost relative to the LL method, while simultaneously preserving a similar level of MSE.

We applied the CL method to one subject's real MEG scan from the Cam-CAN dataset. Following a series of preprocessing steps, including channel selection, band-pass filtering, power-line noise removal, anatomical processing, boundary element modelling, and computation of the forward solution, we obtained the source-space time series for the entire recording. These source estimates were subsequently parcellated according to the Schaefer-200 cortical atlas to obtain regional time series for the 200 cortical ROIs. Before extracting the power envelopes, we applied source leakage correction following established procedures in the literature.

Once the power envelopes had been obtained, we applied the proposed CL method to the time series of the 200 ROIs to estimate the time-varying correlation coefficients and construct dynamic correlation matrices for each frequency band. Analysis of these dynamic correlation matrices revealed a clear frequency-dependent pattern in the overall strength of functional connectivity, as reflected by the variance of the eigenvalues of the correlation matrices. Overall, the strength of the correlation structure tended to increase with increasing frequency band except the beta band.

The Yeo-7 networks also exhibited distinct connectivity patterns across frequency bands, with both the strength and distribution of connectivity varying across the six sampled time points within each band. These results demonstrate that the proposed nonparametric approach can capture frequency-specific and temporally varying patterns of functional connectivity in resting-state MEG data. More broadly, the proposed framework provides a flexible nonparametric approach for characterising temporally evolving dependence structures without imposing a restrictive parametric form on the underlying correlation dynamics.

\paragraph{Limitations}\label{para:limitations}
The proposed methodology also has several limitations that should be acknowledged. First, the proposed framework assumes that each pair of variables follows a bivariate normal distribution. This is a relatively strong assumption. Although a logarithmic transformation is applied before correlation estimation, we do not perform an additional formal assessment of normality. In practical applications, we recommend that the distributional assumptions be carefully assessed, and the normality of the transformed data be examined where appropriate, before applying the proposed method. If the normality test fails on the transformed data, a further discussion on transformation for heavy tailed data should be considered.  We also acknowledge that the current theoretical analysis is developed under an i.i.d. framework with unit marginal variances. These assumptions are primarily adopted to make the asymptotic analysis of the cubic estimating equation tractable and to isolate the statistical properties of the proposed nonparametric correlation estimator. Some non-i.i.d cases were fully discussed in~\citet{zhangFactorizedEstimationHighdimensional2022,liStatisticalInferenceHighdimensional2021}. The theoretical results are based on an i.i.d. unit variance setting and do not fully account for temporal dependence, bandwidth estimation, and multiple pairwise estimation effects, which we leave as important extensions for future work.

Second, because the proposed correlation estimator is constructed independently for each pair of variables, the resulting correlation matrix is not guaranteed to be positive definite. Enforcing positive definiteness at the estimation stage would require the same bandwidth to be used across all matrix entries. Although such a strategy could preserve the positive-definite structure of the estimated correlation matrix, it would introduce other important drawbacks. In particular, the computational cost would increase substantially because the bandwidth would need to be jointly selected across all variable pairs, and the resulting bandwidth could be dominated by the sparsest or most difficult-to-estimate pairs~\citep{liStatisticalInferenceHighdimensional2021}. We therefore adopt a compromise in which the correlations are estimated independently for each pair, allowing the pairwise estimation problems to be readily parallelised, and the resulting matrix is subsequently projected onto the nearest correlation matrix. While this strategy provides a practical balance between computational efficiency and matrix validity, the projection step inevitably introduces some distortion to the estimated pairwise correlations~\citep{highamComputingNearestCorrelation2002}. The extent to which this projection affects downstream statistical inference and the preservation of the underlying dependence structure has not been fully investigated in the present study and warrants further investigation.

Third, the proposed nonparametric correlation estimator is based on a local smoothing framework and may therefore become less stable when the sample size is small. This limitation is less consequential for high-frequency data, where numerous observations are typically available, such as financial transaction data, MEG, and EEG recordings.

Fourth, the bandwidth selection procedure employed in this study is based on cross-validation. Although we have introduced several computational optimisations to improve its efficiency, bandwidth selection may still impose a considerable computational burden, particularly for large-scale datasets involving numerous variable pairs or observations.

Fifth, in terms of estimation accuracy, the CL method remains slightly less accurate than the LL method in some settings. This represents a trade-off between computational efficiency and statistical accuracy that should be considered when selecting an estimator for a particular application.

Finally, the proposed nonparametric correlation estimator was developed for continuously distributed data. Its direct application to discrete-valued data is therefore not currently supported and would require appropriate modifications to the estimation procedure. This limitation is inherent to the current formulation of the estimator and represents an important direction for future methodological development.

\paragraph{Scope, Generalisability, and Future Directions}\label{para:scope_generalisability_and_future_directions}
In this study, we focused on time-varying correlation analysis based on power envelopes, as this framework has been widely used in neuroscience and provides a useful measure of functional interactions between brain regions. Nevertheless, the proposed nonparametric correlation estimation method is not restricted to power envelope connectivity and can, in principle, be extended to other forms of time-varying dependence analysis. For example, it could be applied to time-varying measures of phase synchronisation, coherence, phase-locking value (PLV), weighted phase-lag index (wPLI), imaginary coherence, and other measures of frequency-specific functional connectivity~\citep{liuBenchmarkingMethodsMapping2025}. It could also be adapted to time-varying amplitude-envelope connectivity and cross-frequency coupling measures. Future work could investigate these extensions to assess the general applicability and robustness of the proposed methodology across different representations of dynamic functional connectivity.

We used the variance of the eigenvalues as a summary measure because it provides a compact characterisation of the overall connectivity strength of a time-varying correlation matrix. Other graph-theoretic and matrix-based measures, such as mean connectivity strength, mean clustering coefficient, average path length, network density, and global efficiency, could also be used to provide a more comprehensive characterisation of the temporal evolution of functional connectivity. Owing to space limitations, we do not report these additional analyses in detail here. They nevertheless provide natural extensions for future analyses and could offer complementary perspectives on the temporal organisation of the estimated correlation matrices.

The analysis of the time-varying correlation matrices across the six frequency bands revealed clear frequency-dependent differences in the overall connectivity structure. At the same time, some similarities in connectivity patterns were also observed across different frequency bands, suggesting that frequency-specific network organisation may coexist with common features of large-scale functional architecture. These observations are consistent with the view that functional brain networks exhibit both frequency-specific and frequency-general characteristics. However, the present analysis was based on MEG data from a single participant, and the observed patterns should therefore not be interpreted as representative of the general population. Analyses based on a substantially larger cohort will be required to assess the robustness, reproducibility, and generalisability of these findings, as well as to determine whether the observed frequency-specific patterns are consistent across individuals.

Finally, the MEG data analysed in this study were acquired during the resting state. An important direction for future research would therefore be to apply the proposed methodology to task-based MEG or EEG recordings. Such analyses could provide an opportunity to investigate how functional connectivity between brain regions changes dynamically in response to different cognitive demands and whether the proposed nonparametric framework can capture task-related transitions in functional network organisation. Extending the methodology to task-evoked and other experimentally controlled settings would also provide a further test of its ability to characterise dynamic dependence structures under different neural and behavioural conditions.

\section*{Data and Code Availability}
The single subject data comes from the \href{https://cam-can.mrc-cbu.cam.ac.uk/dataset/}{Cambridge Centre for Ageing and Neuroscience (Cam-CAN) dataset}~\citep{cam-canCambridgeCentreAgeing2014}. The code and intermediate data are available from \href{https://github.com/Jieli12/nonparametric_correlation_estimator_cubic}{GitHub repository}.

\section*{Competing interests}
The authors have no competing interests.

\bibliographystyle{chicago}
\bibliography{NonparametricCorrelationEstimator.bib}

\begin{thebibliography}{}

\bibitem[\protect\citeauthoryear{Ahlfors}{Ahlfors}{2007}]{ahlforsComplexAnalysisIntroduction2007}
Ahlfors, L.~V. (2007).
\newblock {\em Complex Analysis: An Introduction to the Theory on Analytic Functions of One Complex Variable\/} (3 ed.).
\newblock International Series in Pure and Applied Mathematics. New York: McGraw-Hill.

\bibitem[\protect\citeauthoryear{Ahlfors, Han, Belliveau, and H{\"a}m{\"a}l{\"a}inen}{Ahlfors et~al.}{2010}]{ahlforsSensitivityMEGEEG2010}
Ahlfors, S.~P., J.~Han, J.~W. Belliveau, and M.~S. H{\"a}m{\"a}l{\"a}inen (2010).
\newblock Sensitivity of {{MEG}} and {{EEG}} to source orientation.
\newblock {\em Brain Topography\/}~{\em 23\/}(3), 227--232.

\bibitem[\protect\citeauthoryear{Bastos and Schoffelen}{Bastos and Schoffelen}{2016}]{bastosTutorialReviewFunctional2016}
Bastos, A.~M. and J.-M. Schoffelen (2016).
\newblock A {{Tutorial Review}} of {{Functional Connectivity Analysis Methods}} and {{Their Interpretational Pitfalls}}.
\newblock {\em Frontiers in Systems Neuroscience\/}~{\em 9}.

\bibitem[\protect\citeauthoryear{Betti, Corbetta, {de Pasquale}, Wens, and Della~Penna}{Betti et~al.}{2018}]{bettiTopologyFunctionalConnectivity2018}
Betti, V., M.~Corbetta, F.~{de Pasquale}, V.~Wens, and S.~Della~Penna (2018).
\newblock Topology of {{Functional Connectivity}} and {{Hub Dynamics}} in the {{Beta Band As Temporal Prior}} for {{Natural Vision}} in the {{Human Brain}}.
\newblock {\em The Journal of Neuroscience\/}~{\em 38\/}(15), 3858--3871.

\bibitem[\protect\citeauthoryear{Brookes, Woolrich, Luckhoo, Price, Hale, Stephenson, Barnes, Smith, and Morris}{Brookes et~al.}{2011}]{brookesInvestigatingElectrophysiologicalBasis2011}
Brookes, M.~J., M.~Woolrich, H.~Luckhoo, D.~Price, J.~R. Hale, M.~C. Stephenson, G.~R. Barnes, S.~M. Smith, and P.~G. Morris (2011).
\newblock Investigating the electrophysiological basis of resting state networks using magnetoencephalography.
\newblock {\em Proceedings of the National Academy of Sciences of the United States of America\/}~{\em 108\/}(40), 16783--16788.

\bibitem[\protect\citeauthoryear{Bruns}{Bruns}{2004}]{brunsFourierHilbertWaveletbased2004}
Bruns, A. (2004).
\newblock Fourier-, {{Hilbert-}} and wavelet-based signal analysis: Are they really different approaches?
\newblock {\em Journal of Neuroscience Methods\/}~{\em 137\/}(2), 321--332.

\bibitem[\protect\citeauthoryear{Buzs{\'a}ki}{Buzs{\'a}ki}{2011}]{buzsakiRhythmsBrain2011}
Buzs{\'a}ki, G. (2011).
\newblock {\em Rhythms of the Brain\/} (1 ed.).
\newblock New York, New York: Oxford University Press.

\bibitem[\protect\citeauthoryear{{Cam-CAN}, Shafto, Tyler, Dixon, Taylor, Rowe, Cusack, Calder, {Marslen-Wilson}, Duncan, Dalgleish, Henson, Brayne, and Matthews}{{Cam-CAN} et~al.}{2014}]{cam-canCambridgeCentreAgeing2014}
{Cam-CAN}, M.~A. Shafto, L.~K. Tyler, M.~Dixon, J.~R. Taylor, J.~B. Rowe, R.~Cusack, A.~J. Calder, W.~D. {Marslen-Wilson}, J.~Duncan, T.~Dalgleish, R.~N. Henson, C.~Brayne, and F.~E. Matthews (2014).
\newblock The {{Cambridge Centre}} for {{Ageing}} and {{Neuroscience}} ({{Cam-CAN}}) study protocol: A cross-sectional, lifespan, multidisciplinary examination of healthy cognitive ageing.
\newblock {\em BMC Neurology\/}~{\em 14\/}(1), 204.

\bibitem[\protect\citeauthoryear{{Carrasco-G{\'o}mez}, {Garc{\'i}a-Colomo}, {Cabrera-{\'A}lvarez}, Bru{\~n}a, Santos, and Maest{\'u}}{{Carrasco-G{\'o}mez} et~al.}{2026}]{carrasco-gomezDynamicsBrainConnectivity2026}
{Carrasco-G{\'o}mez}, M., A.~{Garc{\'i}a-Colomo}, J.~{Cabrera-{\'A}lvarez}, R.~Bru{\~n}a, A.~Santos, and F.~Maest{\'u} (2026).
\newblock Dynamics of brain connectivity across the {{Alzheimer}}'s disease spectrum through magnetoencephalography.
\newblock {\em Scientific Reports\/}~{\em 16\/}(1), 1--13.

\bibitem[\protect\citeauthoryear{Chen, Li, Zheng, and Fan}{Chen et~al.}{2025}]{chenDFCExpertLearningDynamic2025}
Chen, T., H.~Li, H.~Zheng, and Y.~Fan (2025).
\newblock {{dFCExpert}}: {{Learning Dynamic Functional Connectivity Patterns}} with {{Modularity}} and {{State Experts}}.
\newblock {\em bioRxiv\/}, 1--12.

\bibitem[\protect\citeauthoryear{Choe, Jones, Joel, Muschelli, Belegu, Caffo, Lindquist, van Zijl, and Pekar}{Choe et~al.}{2015}]{choeReproducibilityTemporalStructure2015}
Choe, A.~S., C.~K. Jones, S.~E. Joel, J.~Muschelli, V.~Belegu, B.~S. Caffo, M.~A. Lindquist, P.~C.~M. van Zijl, and J.~J. Pekar (2015).
\newblock Reproducibility and {{Temporal Structure}} in {{Weekly Resting-State fMRI}} over a {{Period}} of 3.5 {{Years}}.
\newblock {\em PLOS ONE\/}~{\em 10\/}(10), 1--29.

\bibitem[\protect\citeauthoryear{Coelli, Corda, and Bianchi}{Coelli et~al.}{2025}]{coelliTimevaryingBrainComprehensive2025}
Coelli, S., M.~Corda, and A.~M. Bianchi (2025).
\newblock The time-varying brain: A comprehensive review of dynamic functional connectivity analysis in {{EEG}} and {{MEG}}.
\newblock {\em Journal of Neural Engineering\/}~{\em 22\/}(5), 1--20.

\bibitem[\protect\citeauthoryear{Colclough, Brookes, Smith, and Woolrich}{Colclough et~al.}{2015}]{colcloughSymmetricMultivariateLeakage2015}
Colclough, G., M.~Brookes, S.~Smith, and M.~Woolrich (2015).
\newblock A symmetric multivariate leakage correction for {{MEG}} connectomes.
\newblock {\em Neuroimage\/}~{\em 117}, 439--448.

\bibitem[\protect\citeauthoryear{Colclough, Woolrich, Tewarie, Brookes, Quinn, and Smith}{Colclough et~al.}{2016}]{colcloughHowReliableAre2016}
Colclough, G.~L., M.~W. Woolrich, P.~K. Tewarie, M.~J. Brookes, A.~J. Quinn, and S.~M. Smith (2016).
\newblock How reliable are {{MEG}} resting-state connectivity metrics?
\newblock {\em NeuroImage\/}~{\em 138}, 284--293.

\bibitem[\protect\citeauthoryear{Colgin, Denninger, Fyhn, Hafting, Bonnevie, Jensen, Moser, and Moser}{Colgin et~al.}{2009}]{colginFrequencyGammaOscillations2009}
Colgin, L.~L., T.~Denninger, M.~Fyhn, T.~Hafting, T.~Bonnevie, O.~Jensen, M.-B. Moser, and E.~I. Moser (2009).
\newblock Frequency of gamma oscillations routes flow of information in the hippocampus.
\newblock {\em Nature\/}~{\em 462\/}(7271), 353--357.

\bibitem[\protect\citeauthoryear{Conti, D'Onofrio, Lorenzon, Grassi, Mascioli, Ferrari, Simonetta, Pavan, Di~Giuliano, Pierantozzi, Schirinzi, Centonze, Corbetta, Antonini, Stefani, and Guerra}{Conti et~al.}{2026}]{contiFrequencyNetworkSpecificChanges2026}
Conti, M., V.~D'Onofrio, L.~Lorenzon, L.~L. Grassi, D.~Mascioli, V.~Ferrari, C.~Simonetta, S.~Pavan, F.~Di~Giuliano, M.~Pierantozzi, T.~Schirinzi, D.~Centonze, M.~Corbetta, A.~Antonini, A.~Stefani, and A.~Guerra (2026).
\newblock Frequency- and {{Network-Specific Changes}} in {{Functional Connectivity Reflect Pathophysiological Mechanisms}} across {{Parkinson}}'s {{Disease Stages}}.
\newblock {\em Annals of Neurology\/}~{\em 100\/}(2), 319--333.

\bibitem[\protect\citeauthoryear{Dale, Liu, Fischl, Buckner, Belliveau, Lewine, and Halgren}{Dale et~al.}{2000}]{daleDynamicStatisticalParametric2000}
Dale, A.~M., A.~K. Liu, B.~R. Fischl, R.~L. Buckner, J.~W. Belliveau, J.~D. Lewine, and E.~Halgren (2000).
\newblock Dynamic statistical parametric mapping: Combining {{fMRI}} and {{MEG}} for high-resolution imaging of cortical activity.
\newblock {\em Neuron\/}~{\em 26\/}(1), 55--67.

\bibitem[\protect\citeauthoryear{Dale and Sereno}{Dale and Sereno}{1993}]{daleImprovedLocalizadonCortical1993}
Dale, A.~M. and M.~I. Sereno (1993).
\newblock Improved {{Localizadon}} of {{Cortical Activity}} by {{Combining EEG}} and {{MEG}} with {{MRI Cortical Surface Reconstruction}}: {{A Linear Approach}}.
\newblock {\em Journal of Cognitive Neuroscience\/}~{\em 5\/}(2), 162--176.

\bibitem[\protect\citeauthoryear{{de Pasquale}, Della~Penna, Sporns, Romani, and Corbetta}{{de Pasquale} et~al.}{2016}]{depasqualeDynamicCoreNetwork2016}
{de Pasquale}, F., S.~Della~Penna, O.~Sporns, G.~L. Romani, and M.~Corbetta (2016).
\newblock A {{Dynamic Core Network}} and {{Global Efficiency}} in the {{Resting Human Brain}}.
\newblock {\em Cerebral Cortex (New York, NY)\/}~{\em 26\/}(10), 4015--4033.

\bibitem[\protect\citeauthoryear{Desikan, S{\'e}gonne, Fischl, Quinn, Dickerson, Blacker, Buckner, Dale, Maguire, Hyman, Albert, and Killiany}{Desikan et~al.}{2006}]{desikanAutomatedLabelingSystem2006}
Desikan, R.~S., F.~S{\'e}gonne, B.~Fischl, B.~T. Quinn, B.~C. Dickerson, D.~Blacker, R.~L. Buckner, A.~M. Dale, R.~P. Maguire, B.~T. Hyman, M.~S. Albert, and R.~J. Killiany (2006).
\newblock An automated labeling system for subdividing the human cerebral cortex on {{MRI}} scans into gyral based regions of interest.
\newblock {\em NeuroImage\/}~{\em 31\/}(3), 968--980.

\bibitem[\protect\citeauthoryear{Ding, Dan, Wei, Laurienti, and Wu}{Ding et~al.}{2026}]{dingMachineLearningDynamic2026}
Ding, J., T.~Dan, Z.~Wei, P.~J. Laurienti, and G.~Wu (2026).
\newblock Machine {{Learning}} on {{Dynamic Functional Connectivity}}: {{Promise}}, {{Pitfalls}}, and {{Interpretations}}.
\newblock {\em Information Sciences\/}~{\em 740}, 1--20.

\bibitem[\protect\citeauthoryear{Durand and Le~Roux}{Durand and Le~Roux}{2017}]{durandLinkageIndexVariables2017}
Durand, J.-L. and B.~Le~Roux (2017).
\newblock Linkage index of variables and its relationship with variance of eigenvalue in {{PCA}} and {{MCA}}.
\newblock {\em Italian Journal of Applied Statistics\/}~{\em 29}, 123--136.

\bibitem[\protect\citeauthoryear{Eklund, Andersson, and Knutsson}{Eklund et~al.}{2011}]{eklundFastRandomPermutation2011}
Eklund, A., M.~Andersson, and H.~Knutsson (2011).
\newblock Fast {{Random Permutation Tests Enable Objective Evaluation}} of {{Methods}} for {{Single-Subject fMRI Analysis}}.
\newblock {\em International Journal of Biomedical Imaging\/}~{\em 2011}, 1--15.

\bibitem[\protect\citeauthoryear{Fan}{Fan}{1992}]{fanDesignadaptiveNonparametricRegression1992}
Fan, J. (1992).
\newblock Design-adaptive {{Nonparametric Regression}}.
\newblock {\em Journal of the American Statistical Association\/}~{\em 87\/}(420), 998--1004.

\bibitem[\protect\citeauthoryear{Fan and Gijbels}{Fan and Gijbels}{1995}]{fanDataDrivenBandwidthSelection1995}
Fan, J. and I.~Gijbels (1995).
\newblock Data-{{Driven Bandwidth Selection}} in {{Local Polynomial Fitting}}: {{Variable Bandwidth}} and {{Spatial Adaptation}}.
\newblock {\em Journal of the Royal Statistical Society. Series B (Methodological)\/}~{\em 57\/}(2), 371--394.

\bibitem[\protect\citeauthoryear{Fan and Gijbels}{Fan and Gijbels}{1996}]{fanLocalPolynomialModelling1996}
Fan, J. and I.~Gijbels (1996).
\newblock {\em Local Polynomial Modelling and Its Applications\/} (1 ed.).
\newblock Number~66 in Monographs on Statistics and Applied Probability. Boca Raton: Chapman \& Hall / CRC.

\bibitem[\protect\citeauthoryear{Fan, Heckman, and Wand}{Fan et~al.}{1995}]{fanLocalPolynomialKernel1995}
Fan, J., N.~E. Heckman, and M.~P. Wand (1995).
\newblock Local {{Polynomial Kernel Regression}} for {{Generalized Linear Models}} and {{Quasi-Likelihood Functions}}.
\newblock {\em Journal of the American Statistical Association\/}~{\em 90\/}(429), 141--150.

\bibitem[\protect\citeauthoryear{Fan and Yao}{Fan and Yao}{1998}]{fanEfficientEstimationConditional1998}
Fan, J. and Q.~Yao (1998).
\newblock Efficient estimation of conditional variance functions in stochastic regression.
\newblock {\em Biometrika\/}~{\em 85\/}(3), 645--660.

\bibitem[\protect\citeauthoryear{Fan, Yao, and Tong}{Fan et~al.}{1996}]{fanEstimationConditionalDensities1996}
Fan, J., Q.~Yao, and H.~Tong (1996).
\newblock Estimation of {{Conditional Densities}} and {{Sensitivity Measures}} in {{Nonlinear Dynamical Systems}}.
\newblock {\em Biometrika\/}~{\em 83\/}(1), 189--206.

\bibitem[\protect\citeauthoryear{Favaretto, Allegra, Deco, Metcalf, Griffis, Shulman, Brovelli, and Corbetta}{Favaretto et~al.}{2022}]{favarettoSubcorticalcorticalDynamicalStates2022}
Favaretto, C., M.~Allegra, G.~Deco, N.~V. Metcalf, J.~C. Griffis, G.~L. Shulman, A.~Brovelli, and M.~Corbetta (2022).
\newblock Subcortical-cortical dynamical states of the human brain and their breakdown in stroke.
\newblock {\em Nature Communications\/}~{\em 13\/}(1), 1--17.

\bibitem[\protect\citeauthoryear{Fischl}{Fischl}{2012}]{fischlFreeSurfer2012}
Fischl, B. (2012).
\newblock {{FreeSurfer}}.
\newblock {\em NeuroImage\/}~{\em 62\/}(2), 774--781.

\bibitem[\protect\citeauthoryear{Gasser, Sroka, and {Jennen-steinmetz}}{Gasser et~al.}{1986}]{gasserResidualVarianceResidual1986}
Gasser, T., L.~Sroka, and C.~{Jennen-steinmetz} (1986).
\newblock Residual variance and residual pattern in nonlinear regression.
\newblock {\em Biometrika\/}~{\em 73\/}(3), 625--633.

\bibitem[\protect\citeauthoryear{Geddes and Baker}{Geddes and Baker}{1967}]{geddesSpecificResistanceBiological1967}
Geddes, L.~A. and L.~E. Baker (1967).
\newblock The specific resistance of biological material---{{A}} compendium of data for the biomedical engineer and physiologist.
\newblock {\em Medical and biological engineering\/}~{\em 5\/}(3), 271--293.

\bibitem[\protect\citeauthoryear{Gramfort}{Gramfort}{2013}]{gramfortMEGEEGData2013}
Gramfort, A. (2013).
\newblock {{MEG}} and {{EEG}} data analysis with {{MNE-Python}}.
\newblock {\em Frontiers in Neuroscience\/}~{\em 7}, 1--13.

\bibitem[\protect\citeauthoryear{Guglielmi, Cisotto, Erseghe, and Badia}{Guglielmi et~al.}{2022}]{guglielmiFrequencyDependentFunctionalConnectivity2022}
Guglielmi, A.~V., G.~Cisotto, T.~Erseghe, and L.~Badia (2022).
\newblock Frequency-{{Dependent Functional Connectivity}} of {{Brain Networks}} at {{Resting-State}}.
\newblock In {\em 2022 14th {{Biomedical Engineering International Conference}} ({{BMEiCON}})}, pp.\  1--5.

\bibitem[\protect\citeauthoryear{H{\"a}m{\"a}l{\"a}inen and Ilmoniemi}{H{\"a}m{\"a}l{\"a}inen and Ilmoniemi}{1994}]{hamalainenInterpretingMagneticFields1994}
H{\"a}m{\"a}l{\"a}inen, M.~S. and R.~J. Ilmoniemi (1994).
\newblock Interpreting magnetic fields of the brain: Minimum norm estimates.
\newblock {\em Medical \& Biological Engineering \& Computing\/}~{\em 32\/}(1), 35--42.

\bibitem[\protect\citeauthoryear{H{\"a}m{\"a}l{\"a}inen and Sarvas}{H{\"a}m{\"a}l{\"a}inen and Sarvas}{1989}]{hamalainenRealisticConductivityGeometry1989}
H{\"a}m{\"a}l{\"a}inen, M.~S. and J.~Sarvas (1989).
\newblock Realistic conductivity geometry model of the human head for interpretation of neuromagnetic data.
\newblock {\em IEEE transactions on bio-medical engineering\/}~{\em 36\/}(2), 165--171.

\bibitem[\protect\citeauthoryear{Han, Su, He, Zhan, Plis, Calhoun, and Yang}{Han et~al.}{2026}]{hanRethinkingFunctionalBrain2026}
Han, K., Y.~Su, L.~He, L.~Zhan, S.~Plis, V.~Calhoun, and C.~Yang (2026).
\newblock Rethinking functional brain connectome analysis: Do graph deep learning models {{Help}}.
\newblock {\em npj Artificial Intelligence\/}~{\em 2\/}(1), 1--14.

\bibitem[\protect\citeauthoryear{Higham}{Higham}{2002}]{highamComputingNearestCorrelation2002}
Higham, N.~J. (2002).
\newblock Computing the nearest correlation matrix---a problem from finance.
\newblock {\em IMA Journal of Numerical Analysis\/}~{\em 22\/}(3), 329--343.

\bibitem[\protect\citeauthoryear{Hipp, Hawellek, Corbetta, Siegel, and Engel}{Hipp et~al.}{2012}]{hippLargescaleCorticalCorrelation2012}
Hipp, J.~F., D.~J. Hawellek, M.~Corbetta, M.~Siegel, and A.~K. Engel (2012).
\newblock Large-scale cortical correlation structure of spontaneous oscillatory activity.
\newblock {\em Nature Neuroscience\/}~{\em 15\/}(6), 884--890.

\bibitem[\protect\citeauthoryear{Huang, Yurgil, Robb, Angeles, Diwakar, Risbrough, Nichols, McLay, Theilmann, Song, Huang, Lee, and Baker}{Huang et~al.}{2014}]{huangVoxelwiseRestingstateMEG2014}
Huang, M.-X., K.~A. Yurgil, A.~Robb, A.~Angeles, M.~Diwakar, V.~B. Risbrough, S.~L. Nichols, R.~McLay, R.~J. Theilmann, T.~Song, C.~W. Huang, R.~R. Lee, and D.~G. Baker (2014).
\newblock Voxel-wise resting-state {{MEG}} source magnitude imaging study reveals neurocircuitry abnormality in active-duty service members and veterans with {{PTSD}}.
\newblock {\em NeuroImage : Clinical\/}~{\em 5}, 408--419.

\bibitem[\protect\citeauthoryear{Huang, Nouranizadeh, Ahrends, and Xu}{Huang et~al.}{2025}]{huangBrainATCLAdaptiveTemporal2025}
Huang, Y., A.~Nouranizadeh, C.~Ahrends, and M.~Xu (2025).
\newblock {{BrainATCL}}: {{Adaptive Temporal Brain Connectivity Learning}} for {{Functional Link Prediction}} and {{Age Estimation}}.

\bibitem[\protect\citeauthoryear{Jin, Ranasinghe, Prabhu, Dale, Gao, Kudo, Vossel, Raj, Nagarajan, and Jiang}{Jin et~al.}{2023}]{jinDynamicFunctionalConnectivity2023}
Jin, H., K.~G. Ranasinghe, P.~Prabhu, C.~Dale, Y.~Gao, K.~Kudo, K.~Vossel, A.~Raj, S.~S. Nagarajan, and F.~Jiang (2023).
\newblock Dynamic functional connectivity {{MEG}} features of {{Alzheimer}}'s disease.
\newblock {\em NeuroImage\/}~{\em 281}, 1--30.

\bibitem[\protect\citeauthoryear{Kajimura, Margulies, and Smallwood}{Kajimura et~al.}{2023}]{kajimuraFrequencyspecificBrainNetwork2023}
Kajimura, S., D.~Margulies, and J.~Smallwood (2023).
\newblock Frequency-specific brain network architecture in resting-state {{fMRI}}.
\newblock {\em Scientific Reports\/}~{\em 13\/}(1), 1--9.

\bibitem[\protect\citeauthoryear{Kendall, Stuart, Ord, Stuart, and Kendall}{Kendall et~al.}{1973}]{kendallInferenceRelationship1973}
Kendall, M.~G., A.~Stuart, J.~K. Ord, A.~Stuart, and M.~G. Kendall (1973).
\newblock {\em Inference and Relationship\/} (3 ed.).
\newblock Number~2 in The Advanced Theory of Statistics. London: Griffin.

\bibitem[\protect\citeauthoryear{Ko, Darvas, Poliakov, Ojemann, and Sorensen}{Ko et~al.}{2011}]{koQuasiperiodicFluctuationsDefault2011}
Ko, A.~L., F.~Darvas, A.~Poliakov, J.~Ojemann, and L.~B. Sorensen (2011).
\newblock Quasi-periodic {{Fluctuations}} in {{Default Mode Network Electrophysiology}}.
\newblock {\em The Journal of Neuroscience\/}~{\em 31\/}(32), 11728--11732.

\bibitem[\protect\citeauthoryear{Ko, Weaver, Hakimian, and Ojemann}{Ko et~al.}{2013}]{koIdentifyingFunctionalNetworks2013}
Ko, A.~L., K.~E. Weaver, S.~Hakimian, and J.~G. Ojemann (2013).
\newblock Identifying functional networks using endogenous connectivity in gamma band electrocorticography.
\newblock {\em Brain Connectivity\/}~{\em 3\/}(5), 491--502.

\bibitem[\protect\citeauthoryear{Li}{Li}{2021}]{liStatisticalInferenceHighdimensional2021}
Li, J. (2021).
\newblock Statistical {{Inference}} for {{High-dimensional Nonparametric Models}}.

\bibitem[\protect\citeauthoryear{Li{\'e}geois, Laumann, Snyder, Zhou, and Yeo}{Li{\'e}geois et~al.}{2017}]{liegeoisInterpretingTemporalFluctuations2017}
Li{\'e}geois, R., T.~O. Laumann, A.~Z. Snyder, J.~Zhou, and B.~T.~T. Yeo (2017).
\newblock Interpreting temporal fluctuations in resting-state functional connectivity {{MRI}}.
\newblock {\em NeuroImage\/}~{\em 163}, 437--455.

\bibitem[\protect\citeauthoryear{Lin, Witzel, Ahlfors, Stufflebeam, Belliveau, and H{\"a}m{\"a}l{\"a}inen}{Lin et~al.}{2006}]{linAssessingImprovingSpatial2006}
Lin, F.-H., T.~Witzel, S.~P. Ahlfors, S.~M. Stufflebeam, J.~W. Belliveau, and M.~S. H{\"a}m{\"a}l{\"a}inen (2006).
\newblock Assessing and improving the spatial accuracy in {{MEG}} source localization by depth-weighted minimum-norm estimates.
\newblock {\em NeuroImage\/}~{\em 31\/}(1), 160--171.

\bibitem[\protect\citeauthoryear{Lin, Witzel, H{\"a}m{\"a}l{\"a}inen, Dale, Belliveau, and Stufflebeam}{Lin et~al.}{2004}]{linSpectralSpatiotemporalImaging2004}
Lin, F.-H., T.~Witzel, M.~S. H{\"a}m{\"a}l{\"a}inen, A.~M. Dale, J.~W. Belliveau, and S.~M. Stufflebeam (2004).
\newblock Spectral spatiotemporal imaging of cortical oscillations and interactions in the human brain.
\newblock {\em NeuroImage\/}~{\em 23\/}(2), 582--595.

\bibitem[\protect\citeauthoryear{Liu, Fukunaga, {de Zwart}, and Duyn}{Liu et~al.}{2010}]{liuLargescaleSpontaneousFluctuations2010}
Liu, Z., M.~Fukunaga, J.~A. {de Zwart}, and J.~H. Duyn (2010).
\newblock Large-scale spontaneous fluctuations and correlations in brain electrical activity observed with magnetoencephalography.
\newblock {\em NeuroImage\/}~{\em 51\/}(1), 102--111.

\bibitem[\protect\citeauthoryear{Liu, Luppi, Hansen, Tian, Zalesky, Yeo, Fulcher, and Misic}{Liu et~al.}{2025}]{liuBenchmarkingMethodsMapping2025}
Liu, Z.-Q., A.~I. Luppi, J.~Y. Hansen, Y.~E. Tian, A.~Zalesky, B.~T.~T. Yeo, B.~D. Fulcher, and B.~Misic (2025).
\newblock Benchmarking methods for mapping functional connectivity in the brain.
\newblock {\em Nature Methods\/}~{\em 22\/}(7), 1593--1602.

\bibitem[\protect\citeauthoryear{Mack and Silverman}{Mack and Silverman}{1982}]{mackWeakStrongUniform1982}
Mack, Y.~P. and B.~W. Silverman (1982).
\newblock Weak and strong uniform consistency of kernel regression estimates.
\newblock {\em Zeitschrift f\"ur Wahrscheinlichkeitstheorie und Verwandte Gebiete\/}~{\em 61\/}(3), 405--415.

\bibitem[\protect\citeauthoryear{Marinkovic, Dhond, Dale, Glessner, Carr, and Halgren}{Marinkovic et~al.}{2003}]{marinkovicSpatiotemporalDynamicsModalityspecific2003}
Marinkovic, K., R.~P. Dhond, A.~M. Dale, M.~Glessner, V.~Carr, and E.~Halgren (2003).
\newblock Spatiotemporal dynamics of modality-specific and supramodal word processing.
\newblock {\em Neuron\/}~{\em 38\/}(3), 487--497.

\bibitem[\protect\citeauthoryear{M{\"u}ller}{M{\"u}ller}{1991}]{mullerSmoothOptimumKernel1991}
M{\"u}ller, H.-G. (1991).
\newblock Smooth optimum kernel estimators near endpoints.
\newblock {\em Biometrika\/}~{\em 78\/}(3), 521--530.

\bibitem[\protect\citeauthoryear{O'Neill, Barratt, Hunt, Tewarie, and Brookes}{O'Neill et~al.}{2015}]{oneillMeasuringElectrophysiologicalConnectivity2015}
O'Neill, G.~C., E.~L. Barratt, B.~A.~E. Hunt, P.~K. Tewarie, and M.~J. Brookes (2015).
\newblock Measuring electrophysiological connectivity by power envelope correlation: A technical review on {{MEG}} methods.
\newblock {\em Physics in Medicine and Biology\/}~{\em 60\/}(21), R271--295.

\bibitem[\protect\citeauthoryear{Rahimi, Jackson, Farahibozorg, and Hauk}{Rahimi et~al.}{2023}]{rahimiTimeLaggedMultidimensionalPattern2023}
Rahimi, S., R.~Jackson, S.-R. Farahibozorg, and O.~Hauk (2023).
\newblock Time-{{Lagged Multidimensional Pattern Connectivity}} ({{TL-MDPC}}): {{An EEG}}/{{MEG}} pattern transformation based functional connectivity metric.
\newblock {\em NeuroImage\/}~{\em 270}, 1--16.

\bibitem[\protect\citeauthoryear{Ruppert and Wand}{Ruppert and Wand}{1994}]{ruppertMultivariateLocallyWeighted1994}
Ruppert, D. and M.~P. Wand (1994).
\newblock Multivariate {{Locally Weighted Least Squares Regression}}.
\newblock {\em Annals of Statistics\/}~{\em 22\/}(3), 1346--1370.

\bibitem[\protect\citeauthoryear{Schaefer, Kong, Gordon, Laumann, Zuo, Holmes, Eickhoff, and Yeo}{Schaefer et~al.}{2018}]{schaeferLocalGlobalParcellationHuman2018}
Schaefer, A., R.~Kong, E.~M. Gordon, T.~O. Laumann, X.-N. Zuo, A.~J. Holmes, S.~B. Eickhoff, and B.~T.~T. Yeo (2018).
\newblock Local-{{Global Parcellation}} of the {{Human Cerebral Cortex}} from {{Intrinsic Functional Connectivity MRI}}.
\newblock {\em Cerebral Cortex\/}~{\em 28\/}(9), 3095--3114.

\bibitem[\protect\citeauthoryear{Schoffelen and Gross}{Schoffelen and Gross}{2009}]{schoffelenSourceConnectivityAnalysis2009}
Schoffelen, J.-M. and J.~Gross (2009).
\newblock Source connectivity analysis with {{MEG}} and {{EEG}}.
\newblock {\em Human Brain Mapping\/}~{\em 30\/}(6), 1857--1865.

\bibitem[\protect\citeauthoryear{S{\'e}gonne, Dale, Busa, Glessner, Salat, Hahn, and Fischl}{S{\'e}gonne et~al.}{2004}]{segonneHybridApproachSkull2004}
S{\'e}gonne, F., A.~M. Dale, E.~Busa, M.~Glessner, D.~Salat, H.~K. Hahn, and B.~Fischl (2004).
\newblock A hybrid approach to the skull stripping problem in {{MRI}}.
\newblock {\em NeuroImage\/}~{\em 22\/}(3), 1060--1075.

\bibitem[\protect\citeauthoryear{Silverman}{Silverman}{1978}]{silvermanWeakStrongUniform1978}
Silverman, B.~W. (1978).
\newblock Weak and {{Strong Uniform Consistency}} of the {{Kernel Estimate}} of a {{Density}} and its {{Derivatives}}.
\newblock {\em The Annals of Statistics\/}~{\em 6\/}(1), 177--184.

\bibitem[\protect\citeauthoryear{Silverman}{Silverman}{1986}]{silvermanDensityEstimationStatistics1986}
Silverman, B.~W. (1986).
\newblock {\em Density {{Estimation}} for {{Statistics}} and {{Data Analysis}}}.
\newblock CRC Press.

\bibitem[\protect\citeauthoryear{Staniswalis and Lee}{Staniswalis and Lee}{1998}]{staniswalisNonparametricRegressionAnalysis1998}
Staniswalis, J.~G. and J.~J. Lee (1998).
\newblock Nonparametric {{Regression Analysis}} of {{Longitudinal Data}}.
\newblock {\em Journal of the American Statistical Association\/}~{\em 93\/}(444), 1403--1418.

\bibitem[\protect\citeauthoryear{Staudenmayer and Ruppert}{Staudenmayer and Ruppert}{2004}]{staudenmayerLocalPolynomialRegression2004}
Staudenmayer, J. and D.~Ruppert (2004).
\newblock Local {{Polynomial Regression}} and {{Simulation}}--{{Extrapolation}}.
\newblock {\em Journal of the Royal Statistical Society Series B: Statistical Methodology\/}~{\em 66\/}(1), 17--30.

\bibitem[\protect\citeauthoryear{Taulu and Simola}{Taulu and Simola}{2006}]{tauluSpatiotemporalSignalSpace2006}
Taulu, S. and J.~Simola (2006).
\newblock Spatiotemporal signal space separation method for rejecting nearby interference in {{MEG}} measurements.
\newblock {\em Physics in Medicine and Biology\/}~{\em 51\/}(7), 1759--1768.

\bibitem[\protect\citeauthoryear{Taulu, Simola, and Kajola}{Taulu et~al.}{2005}]{tauluApplicationsSignalSpace2005}
Taulu, S., J.~Simola, and M.~Kajola (2005).
\newblock Applications of the signal space separation method.
\newblock {\em IEEE Transactions on Signal Processing\/}~{\em 53\/}(9), 3359--3372.

\bibitem[\protect\citeauthoryear{Tholkage, Zheng, and Kulasekera}{Tholkage et~al.}{2022}]{tholkageConditionalKaplanMeier2022}
Tholkage, S., Q.~Zheng, and K.~B. Kulasekera (2022).
\newblock Conditional {{Kaplan}}--{{Meier Estimator}} with {{Functional Covariates}} for {{Time-to-Event Data}}.
\newblock {\em Stats\/}~{\em 5\/}(4), 1113--1129.

\bibitem[\protect\citeauthoryear{{van Es}, Higgins, Gohil, Quinn, Vidaurre, and Woolrich}{{van Es} et~al.}{2025}]{vanesLargescaleCorticalFunctional2025}
{van Es}, M. W.~J., C.~Higgins, C.~Gohil, A.~J. Quinn, D.~Vidaurre, and M.~W. Woolrich (2025).
\newblock Large-scale cortical functional networks are organized in structured cycles.
\newblock {\em Nature Neuroscience\/}~{\em 28\/}(10), 2118--2128.

\bibitem[\protect\citeauthoryear{Vorwerk, Cho, Rampp, Hamer, Kn{\"o}sche, and Wolters}{Vorwerk et~al.}{2014}]{vorwerkGuidelineHeadVolume2014}
Vorwerk, J., J.-H. Cho, S.~Rampp, H.~Hamer, T.~R. Kn{\"o}sche, and C.~H. Wolters (2014).
\newblock A guideline for head volume conductor modeling in {{EEG}} and {{MEG}}.
\newblock {\em NeuroImage\/}~{\em 100}, 590--607.

\bibitem[\protect\citeauthoryear{Wen, Wang, Liu, Meng, and Jiao}{Wen et~al.}{2025}]{wenSpatiotemporalDynamicFunctional2025}
Wen, S., J.~Wang, W.~Liu, X.~Meng, and Z.~Jiao (2025).
\newblock Spatio-temporal dynamic functional brain network for mild cognitive impairment analysis.
\newblock {\em Frontiers in Neuroscience\/}~{\em 19}, 1--15.

\bibitem[\protect\citeauthoryear{Widmann, Schr{\"o}ger, and Maess}{Widmann et~al.}{2015}]{widmannDigitalFilterDesign2015}
Widmann, A., E.~Schr{\"o}ger, and B.~Maess (2015).
\newblock Digital filter design for electrophysiological data -- a practical approach.
\newblock {\em Journal of Neuroscience Methods\/}~{\em 250}, 34--46.

\bibitem[\protect\citeauthoryear{Wirsich, Giraud, and Sadaghiani}{Wirsich et~al.}{2020}]{wirsichConcurrentEEGFMRIderived2020}
Wirsich, J., A.-L. Giraud, and S.~Sadaghiani (2020).
\newblock Concurrent {{EEG-}} and {{fMRI-derived}} functional connectomes exhibit linked dynamics.
\newblock {\em NeuroImage\/}~{\em 219}, 1--13.

\bibitem[\protect\citeauthoryear{Woolrich, Baker, Luckhoo, Mohseni, Barnes, Brookes, and Rezek}{Woolrich et~al.}{2013}]{woolrichDynamicStateAllocation2013}
Woolrich, M.~W., A.~Baker, H.~Luckhoo, H.~Mohseni, G.~Barnes, M.~Brookes, and I.~Rezek (2013).
\newblock Dynamic state allocation for {{MEG}} source reconstruction.
\newblock {\em NeuroImage\/}~{\em 77}, 77--92.

\bibitem[\protect\citeauthoryear{Xu, Choi, Zhao, Li, Rogers, Anderson, Gore, Gao, and Ding}{Xu et~al.}{2023}]{xuSeasonalVariationsFunctional2023}
Xu, L., S.~Choi, Y.~Zhao, M.~Li, B.~P. Rogers, A.~Anderson, J.~C. Gore, Y.~Gao, and Z.~Ding (2023).
\newblock Seasonal variations of functional connectivity of human brains.
\newblock {\em Scientific Reports\/}~{\em 13\/}(1), 1--12.

\bibitem[\protect\citeauthoryear{Zhang and Li}{Zhang and Li}{2022}]{zhangFactorizedEstimationHighdimensional2022}
Zhang, J. and J.~Li (2022).
\newblock Factorized estimation of high-dimensional nonparametric covariance models.
\newblock {\em Scandinavian Journal of Statistics\/}~{\em 49\/}(2), 542--567.

\bibitem[\protect\citeauthoryear{Zhang, Lyu, Yu, Zhang, Cao, Chen, Chen, Zhuang, Liu, and Zhu}{Zhang et~al.}{2025}]{zhangCLASSIFFICATIONMILDCOGNITIVE2025}
Zhang, J., Y.~Lyu, X.~Yu, L.~Zhang, C.~Cao, T.~Chen, M.~Chen, Y.~Zhuang, T.~Liu, and D.~Zhu (2025).
\newblock Classification of {{Mild Cognitive Impairment Based}} on {{Dynamic Functional Connectivity Using Spatio-Temporal Transformer}}.
\newblock {\em Proceedings. IEEE International Symposium on Biomedical Imaging\/}~{\em 2025}, 1--12.

\bibitem[\protect\citeauthoryear{Zheng, Bieri, Hsiao, and Colgin}{Zheng et~al.}{2016}]{zhengSpatialSequenceCoding2016}
Zheng, C., K.~W. Bieri, Y.-T. Hsiao, and L.~L. Colgin (2016).
\newblock Spatial sequence coding differs during slow and fast gamma rhythms in the hippocampus.
\newblock {\em Neuron\/}~{\em 89\/}(2), 398--408.

\bibitem[\protect\citeauthoryear{Zink, M{\"u}ckschel, and Beste}{Zink et~al.}{2021}]{zinkRestingstateEEGDynamics2021}
Zink, N., M.~M{\"u}ckschel, and C.~Beste (2021).
\newblock Resting-state {{EEG Dynamics Reveals Differences}} in {{Network Organization}} and its {{Fluctuation}} between {{Frequency Bands}}.
\newblock {\em Neuroscience\/}~{\em 453}, 43--56.

\end{thebibliography}
\newpage
\section*{Appendix: Proofs of Theoretical Results}\label{sec:appendix_proofs}
\subsection{Preliminary Lemmas}\label{subsec:preliminary_lemmas}
\begin{lemma}\label{lem:bias-variance expansion}
    Let \(Y_{i}=\varphi(\mathbf X_{i})\) with \(\mathbb{E}\left[ Y\mid U=u \right]=m(u)\in C^{2}\), \(\mathrm{Var}(Y\mid U=u)=\sigma^{2}(u)\) are continuous. Define \(\hat m_{n}(u_{0})=\frac{\sum_{i} Y_{i} K_{h}(U_{i}-u_{0})}{\sum_{i} K_{h}(U_{i}-u_{0})}\). Under Assumptions (A1) and (A2), for interior \(u_{0}\), i.e., \([u_{0}-h,u_{0}+h]\subset(-1,1)\),  we have the following bias and variance expansions for the Nadaraya-Watson estimator \(\hat m_{n}(u_{0})\):
    \begin{align*}
        \mathbb{E}\left[ \hat m_{n}(u_{0}) \right]-m(u_{0})& =   \frac{h^{2}\mu_{2}(K)}{2}\Big[m^{\prime\prime}(u_{0})+2\frac{f_{U}^{\prime}(u_{0})}{f_{U}(u_{0})}m^{\prime}(u_{0})\Big]+o(h^{2}) \\
        \mathrm{Var}\big(\hat m_{n}(u_{0})\big)            & =  \frac{R(K)\,\sigma^{2}(u_{0})}{nh\,f_{U}(u_{0})}+o\left(\frac{1}{nh}\right).
    \end{align*}
\end{lemma}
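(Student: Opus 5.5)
The plan is to write $\hat m_{n}(u_{0})=\hat g_{n}(u_{0})/\hat f_{n}(u_{0})$, where
\[
\hat f_{n}(u_{0})=\frac{1}{n}\sum_{i}K_{h}(U_{i}-u_{0}),\qquad \hat g_{n}(u_{0})=\frac{1}{n}\sum_{i}Y_{i}K_{h}(U_{i}-u_{0}),
\]
and to linearise the ratio around its population counterpart. Set $f_{h}(u_{0})=\mathbb{E}[\hat f_{n}(u_{0})]$ and $g_{h}(u_{0})=\mathbb{E}[\hat g_{n}(u_{0})]$. The starting point is the standard decomposition
\[
\hat m_{n}-m=\frac{\hat g_{n}-m\hat f_{n}}{f_{U}}\cdot\frac{f_{U}}{\hat f_{n}} .
\]
Since the interior condition $[u_{0}-h,u_{0}+h]\subset(-1,1)$ together with (A2) gives $\hat f_{n}(u_{0})=f_{U}(u_{0})+O_{p}(h^{2}+(nh)^{-1/2})$, the factor $f_{U}/\hat f_{n}$ equals $1+o_{p}(1)$. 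Strictly, $\mathbb{E}$ of a ratio is not defined without a truncation, so I would interpret "$\mathbb{E}[\hat m_n]$" as the expectation of the leading linear term (the usual convention). Alternatively one can note that the remainder is of smaller order in $L^{1}$ on the event $\{\hat f_{n}>f_{\min}/2\}$, whose complement has exponentially small probability by Bernstein's inequality, since $K$ is bounded. This reduces the problem to the moments of the numerator $D_{n}=\hat g_{n}-m(u_{0})\hat f_{n}=\frac1n\sum_i K_h(U_i-u_0)\{Y_i-m(u_0)\}$.

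For the bias I condition on $U_{i}$ and change variables $u=u_{0}+ht$:
\[
\mathbb{E}D_{n}=\int K(t)\,[m(u_{0}+ht)-m(u_{0})]\,f_{U}(u_{0}+ht)\,dt .
\]
I then Taylor expand $m$ to second order (using $m\in C^{2}$) and $f_{U}$ to first order (second order is available by (A2)). The $O(h)$ term vanishes because $\mu_{1}(K)=0$, and symmetry kills the remaining odd terms. Collecting the $h^{2}$ terms gives
\[
h^{2}\mu_{2}(K)\Big[\tfrac12 m''f_{U}+m'f_{U}'\Big]+o(h^{2}),
\]
and dividing by $f_{U}(u_{0})$ yields the stated bias, $\frac{h^2\mu_2}{2}[m''+2m'f_U'/f_U]$. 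Compact support of $K$ on $[-1,1]$ is what makes the integral stay inside the design interval, which is exactly where the interior condition is used. The cross term from replacing $f_{U}/\hat f_{n}$ by $1$ is $\mathbb{E}[D_{n}(f_{U}-\hat f_{n})]/f_{U}^{2}$. By Cauchy--Schwarz this is $O\big((h^{2}+(nh)^{-1/2})\cdot(h^{2}+(nh)^{-1/2})\big)$, and a sharper covariance computation shows it is $O(h^{4}+1/(nh))$, hence negligible relative to $h^{2}$ under $nh\to\infty$ once it is checked against the variance order. I expect this to be the fiddly step.

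For the variance, independence gives $\mathrm{Var}(D_{n})=n^{-1}\mathrm{Var}\big(K_{h}(U-u_{0})(Y-m(u_{0}))\big)$. Its second moment is
\[
\frac1h\int K^{2}(t)\big[\sigma^{2}(u_{0}+ht)+(m(u_{0}+ht)-m(u_{0}))^{2}\big]f_{U}(u_{0}+ht)\,dt=\frac{R(K)\sigma^{2}(u_{0})f_{U}(u_{0})}{h}+o(h^{-1}),
\]
using continuity of $\sigma^{2}$ and $f_{U}$. The squared mean is $O(h^{4})$, which is negligible. Dividing by $f_{U}(u_{0})^{2}$ gives $R(K)\sigma^{2}/(nhf_{U})$. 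The contribution of the ratio remainder to the variance is $o(1/(nh))$ by the same $O_{p}$ rates. The main obstacle is therefore not the Taylor algebra but the rigorous handling of the random denominator. I would address it first via the truncation and exponential-inequality device described above, and if that becomes awkward I would state the expansions for the asymptotic (linearised) bias and variance, which is how the result is used later in the paper.
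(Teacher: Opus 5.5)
Your core argument is the paper's: write $\hat m_n=\hat g_n/\hat f_n$, substitute $u=u_0+ht$, Taylor-expand, use $\mu_1(K)=0$ to remove the $O(h)$ term, and obtain the variance by linearising the ratio. The paper proves only this linearised version: its bias is $\mathbb{E}\hat g_n/\mathbb{E}\hat f_n-m(u_0)$ and its variance comes from the delta method. Your stated fallback is therefore exactly what the paper establishes.

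Centring at $m(u_0)$ through $D_n$ is a small but real improvement. The bias then comes from a single expansion that needs only $f_U\in C^1$. The variance also produces $\sigma^2(u_0)$ directly. By contrast, the paper's intermediate claim $\mathrm{Var}(YK_h(U-u_0))=h^{-1}\sigma^2(u_0)f_U(u_0)R(K)+o(h^{-1})$ is actually $h^{-1}\{\sigma^2(u_0)+m(u_0)^2\}f_U(u_0)R(K)+o(h^{-1})$. It is rescued only by the covariance term inside the delta method.

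One step fails as written, in your optional treatment of the random denominator. Your ``sharper'' bound $O(h^4+1/(nh))$ for $\mathbb{E}[D_n(f_U-\hat f_n)]$ is the same as the Cauchy--Schwarz bound. It does not give $o(h^2)$ under $nh\to\infty$ alone; that requires $nh^3\to\infty$, and it fails for $h=n^{-1/2}$.

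The missing idea is to condition on the design. Since $Y_i-m(U_i)$ is conditionally centred, it contributes nothing to $\mathrm{Cov}(D_n,\hat f_n)$. Only $K_h(U-u_0)\{m(U)-m(u_0)\}$ contributes. Because $m(U)-m(u_0)=O(h)$ on the kernel window and $\int tK^2(t)\,dt=0$, that covariance is $O(h/n)$. The cross term is then $O(h^4+h/n)=o(h^2)$.

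Similarly, your Bernstein tail $e^{-cnh}$ is $o(h^2)$ only when $nh/\log(1/h)\to\infty$. If you simply add $nh^3\to\infty$, your crude bounds suffice for the bias, including a H\"older bound on the quadratic remainder of $f_U/\hat f_n$. That condition holds for the $h\asymp n^{-1/5}$ bandwidths the paper uses later.
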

\begin{proof}
    Write \(\hat g_{n}(u_{0})=\frac{1}{n}\sum_{i} Y_{i} K_{h}(U_{i}-u_{0})\), \(\hat f_{n}(u_{0})=\frac{1}{n}\sum_{i} K_{h}(U_{i}-u_{0})\). Substituting \(t=(u-u_{0})/h\), we have
    \begin{equation*}
        \mathbb{E}\left[ \hat g_{n}(u_{0}) \right]=\int m(u_{0}+ht)f_{U}(u_{0}+ht)K(t)\,\mathrm{d}t.
    \end{equation*}
    Take the Taylor-expanding of \(m(u_{0}+ht)f_{U}(u_{0}+ht)\) to the second order and use the facts: \(\int K=1,\int tK=0,\int t^{2}K=\mu_{2}(K)\), we have
    \begin{equation*}
        \mathbb{E}\left[ \hat g_{n}(u_{0}) \right]=m(u_{0})f_{U}(u_{0})+\frac{h^{2}\mu_{2}(K)}{2}\big[m f_{U}\big]^{\prime\prime}(u_{0})+o(h^{2}),
    \end{equation*}
    and similarly \(\mathbb{E}\left[ \hat f_{n}(u_{0}) \right]=f_{U}(u_{0})+\frac{h^{2}\mu_{2}(K)}{2}f_{U}^{\prime\prime}(u_{0})+o(h^{2})\). Forming the ratio and using \([mf_{U}]^{\prime\prime}=m^{\prime\prime}f_{U}+2m^{\prime}f_{U}^{\prime}+mf_{U}^{\prime\prime}\) gives the bias formula. For the variance, \(\mathrm{Var}(YK_{h}(U-u_{0}))=\frac{1}{h}\sigma^{2}(u_{0})f_{U}(u_{0})R(K)+o(1/h)\) by the same substitution, and the delta method for the ratio \(\hat g_{n}/\hat f_{n}\) (with \(\hat f_{n}\to f_{U}(u_{0})>0\)) yields the stated variance.
\end{proof}
\begin{lemma}\label{lem:uniform-consistency}
    \textbf{Uniform Consistency}. Under Assumptions (A1), (A2), (A4) and (A6), for any compact set \(\mathcal{U}_{\delta}=[-1+\delta,1-\delta]\), we have
    \begin{align*}
        \sup_{u_{0}\in\mathcal{U}_{\delta}}|A_{n}(u_{0})-A(u_{0})& | = O_{p}\!\left(\sqrt{\frac{\log(1/h)}{nh}}+h^{2}\right),  \\
        \sup_{u_{0}\in\mathcal{U}_{\delta}}|B_{n}(u_{0})-B(u_{0})& | =  O_{p}\!\left(\sqrt{\frac{\log(1/h)}{nh}}+h^{2}\right).
    \end{align*}
\end{lemma}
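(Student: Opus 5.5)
The plan is the standard ratio decomposition for Nadaraya--Watson estimators. Write \(Y\) for either \(X_{1}^{2}+X_{2}^{2}\) or \(X_{1}X_{2}\), with \(m(u)=\mathbb{E}[Y\mid U=u]\), which equals \(A(u)\equiv 2\) or \(B(u)=\rho(u)\). Set
\[
\hat g_{n}(u_{0})=\frac{1}{n}\sum_{i}Y_{i}K_{h}(U_{i}-u_{0}),\qquad \hat f_{n}(u_{0})=\frac{1}{n}\sum_{i}K_{h}(U_{i}-u_{0}),
\]
so that \(\hat m_{n}=\hat g_{n}/\hat f_{n}\). Then
\[
\hat m_{n}(u_{0})-m(u_{0})=\frac{\hat g_{n}(u_{0})-m(u_{0})\hat f_{n}(u_{0})}{\hat f_{n}(u_{0})},
\]
and it suffices to prove three things uniformly on \(\mathcal{U}_{\delta}\):
\begin{enumerate}
\item[(i)] \(\inf_{\mathcal{U}_{\delta}}\hat f_{n}\geq f_{\min}/2\) with probability tending to one;
\item[(ii)] the centred numerator \(\hat g_{n}-m\hat f_{n}-\mathbb{E}[\hat g_{n}-m\hat f_{n}]\) is \(O_{p}(\sqrt{\log(1/h)/(nh)})\) uniformly;
\item[(iii)] the deterministic part \(\sup_{\mathcal{U}_{\delta}}|\mathbb{E}[\hat g_{n}(u_{0})]-m(u_{0})\mathbb{E}[\hat f_{n}(u_{0})]|\) is \(O(h^{2})\).
\end{enumerate}
Once \(h<\delta\), every window \([u_{0}-h,u_{0}+h]\) with \(u_{0}\in\mathcal{U}_{\delta}\) lies inside \((-1,1)\), so no truncated kernel moments appear.

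For (iii), substitute \(t=(u-u_{0})/h\) to get \(\int\{m(u_{0}+ht)-m(u_{0})\}f_{U}(u_{0}+ht)K(t)\,\mathrm{d}t\). Taylor-expand to second order and use \(\mu_{1}(K)=0\) together with the bounded second derivatives of \(f_{U}\) and \(\rho\) from (A2) and (A3); the remainder is bounded by \(Ch^{2}\) with \(C\) independent of \(u_{0}\). The same argument gives \(\sup|\mathbb{E}\hat f_{n}-f_{U}|=O(h^{2})\). For \(A\) the bias is identically zero, since \(m\equiv 2\).

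For (ii), I would follow the classical route of \citet{mackWeakStrongUniform1982} and \citet{silvermanWeakStrongUniform1978}. The main obstacle is that \(Y\) is unbounded, being a Gaussian quadratic, so Bernstein's inequality cannot be applied directly.
\begin{itemize}
\item \emph{Truncation.} Split \(Y=Y\mathbf{1}\{|Y|\leq M_{n}\}+Y\mathbf{1}\{|Y|>M_{n}\}\) with \(M_{n}\asymp\log n\). Gaussian tails, uniform in \(u\) by (A5), make \(\max_{i}|Y_{i}|>M_{n}\) an event of vanishing probability, and the truncated-mean shift is \(o(\sqrt{\log(1/h)/(nh)})\). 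If one wants to rely only on the fourth-moment condition in (A5), take \(M_{n}=n^{1/4}\) instead; then (A4) must be strengthened slightly.
\item \emph{Grid.} Cover \(\mathcal{U}_{\delta}\) by \(N_{n}\asymp h^{-2}\) equally spaced points.
\item \emph{Bernstein at grid points.} At each grid point, the summands \(Y_{i}K_{h}(U_{i}-u_{0})\) are bounded by \(M_{n}\|K\|_{\infty}/h\) and have variance \(O(1/h)\). Bernstein's inequality therefore gives tail probability \(\exp(-c\,\epsilon^{2}nh)\) at level \(\epsilon\sqrt{\log(1/h)/(nh)}\), valid provided \(M_{n}\sqrt{\log(1/h)/(nh)}\to 0\). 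This holds under (A4) for logarithmic \(M_{n}\), possibly after the mild strengthening noted above.
\item \emph{Union bound.} Over the \(N_{n}\) grid points the total is \(h^{-2}\exp(-c\,\epsilon^{2}\log(1/h))\to 0\) for large \(\epsilon\).
\item \emph{Between grid points.} The Lipschitz property of \(K\) from (A1) gives
\[
|K_{h}(U-u)-K_{h}(U-u')|\leq L|u-u'|/h^{2},
\]
so the oscillation over a grid cell of width \(h^{2}\) is \(O_{p}(M_{n}h^{2}/h^{2}\cdot h)\). A finer grid, \(N_{n}\asymp h^{-3}\), changes only constants in the logarithm, so this term is negligible.
\end{itemize}
The same argument with \(Y\equiv 1\) yields \(\sup|\hat f_{n}-\mathbb{E}\hat f_{n}|=O_{p}(\sqrt{\log(1/h)/(nh)})\). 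Combined with (iii) and \(f_{U}\geq f_{\min}\), this gives (i).

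Finally, combine: the numerator is \(O_{p}(\sqrt{\log(1/h)/(nh)}+h^{2})\) uniformly, and the denominator is bounded below by \(f_{\min}/2\) with probability tending to one. Since \(A(u)\equiv 2\) and \(B(u)=\rho(u)\) by \eqref{eq:an_bn}, both displayed rates follow. The independence of the observations from (A6) is used in Bernstein's inequality.
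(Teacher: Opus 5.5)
Your plan follows the paper's own route: the ratio decomposition with an $O(h^{2})$ bias term, a grid over $\mathcal{U}_{\delta}$ with Bernstein's inequality and a union bound, Lipschitz interpolation between grid points, and a lower bound on $\hat f_{n}$. You carry it out more carefully than the paper's sketch. The paper uses a mesh of order $h$, a Lipschitz constant $h^{-1}$ for $u\mapsto K_{h}(U-u)$, and an informal ``effective sample size $nh$'' step. Deriving the $nh$ scaling from the $O(1/h)$ variance of the summands, as you do, is the right way to get it.

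One step, however, does not close under (A4) as you claim. Truncating at $M_{n}\asymp\log n$ makes the Bernstein range parameter $M_{n}\|K\|_{\infty}/h$. The exponent then stays of order $\epsilon^{2}\log(1/h)$ only if $M_{n}\sqrt{\log(1/h)/(nh)}\to0$, i.e.\ $nh/\{(\log n)^{2}\log(1/h)\}\to\infty$. (A4) gives only $nh/\log(1/h)\to\infty$; take $nh=\log(1/h)\log\log n$ for a counterexample. So your sentence that this ``holds under (A4) for logarithmic $M_{n}$'' is not right, and the $M_{n}=n^{1/4}$ variant needs $n^{1/2}h/\log(1/h)\to\infty$.

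The fix is to drop truncation altogether. Since the marginals are $N(0,1)$, $\mathbb{E}[|Y|^{k}\mid U=u]\leq 4^{k}k!$ uniformly in $u$ for both $Y=X_{1}^{2}+X_{2}^{2}$ and $Y=X_{1}X_{2}$. Hence $\mathbb{E}|YK_{h}(U-u_{0})|^{k}\leq k!\,C^{k}h^{-(k-1)}$, and the moment (Cram\'er) form of Bernstein's inequality applies with variance proxy and scale both of order $1/h$. This yields the tail $2\exp\{-c\,\epsilon^{2}\log(1/h)/(1+\epsilon r_{n})\}$ with $r_{n}=\sqrt{\log(1/h)/(nh)}\to0$, under (A4) alone. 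Your ``$\exp(-c\,\epsilon^{2}nh)$'' is a slip for this.

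Your interpolation step has a related problem. On a mesh of width $h^{3}$ your bound is $O(M_{n}h^{2})=O(h^{2}\log n)$, which exceeds $h^{2}$, and it also presupposes a uniform bound on local counts. With mesh $h^{5}$ the crude bound $L h^{3}n^{-1}\sum_{i}|Y_{i}|=O_{p}(h^{3})$ suffices, and the $O(h^{-5})$ union bound is still beaten for large $\epsilon$.
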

\begin{proof}
    This proof uses the classical uniform-consistency argument for kernel regression \citep{silvermanWeakStrongUniform1978,mackWeakStrongUniform1982}. We first decompose the error and then control each component.

    \textbf{Step 1: Decomposition.}
    We write the error in \(A_{n}(u_{0})\) as a ratio:
    \begin{equation*}
        A_{n}(u_{0})-A(u_{0}) = \frac{\hat g_{n}^A(u_{0})-\mathbb{E}[\hat g_{n}^A(u_{0})]}{\hat f_{n}(u_{0})} + \frac{\mathbb{E}[\hat g_{n}^A(u_{0})]-A(u_{0})\mathbb{E}[\hat f_{n}(u_{0})]}{\hat f_{n}(u_{0})} + \frac{A(u_{0})[\mathbb{E}[\hat f_{n}(u_{0})]-\hat f_{n}(u_{0})]}{\hat f_{n}(u_{0})},
    \end{equation*}
    where \(\hat g_{n}^A(u_{0})=\frac{1}{n}\sum_{i} (x_{i1}^{2}+x_{i2}^{2})K_{h}(U_{i}-u_{0})\) and \(\hat f_{n}(u_{0})=\frac{1}{n}\sum_{i} K_{h}(U_{i}-u_{0})\). The second term (bias) is \(O(h^{2})\) by \Cref{lem:bias-variance expansion}. For the first and third terms (stochastic terms), we need uniform bounds over \(u_{0} \in \mathcal{U}_{\delta}\).

    \textbf{Step 2: Uniform Stochastic Bound via Empirical Process Theory.}
    Define the empirical process indexed by \(u_{0}\):
    \begin{equation*}
        V_{n}(u_{0}) \coloneqq \frac{1}{n}\sum_{i} \big[w_{i}(u_{0})Y_{i} - \mathbb{E}[w_{i}(u_{0})Y_{i}]\big],
    \end{equation*}
    where \(w_{i}(u_{0})=K_{h}(U_{i}-u_{0})\) and \(Y_{i} = (x_{i1}^{2}+x_{i2}^{2})\) (or 1 for the denominator). This is a centred empirical process indexed by the location parameter \(u_{0} \in \mathcal{U}_{\delta}\).

    The key observation is that the kernel function \(K_{h}(\cdot)\) is Lipschitz continuous with respect to \(u_{0}\): \(|K_{h}(u_{i}-u_{0})-K_{h}(u_{i}-u_{0}^{\prime})| \leq L_K h^{-1}|u_{0}-u_{0}^{\prime}|\) for some constant \(L_K\). This Lipschitz structure, combined with the bounded variation property of the kernel, allows us to apply empirical process theory.

    \textbf{Step 3: Application of Chaining/Bracketing.}
    We discretise \(\mathcal{U}_{\delta}\) into a finite net with mesh size proportional to \(h\). For each grid point \(u_{0}^{(j)}\), by Bernstein's inequality applied to the sub-exponential random variables (bounded after truncation by Assumption (A5)), we have
    \begin{equation*}
        \mathbb{P}\Big(|V_{n}(u_{0}^{(j)})| > t\Big) \leq 2\exp\left(-cnt^{2}/(1+t)\right)
    \end{equation*}
    for some constant \(c>0\) depending on the sub-exponential norm. Union bound over the \(O(h^{-1})\) grid points gives
    \begin{equation*}
        \sup_{u_{0} \in \mathcal{U}_{\delta}}|V_{n}(u_{0})| = O_{p}\!\left(\sqrt{\frac{\log(1/h)}{n}}\right).
    \end{equation*}
    Since we are working with sample size \(n\) and bandwidth \(h\), the effective number of observations in the local window is \(nh\), so the rate becomes \(O_{p}\!\left(\sqrt{\frac{\log(1/h)}{nh}}\right)\).

    \textbf{Step 4: Control of the Denominator.}
    By the Law of Large Numbers and the Lipschitz property, \(\hat f_{n}(u_{0}) \to f_{U}(u_{0})\) uniformly over \(\mathcal{U}_{\delta}\). Since \(\inf_{u_{0}\in\mathcal{U}_{\delta}}f_{U}(u_{0})\geq f_{\min}>0\) (the density is bounded away from zero on the interior compact set by Assumption (A1)), we have \(\hat f_{n}(u_{0}) \geq f_{\min}/2\) with high probability for large \(n\). Thus, the ratio \(1/\hat f_{n}(u_{0})\) is uniformly bounded.

    \textbf{Step 5: Combining Terms.}
    Combining the bias term (\(O(h^{2})\)) with the stochastic terms \( \left(O_{p}\!\left(\sqrt{\frac{\log(1/h)}{nh}}\right)\right) \) via the uniform bound on \(1/\hat f_{n}(u_{0})\) yields
    \begin{equation*}
        \sup_{u_{0}\in\mathcal{U}_{\delta}}|A_{n}(u_{0})-A(u_{0})| = O_{p}\!\left(\sqrt{\frac{\log(1/h)}{nh}}+h^{2}\right).
    \end{equation*}
    The same argument applies to \(B_{n}(u_{0})\) which completes the proof.
\end{proof}
\begin{lemma}\label{lem:continuity_polynomial_roots}
    \textbf{Continuity of polynomial roots --- Rouché's  Theorem}. Let \(p(z)=z^{3}-B_{0}z^{2}+(A_{0}-1)z-B_{0}\) have a simple root \(z_{0}\). If \(p_{n}(z)=z^{3}-B_{n} z^{2}+(A_{n}-1)z-B_{n}\) with \(A_{n}\to A_{0}\), \(B_{n}\to B_{0}\), then for every \(\varepsilon>0\) small enough that the disk \(\overline D(z_{0},\varepsilon)\) contains no other root of \(p\), there is \(N(\varepsilon)\) such that for \(n\geq N(\varepsilon)\), \(p_{n}\) has \textbf{exactly one} root \(z_{n}\in D(z_{0},\varepsilon)\), and \(z_{n}\to z_{0}\).

\end{lemma}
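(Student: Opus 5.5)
The plan is to apply Rouché's theorem on the circle \(|z-z_{0}|=\varepsilon\), treating \(p_{n}\) as a small perturbation of \(p\). Let \(\varepsilon>0\) be such that \(\overline D(z_{0},\varepsilon)\) contains no root of \(p\) other than \(z_{0}\). Such \(\varepsilon\) exist because \(p\) has only finitely many roots. Since \(z_{0}\) is simple, \(p\) has exactly one zero in \(D(z_{0},\varepsilon)\), counted with multiplicity. Also \(p\) does not vanish on the compact circle \(C_{\varepsilon}=\{|z-z_{0}|=\varepsilon\}\). Hence
\[
m_{\varepsilon}\coloneqq\min_{z\in C_{\varepsilon}}|p(z)|>0 .
\]

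Next I bound the perturbation explicitly. For every \(z\),
\[
p_{n}(z)-p(z)=-(B_{n}-B_{0})(z^{2}+1)+(A_{n}-A_{0})z .
\]
On \(C_{\varepsilon}\) we have \(|z|\le |z_{0}|+\varepsilon\eqqcolon M\). Therefore
\[
\sup_{z\in C_{\varepsilon}}|p_{n}(z)-p(z)|\le |B_{n}-B_{0}|(M^{2}+1)+|A_{n}-A_{0}|M .
\]
The right-hand side tends to zero because \(A_{n}\to A_{0}\) and \(B_{n}\to B_{0}\). So there is \(N(\varepsilon)\) such that for \(n\ge N(\varepsilon)\) it is strictly less than \(m_{\varepsilon}\).

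For such \(n\), Rouché's theorem \citep{ahlforsComplexAnalysisIntroduction2007} applies on \(C_{\varepsilon}\), since \(|p_{n}-p|<|p|\) there. It shows that \(p_{n}\) and \(p\) have the same number of zeros in \(D(z_{0},\varepsilon)\), counted with multiplicity, namely exactly one. Call this zero \(z_{n}\).

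For convergence, fix any \(\varepsilon'\in(0,\varepsilon)\). The disc \(\overline D(z_{0},\varepsilon')\) also contains no other root of \(p\), so the same argument gives \(N(\varepsilon')\). For \(n\ge \max\{N(\varepsilon),N(\varepsilon')\}\), \(p_{n}\) has a unique root in \(D(z_{0},\varepsilon')\subset D(z_{0},\varepsilon)\). By uniqueness in the larger disc, this root must be \(z_{n}\). Hence \(|z_{n}-z_{0}|<\varepsilon'\) for all large \(n\), which gives \(z_{n}\to z_{0}\).

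A final remark will be useful later. When the coefficients are real and \(z_{0}\) is real, the conjugate \(\overline{z_{n}}\) is also a root of \(p_{n}\) lying in the same disc. Uniqueness then forces \(z_{n}\) to be real, which is what the statistical application requires.

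The main point needing care is the positivity of \(m_{\varepsilon}\). It depends on choosing \(\varepsilon\) so that no root of \(p\) lies on the circle itself. Beyond that, the argument is a routine application of Rouché's theorem.
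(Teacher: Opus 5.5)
Your proof is correct and follows the paper's argument: fix the circle $|z-z_{0}|=\varepsilon$, use $m_{\varepsilon}=\min|p|>0$ there, make $\sup|p_{n}-p|<m_{\varepsilon}$ for large $n$, and apply Rouch\'e's theorem. The differences are minor and both routes are valid: you make the uniform bound explicit via $p_{n}(z)-p(z)=-(B_{n}-B_{0})(z^{2}+1)+(A_{n}-A_{0})z$ and get $z_{n}\to z_{0}$ by rerunning the argument on smaller discs $D(z_{0},\varepsilon')$, whereas the paper invokes uniform convergence on the compact circle and a subsequence/compactness argument.
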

\begin{proof}
    The proof relies on the persistence of roots under small perturbations, established via Rouché's theorem.

    Given that \(z_{0}\) is a simple root of \(p\), we have \(p(z_{0})=0\) and \(p^{\prime}(z_{0}) \neq 0\). Choose \(\varepsilon>0\) small enough that the closed disk \(\overline{D}(z_{0},\varepsilon)\) contains \(z_{0}\) as the unique root of \(p\) in its interior. (This is possible by the argument principle: the number of zeros in \(D(z_{0},\varepsilon)\) counted with multiplicity is \(\frac{1}{2\pi i}\oint_{|z-z_{0}|=\varepsilon}\frac{p^{\prime}(z)}{p(z)}\,dz\), which equals 1 near a simple root, so we can always find a small neighbourhood containing only that root.)

    On the boundary circle \(\Gamma\coloneqq\{z:|z-z_{0}|=\varepsilon\}\), the polynomial \(p(z)\) is continuous and non-zero (since \(z_{0}\) is the only root inside). Therefore, the minimum
    \begin{equation*}
        m \coloneqq \min_{|z-z_{0}|=\varepsilon}|p(z)| > 0.
    \end{equation*}
    This ensures a positive lower bound on \(|p|\) when restricted to the circle.

    Since the coefficients satisfy \(A_{n}\to A_{0}\) and \(B_{n}\to B_{0}\), the polynomials \(p_{n}(z)=z^{3}-B_{n} z^{2}+(A_{n}-1)z-B_{n}\) converge pointwise to \(p(z)\). On the compact set \(\Gamma\), this convergence is uniform:
    \begin{equation*}
        \sup_{z \in \Gamma}|p_{n}(z)-p(z)| \to 0 \quad \text{as } n\to\infty.
    \end{equation*}
    Therefore, for \(n\) sufficiently large (say \(n \geq N(\varepsilon)\)),
    \begin{equation*}
        \sup_{z \in \Gamma}|p_{n}(z)-p(z)| < m.
    \end{equation*}

    Rouché's theorem \citep{ahlforsComplexAnalysisIntroduction2007} states: If \(f\) and \(g\) are holomorphic inside and on a simple closed contour \(\Gamma\), and if \(|f(z)|>|g(z)|\) on \(\Gamma\), then \(f\) and \(f+g\) have the same number of zeros (counted with multiplicity) inside \(\Gamma\).

    In our setting, set \(f(z)=p(z)\) and \(g(z)=p_{n}(z)-p(z)\). The condition \(|p_{n}(z)-p(z)|<|p(z)|\) on \(\Gamma\) satisfies the hypothesis of Rouché's theorem. Thus, \(p(z)=f(z)\) and \(p_{n}(z)=f(z)+g(z)\) have the same number of zeros inside \(D(z_{0},\varepsilon)\). Since \(p\) has exactly one zero (the simple root \(z_{0}\)) in this disk, \(p_{n}\) also has exactly one zero, say \(z_{n}\), in \(D(z_{0},\varepsilon)\).

    For each \(n\geq N(\varepsilon)\), the root \(z_{n}\) lies in the compact disk \(\overline{D}(z_{0},\varepsilon)\). Therefore, any subsequence of \(\{z_{n}\}\) has a convergent sub-subsequence. If \(z_{n_k}\to z^{*}\) along a sub-sequence, then by continuity of \(p_{n}\), we have \(0=\lim_{k\to\infty}p_{n_k}(z_{n_k})=p_{0}(z^{*})=p(z^{*})\), so \(z^{*}\) is a root of \(p\) in \(\overline{D}(z_{0},\varepsilon)\). By uniqueness of the root \(z_{0}\) in this region, \(z^{*}=z_{0}\). Since any convergent sub-subsequence of \(\{z_{n}\}\) converges to \(z_{0}\), we conclude \(z_{n}\to z_{0}\) as \(n\to\infty\).
\end{proof}

\begin{lemma}\label{lem:local_linear_bias_variance}
    Under Assumptions~(A1)--(A6), for \(u_{0}=-1+ch\) with fixed \(c\geq 0\) (including \(c\to\infty\), the interior case),
    \begin{equation}\label{eq:local_linear_bias}
        \mathbb{E}\big[B_{n}^{\mathrm{CL}}(u_{0})\big]-\rho(u_{0})=\frac{h^{2}}{2}\rho^{\prime\prime}(u_{0})\,\mu_{2}(K^{*},c)+o(h^{2}),
    \end{equation}
    \begin{equation}\label{eq:local_linear_variance}
        \mathrm{Var}\big(B_{n}^{\mathrm{CL}}(u_{0})\big)=\frac{1+\rho(u_{0})^{2}}{nh\,f_{U}(u_{0})}\,R(K^{*},c)+o\!\left(\frac{1}{nh}\right),
    \end{equation}
    where
    \begin{equation}\label{eq:boundary_kernel_constants}
        \mu_{2}(K^{*},c)=\frac{S_{2}^{2}(c)-S_{1}(c)S_{3}(c)}{S_{0}(c)S_{2}(c)-S_{1}(c)^{2}},\qquad
        R(K^{*},c)=\int_{-c}^{1} K^{*}(t;c)^{2}\,\mathrm{d}t,
    \end{equation}
    with
    \begin{equation}\label{eq:boundary_kernel_roughness}
        R(K^{*},c)=\frac{S_{2}(c)^{2}R_{0}(c)-2S_{1}(c)S_{2}(c)R_{1}(c)+S_{1}(c)^{2}R_{2}(c)}{\big(S_{0}(c)S_{2}(c)-S_{1}(c)^{2}\big)^{2}},\qquad
        R_{j}(K,c)=\int_{-c}^{1} t^{j} K(t)^{2}\,\mathrm{d}t.
    \end{equation}
\end{lemma}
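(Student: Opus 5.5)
We would follow the standard route for local-linear smoothers: condition on the design \(U_{1},\dots,U_{n}\), use the explicit weights \eqref{eq:equivalent_kernel_weights}, and replace the random moments \(\hat S_{j}(u_{0})\) by their deterministic limits. The first step is the expansion
\begin{equation*}
\hat S_{j}(u_{0})=h^{j}f_{U}(-1^{+})S_{j}(c)\{1+o_{p}(1)\},\qquad j=0,1,2,
\end{equation*}
valid for \(u_{0}=-1+ch\). To obtain it, compute \(\mathbb{E}\hat S_{j}(u_{0})=h^{j}\int_{-c}^{1}t^{j}K(t)f_{U}(u_{0}+ht)\,\mathrm{d}t\), using continuity of \(f_{U}\) at \(-1\) from (A2). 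Then bound \(\mathrm{Var}\,\hat S_{j}=O(h^{2j}/(nh))\), which is small by (A4). For \(c\to\infty\) the same computation reproduces the interior constants, since \(S_{j}(c)\to\mu_{j}(K)\). Substituting these limits into \eqref{eq:local_linear_explicit} shows that, up to a factor \(1+o_{p}(1)\),
\begin{equation*}
\ell_{i}(u_{0})=\frac{1}{f_{U}(u_{0})h}K^{*}\Big(\frac{U_{i}-u_{0}}{h};c\Big),
\end{equation*}
where \(K^{*}\) is the boundary kernel \eqref{eq:boundary_kernel}. The factor \(h^{j}\) cancels between numerator and denominator.

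For the bias \eqref{eq:local_linear_bias}, use the exact reproducing identities \(\frac1n\sum_{i}\ell_{i}(u_{0})=1\) and \(\frac1n\sum_{i}\ell_{i}(u_{0})(U_{i}-u_{0})=0\), which hold algebraically for any sample. They give
\begin{equation*}
\mathbb{E}\big[B_{n}^{\mathrm{CL}}(u_{0})\mid U\big]-\rho(u_{0})=\frac1n\sum_{i}\ell_{i}(u_{0})\big[\rho(U_{i})-\rho(u_{0})-\rho^{\prime}(u_{0})(U_{i}-u_{0})\big].
\end{equation*}
A second-order Taylor expansion under (A3) turns the bracket into \(\tfrac12\rho^{\prime\prime}(u_{0})(U_{i}-u_{0})^{2}+o(h^{2})\) on the kernel support. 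The leading term is then \(\tfrac{h^{2}}{2}\rho^{\prime\prime}(u_{0})\cdot\frac{\hat S_{2}^{2}-\hat S_{1}\hat S_{3}}{h^{2}(\hat S_{0}\hat S_{2}-\hat S_{1}^{2})}\). Plugging in the moment limits (with \(\hat S_{3}\) handled as above) yields exactly \(\mu_{2}(K^{*},c)\) as in \eqref{eq:boundary_kernel_constants}.

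For the variance \eqref{eq:local_linear_variance}, first compute the conditional moment \(\mathrm{Var}(X_{1}X_{2}\mid U=u)=\mathbb{E}[X_{1}^{2}X_{2}^{2}\mid U=u]-\rho(u)^{2}=(1+2\rho^{2})-\rho^{2}=1+\rho(u)^{2}\), using Isserlis' theorem for the Gaussian law. This conditional variance is continuous in \(u\). Then
\begin{equation*}
\mathrm{Var}(B_{n}^{\mathrm{CL}}\mid U)=n^{-2}\sum_{i}\ell_{i}^{2}(1+\rho(U_{i})^{2}).
\end{equation*}
Replacing \(\ell_{i}\) by its kernel approximation and the resulting sum by its expectation gives \(\frac{1+\rho(u_{0})^{2}}{nhf_{U}(u_{0})}\int K^{*}(t;c)^{2}\,\mathrm{d}t\). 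Expanding the square of \((S_{2}-S_{1}t)K(t)\) gives the closed form \eqref{eq:boundary_kernel_roughness} in terms of the \(R_{j}\).

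The main obstacle is passing from conditional to unconditional moments, since the denominator \(\hat S_{0}\hat S_{2}-\hat S_{1}^{2}\) is random and could be near zero on a small event. We would handle this by showing it is bounded below by \(\tfrac12 h^{2}f_{\min}^{2}(S_{0}S_{2}-S_{1}^{2})>0\) with probability tending to one; positivity follows from Cauchy--Schwarz, since \(K\) is not concentrated at a point. On the complementary event we would truncate, or interpret \(\mathbb{E}\) and \(\mathrm{Var}\) as conditional or asymptotic moments, which is the usual convention in Fan and Gijbels. The \(o(h^{2})\) and \(o(1/(nh))\) remainders then come from the \(o_{p}(1)\) relative errors in \(\hat S_{j}\) together with the moment bound (A5).
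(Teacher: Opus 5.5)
Your proposal is correct and follows the same route as the paper's sketch. The linear term is removed by the local-linear normal equations (your exact identities $\frac1n\sum_i\ell_i(u_0)=1$ and $\frac1n\sum_i\ell_i(u_0)(U_i-u_0)=0$), and the variance comes from $n^{-2}\sum_i\ell_i(u_0)^2\bigl(1+\rho(U_i)^2\bigr)$; you carry this out more carefully, and your caveat (left implicit in the paper) that $\mathbb{E}$ and $\mathrm{Var}$ must be read conditionally on $U_1,\dots,U_n$ is right, since the unconditional moments need not exist.

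One small fix: state the approximations additively, $h^{-j}\hat S_j(u_0)=f_U(-1^+)S_j(c)+o_p(1)$ and $h\,\ell_i(u_0)=f_U(u_0)^{-1}K^{*}\bigl((U_i-u_0)/h;c\bigr)+o_p(1)$ uniformly in $i$. The multiplicative form $\{1+o_p(1)\}$ is meaningless where the limit vanishes, e.g.\ $S_1(c)=0$ for $c\ge 1$ (which includes the interior case), or at the sign change of $K^{*}(\cdot;c)$.
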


\begin{proof}
    This follows from the standard theory of local polynomial regression~\citep{fanDesignadaptiveNonparametricRegression1992,fanLocalPolynomialModelling1996}, adapted to the truncated support. We sketch the argument. Expand \(\rho(u_{0}+ht)=\rho(u_{0})+ht\rho^{\prime}(u_{0})+\frac{h^{2}t^{2}}{2}\rho^{\prime\prime}(u_{0})+o(h^{2})\) and substitute into the weighted least squares normal equations derived from~\eqref{eq:local_linear_wls}. The key observation is that the linear term \(ht\rho^{\prime}(u_{0})\) is eliminated by the zero-first-moment condition in~\eqref{eq:boundary_kernel_moments}, leaving only the \(O(h^{2})\) curvature term. This produces~\eqref{eq:local_linear_bias}. The variance expansion~\eqref{eq:local_linear_variance} follows from
    \begin{equation*}
        \mathrm{Var}(\hat{\rho}_{0})=\frac{1}{n^{2}}\sum_{i=1}^{n} \ell_{i}(u_{0})^{2}\,\mathrm{Var}(x_{i1}x_{i2}\mid U_{i}),
    \end{equation*}
    where \(\mathrm{Var}(X_{1}X_{2}\mid U=u_{0})=1+\rho(u_{0})^{2}\) by the fourth-moment formula for bivariate normals. Standard kernel-regression arguments then yield~\eqref{eq:local_linear_variance}.
\end{proof}
\subsection{Proof}\label{subsec:proof}
\paragraph{The proof of~\Cref{thm:uniform_consistency_an_bn}}
\begin{proof}
    We establish uniform consistency in three steps: first, we analyse the population cubic equation, then we transfer the result to the empirical cubic via continuity of roots, and finally we identify the estimator with the unique real root.

    \textbf{Step 1: The population cubic has a unique simple real root.}
    Substituting the population values \(A(u_{0})=2\) and \(B(u_{0})=\rho(u_{0})=:\rho_{0}\) from~\eqref{eq:an_bn} into the cubic equation, we obtain
    \begin{equation}\label{eq:population_cubic_factorization}
        \rho^{3}-\rho_{0}\rho^{2}+\rho-\rho_{0}=\rho(\rho^{2}+1)-\rho_{0}(\rho^{2}+1)=(\rho-\rho_{0})(\rho^{2}+1).
    \end{equation}
    Hence, the population cubic factors exactly with roots \(\{\rho_{0}, i,-i\}\): a unique real root \(\rho_{0}=\rho(u_{0})\in(-1,1)\) by Assumption (A3), and a complex-conjugate pair at distance exactly \(1\) from the real axis, independent of \(u_{0}\). The root \(\rho_{0}\) is simple since
    \begin{equation*}
        \frac{\partial}{\partial\rho}\big[(\rho-\rho_{0})(\rho^{2}+1)\big]\bigg|_{\rho=\rho_{0}}=\rho_{0}^{2}+1\geq 1>0.
    \end{equation*}

    \textbf{Step 2: Transfer to the empirical cubic via root continuity.}
    By~\Cref{thm:uniform_consistency_an_bn}, \((A_{n}(u_{0}),B_{n}(u_{0}))\xrightarrow{p}(2,\rho_{0})\) uniformly over \(u_{0}\in\mathcal{U}_{\delta}\). Define the empirical cubic polynomial
    \begin{equation*}
        p_{n}(\rho;u_{0})=\rho^{3}-B_{n}(u_{0})\rho^{2}+\big(A_{n}(u_{0})-1\big)\rho-B_{n}(u_{0}),
    \end{equation*}
    and the population cubic \(p(\rho;u_{0})=\rho^{3}-\rho_{0}\rho^{2}+\rho-\rho_{0}\). We apply~\Cref{lem:continuity_polynomial_roots} at each of the three population roots:
    \begin{itemize}
        \item At \(z_{0}=\rho_{0}\) (real, simple): with probability tending to \(1\), \(p_{n}(\cdot;u_{0})\) has exactly one root \(z_{n}^{(1)}\) within any \(\varepsilon\)-ball of \(\rho_{0}\), and \(z_{n}^{(1)}\xrightarrow{p}\rho_{0}\). Moreover, \(z_{n}^{(1)}\) must be real: if it were not, its conjugate \(\overline{z_{n}^{(1)}}\) would also lie in the same \(\varepsilon\)-ball (since \(|\overline{z_{n}^{(1)}}-\rho_{0}|=|z_{n}^{(1)}-\rho_{0}|<\varepsilon\) as \(\rho_{0}\) is real), contradicting the uniqueness from Rouché's theorem unless \(z_{n}^{(1)}=\overline{z_{n}^{(1)}}\), i.e., \(z_{n}^{(1)}\) is real.
        \item At \(z_{0}=\pm i\) (simple, \(\mathrm{Im}(z_{0})=\pm1\)): analogously, \(p_{n}(\cdot;u_{0})\) has roots \(z_{n}^{(2)},z_{n}^{(3)}\xrightarrow{p}\pm i\) in probability. For \(\varepsilon<1/2\), these roots satisfy \(|\mathrm{Im}(z_{n}^{(2,3)})|>1/2\) for \(n\) sufficiently large, hence are non-real.
    \end{itemize}
    Consequently, with probability tending to \(1\), \(p_{n}(\cdot;u_{0})=0\) has exactly one real root, and it converges in probability to \(\rho_{0}=\rho(u_{0})\).

    \textbf{Step 3: Identify \(\hat{\rho}(u_{0})\) with the unique real root.}
    By~\Cref{pro:finite-sample}, for every \(n\) there exists at least one real root in \([-1,1]\). Combined with Step 2, this root is (eventually, with high probability) the unique real root and lies in \((-1,1)\) since it is close to \(\rho_{0}\in(-1,1)\). The estimator \(\hat{\rho}(u_{0})\), defined as the root in \((-1,1)\) minimising \(Q_{n}(\rho;u_{0})\), therefore coincides with this root with high probability (the tie-breaking rule is inoperative asymptotically since there is only one candidate).

    Since the convergence \((A_{n},B_{n})\to(2,\rho_{0})\) is uniform over \(u_{0}\in\mathcal{U}_{\delta}\) by~\Cref{thm:uniform_consistency_an_bn}, and the root-separation radius in~\Cref{lem:continuity_polynomial_roots} can be chosen uniformly in \(u_{0}\) on \(\mathcal{U}_{\delta}\) (the gap between \(\rho_{0}\) and \(\pm i\) is \(\geq\sqrt{1+\eta^{2}}\) uniformly by Assumption (A3)), the consistency is uniform: \(\sup_{u_{0}\in\mathcal{U}_{\delta}}|\hat{\rho}(u_{0})-\rho(u_{0})|\xrightarrow{p}0\).
\end{proof}
\paragraph{The proof of~\Cref{thm:interior_asymptotic_normality}}\label{para:the_proof_of_interior_asymptotic_normality}
\begin{proof}
    The proof proceeds through the delta method applied to the implicit function defined by the cubic equation. Define \(g(\rho;A,B)=\rho^{3}-B\rho^{2}+(A-1)\rho-B\). At the true parameter point \((\rho_{0},2,\rho_{0})\), we have \(g(\rho_{0};2,\rho_{0})=0\) by~\eqref{eq:population_cubic_factorization}. Computing the partial derivatives at this point yields
    \begin{align*}
        g_\rho& \coloneqq\frac{\partial g}{\partial\rho}\bigg|_{(\rho_{0},2,\rho_{0})}=3\rho_{0}^{2}-2\rho_{0}\cdot\rho_{0}+(2-1)=\rho_{0}^{2}+1, \\
        g_{A} & \coloneqq\frac{\partial g}{\partial A}\bigg|_{(\rho_{0},2,\rho_{0})}=\rho_{0},                                                    \\
        g_{B} & \coloneqq\frac{\partial g}{\partial B}\bigg|_{(\rho_{0},2,\rho_{0})}=-(\rho_{0}^{2}+1).
    \end{align*}
    Since \(\hat{\rho}(u_{0})\) solves \(g(\hat{\rho}; A_{n}, B_{n})=0\) and \(\hat{\rho}(u_{0})\xrightarrow{p}\rho_{0}\) by~\Cref{thm:uniform_consistency}, with \(g_\rho=\rho_{0}^{2}+1\geq1>0\), the implicit function theorem yields a first-order Taylor expansion:
    \begin{equation}\label{eq:delta_method_expansion}
        0=g(\hat{\rho};A_{n},B_{n})\approx g_\rho(\hat{\rho}-\rho_{0})+g_{A}(A_{n}-2)+g_{B}(B_{n}-\rho_{0})+o_{p}\big(|A_{n}-2|+|B_{n}-\rho_{0}|\big).
    \end{equation}
    Rearranging and substituting the partial derivatives, we obtain
    \begin{equation}\label{eq:rho_linearization}
        \hat{\rho}(u_{0})-\rho_{0} = \frac{1}{\rho_{0}^{2}+1}\Big[-\rho_{0}(A_{n}(u_{0})-2)+(\rho_{0}^{2}+1)(B_{n}(u_{0})-\rho_{0})\Big]+o_{p}\big(|A_{n}-2|+|B_{n}-\rho_{0}|\big).
    \end{equation}

    To establish the joint asymptotic distribution of \((A_{n}, B_{n})\), we apply the Isserlis (Wick) moment formulas for bivariate normal random variables. Let \(Y_{A}=X_{1}^{2}+X_{2}^{2}\) and \(Y_{B}=X_{1}X_{2}\). Under the bivariate normal distribution with \(\mathbb{E}[X_{1}]=\mathbb{E}[X_{2}]=0\), \(\mathrm{Var}(X_{1})=\mathrm{Var}(X_{2})=1\), and \(\mathrm{Cov}(X_{1},X_{2})=\rho_{0}\), we have
    \begin{align*}
        \mathrm{Var}(Y_{A})      & =\mathbb{E}\left[ (X_{1}^{2}+X_{2}^{2})^{2} \right]-4=\mathbb{E}[X_{1}^{4}]+\mathbb{E}[X_{2}^{4}]+2\mathbb{E}[X_{1}^{2}X_{2}^{2}]-4 \\
        & =3+3+2(1+2\rho_{0}^{2})-4=4+4\rho_{0}^{2},                                                                                          \\
        \mathrm{Var}(Y_{B})      & =\mathbb{E}[X_{1}^{2}X_{2}^{2}]-\rho_{0}^{2}=1+2\rho_{0}^{2}-\rho_{0}^{2}=1+\rho_{0}^{2},                                           \\
        \mathrm{Cov}(Y_{A},Y_{B})& =\mathbb{E}[(X_{1}^{2}+X_{2}^{2})X_{1}X_{2}]-2\rho_{0}=\mathbb{E}[X_{1}^{3}X_{2}]+\mathbb{E}[X_{1}X_{2}^{3}]-2\rho_{0}=4\rho_{0}.
    \end{align*}
    Thus, the covariance matrix is
    \begin{equation}\label{eq:sigma_matrix}
        \Sigma(u_{0})=
        \begin{pmatrix}4+4\rho_{0}^{2} & 4\rho_{0}\\ 4\rho_{0} & 1+\rho_{0}^{2}
        \end{pmatrix}.
    \end{equation}

    By the multivariate extension of~\Cref{lem:bias-variance expansion} and the central limit theorem for kernel-weighted sums, we have
    \begin{equation}\label{eq:joint_clt}
        \sqrt{nh}\left[
            \begin{pmatrix}A_{n}(u_{0})-2\\ B_{n}(u_{0})-\rho_{0}
        \end{pmatrix}-\mathrm{Bias}\right]\xrightarrow{d} N\!\left(0,\ \frac{R(K)}{f_{U}(u_{0})}\Sigma(u_{0})\right),
    \end{equation}
    where the bias vector is
    \begin{equation}\label{eq:bias_vector}
        \mathrm{Bias}=
        \begin{pmatrix}o(h^{2})\\[4pt] \frac{h^{2}\mu_{2}(K)}{2}\Big[\rho^{\prime\prime}(u_{0})+2\frac{f_{U}^{\prime}(u_{0})}{f_{U}(u_{0})}\rho^{\prime}(u_{0})\Big]
        \end{pmatrix}.
    \end{equation}
    The first component is \(o(h^{2})\) because \(A(u)\equiv 2\) identically, so \(A^{\prime}(u)=A^{\prime\prime}(u)=0\) for all \(u\).

    Applying the delta method to~\eqref{eq:rho_linearization} with the linear transformation vector \(c^{\top}=\big(-\rho_{0}/(1+\rho_{0}^{2}),\,1\big)\), we compute the asymptotic variance:
    \begin{align*}
        c^{\top}\Sigma(u_{0})c& =\frac{\rho_{0}^{2}}{(1+\rho_{0}^{2})^{2}}(4+4\rho_{0}^{2})-2\cdot\frac{\rho_{0}}{1+\rho_{0}^{2}}\cdot 4\rho_{0}+(1+\rho_{0}^{2}) \\
        & =\frac{4\rho_{0}^{2}(1+\rho_{0}^{2})}{(1+\rho_{0}^{2})^{2}}-\frac{8\rho_{0}^{2}}{1+\rho_{0}^{2}}+(1+\rho_{0}^{2})                 \\
        & =\frac{4\rho_{0}^{2}-8\rho_{0}^{2}+(1+\rho_{0}^{2})^{2}}{1+\rho_{0}^{2}}=\frac{(1-\rho_{0}^{2})^{2}}{1+\rho_{0}^{2}}.
    \end{align*}
    Combining this with~\eqref{eq:joint_clt} and~\eqref{eq:rho_linearization} establishes~\eqref{eq:interior_limit} and~\eqref{eq:interior_bias}.
\end{proof}
\paragraph{The proof of~\Cref{thm:boundary_asymptotic_normality}}

\begin{proof}
    At a left-boundary point \(u_{0}=-1+ch\), the kernel window \(K_{h}(U_{i}-u_{0})\) intersects with the design support \([-1,1]\) in a truncated manner. Specifically, the standardized variable \(t=(U_{i}-u_{0})/h\) satisfies \(t\in[-c,1]\) rather than the full support \([-1,1]\) of the kernel. This truncation breaks the symmetry assumption \(\mu_{1}(K)=0\) used in the interior case.

    The proof of~\Cref{lem:bias-variance expansion} extends to this setting with the truncated moments defined in~\eqref{eq:truncated_moments}. The key observation is that \(A_{n}(u_{0})\) still has negligible bias because the population function \(A(u)\equiv 2\) is constant, so \(A^{\prime}(u)=A^{\prime\prime}(u)=0\) regardless of kernel truncation. Therefore, all bias terms originate from \(B_{n}(u_{0})\) estimating the varying function \(\rho(\cdot)\).

    For the bias of \(B_{n}(u_{0})\), the first-order Taylor expansion gives
    \begin{equation*}
        \mathbb{E}[B_{n}(u_{0})]-\rho(u_{0})=h\rho^{\prime}(u_{0})\frac{\mu_{1}(K,c)}{\mu_{0}(K,c)}+o(h),
    \end{equation*}
    where the leading term is \(O(h)\) because \(\mu_{1}(K,c)\neq0\) in general. This contrasts with the interior case where symmetry yields \(\mu_{1}(K)=0\) and thus \(O(h^{2})\) bias.

    For the variance, the same Isserlis moment calculations yield the covariance matrix~\eqref{eq:sigma_matrix}, and the central limit theorem holds with the truncated kernel integrals:
    \begin{equation*}
        \sqrt{nh}\left[
            \begin{pmatrix}A_{n}(u_{0})-2\\ B_{n}(u_{0})-\rho_{0}
        \end{pmatrix}-\mathrm{Bias}\right]\xrightarrow{d} N\!\left(0,\ \frac{R(K,c)}{\mu_{0}(K,c)^{2}f_{U}(-1^+)}\Sigma(u_{0})\right).
    \end{equation*}
    Applying the same delta-method linearization as in the proof of~\Cref{thm:interior_asymptotic_normality} yields~\eqref{eq:boundary_limit}.
\end{proof}
\paragraph{The proof of~\Cref{cor:optimal_bandwidth}}\label{para:the_proof_of}
\begin{proof}
    For interior points, the asymptotic mean squared error is the sum of squared bias and variance from~\Cref{thm:interior_asymptotic_normality}:
    \begin{equation*}
        \mathrm{AMSE}(h)=h^{4}C_{1}(u_{0})+\frac{C_{2}(u_{0})}{nh}.
    \end{equation*}
    Differentiating with respect to \(h\) and setting to zero yields
    \begin{equation*}
        \frac{d}{dh}\mathrm{AMSE}(h)=4C_{1}(u_{0})h^{3}-\frac{C_{2}(u_{0})}{nh^{2}}=0,
    \end{equation*}
    which gives \(h^{5}=C_{2}(u_{0})/(4C_{1}(u_{0})n)\). At this bandwidth,
    \begin{align*}
        \mathrm{AMSE}(h_{\mathrm{opt}})& =C_{1}(u_{0})\left(\frac{C_{2}(u_{0})}{4C_{1}(u_{0})n}\right)^{4/5}+C_{2}(u_{0})\left(\frac{4C_{1}(u_{0})n}{C_{2}(u_{0})}\right)^{1/5}\frac{1}{n} \\
        & =\left(\frac{5}{4}\right)C_{1}(u_{0})^{1/5}C_{2}(u_{0})^{4/5}n^{-4/5}\asymp n^{-4/5}.
    \end{align*}

    For boundary points, the asymptotic MSE is
    \begin{equation*}
        \mathrm{AMSE}(h)=h^{2}C_{3}+\frac{C_{4}}{nh}.
    \end{equation*}
    Setting \(\frac{d}{dh}\mathrm{AMSE}(h)=2C_{3}h-C_{4}/(nh^{2})=0\) gives \(h^{3}=C_{4}/(2C_{3}n)\), and similarly \(\mathrm{AMSE}(h_{\mathrm{opt}})\asymp n^{-2/3}\).
\end{proof}
\subsection{Local Linear Estimation of Correlation Coefficient}\label{subsec:local_linear_estimation_of_correlation_coefficient}
Recall the observations \((U_{i}, x_{i1}, x_{i2})\), \(i=1,\ldots,n\) in the main text; the kernel weighted likelihood  function at  \( u_{0} \) is defined as:
\begin{equation*}
    Q_{n}(\rho;u_{0})=\frac{1}{n}\sum_{i=1}^{n} K_{h}(U_{i}-u_{0})\,\left[\frac{x_{i1}^{2}+x_{i2}^{2}-2\rho x_{i1}x_{i2}}{1-\rho^{2}}+\log(1-\rho^{2})\right].
\end{equation*}
We assume that the correlation coefficient is a smooth function of \( u \) and can be locally approximated by a linear function around \( u_{0} \), i.e., \( \rho_{i}(\boldsymbol{\beta})\sim \beta_{0}+\beta_{1}(U_{i}-u_{0}) \), where \( \boldsymbol{\beta}=(\beta_{0},\beta_{1})^{\top} \). Here, \( \beta_{0} \) represents the actual correlation coefficient at \( u_{0} \) and \( \beta_{1} \) represents the slope of the correlation coefficient at \( u_{0} \). Substituting \( \rho_{i}(\boldsymbol{\beta}) \) into the kernel weighted likelihood function, we have
\begin{equation*}
    Q_{n}(\boldsymbol{\beta};u_{0})=\frac{1}{n}\sum_{i=1}^{n} K_{h}(U_{i}-u_{0})\,\left[\frac{x_{i1}^{2}+x_{i2}^{2}-2(\beta_{0}+\beta_{1}(U_{i}-u_{0})) x_{i1}x_{i2}}{1-(\beta_{0}+\beta_{1}(U_{i}-u_{0}))^{2}}+\log(1-(\beta_{0}+\beta_{1}(U_{i}-u_{0}))^{2})\right].
\end{equation*}
It is difficult to optimise \( Q_{n}(\boldsymbol{\beta};u_{0}) \) directly as the correlation coefficient is constrained in the interval \( (-1,1) \). To prevent numerical collapse, we apply Fisher's \( z \)-transformation to the correlation coefficient, i.e., let \( \gamma_{i}=\alpha_{0}+\alpha_{1}(U_{i}-u_{0}) \)  and \( \rho_{i}(\boldsymbol{\beta})=\tanh(\gamma_{i}) \). By the identities \( 1-\tanh^{2}(x)=\text{sech}^{2}(x) \), \( \text{sech}(x)=1/\cosh(x)\), the safely bounded, globally stable objective function becomes:
\begin{equation*}
    Q_{n}(\alpha_{0},\alpha_{1};u_{0})=\frac{1}{n}\sum_{i=1}^{n} K_{h}(U_{i}-u_{0})\,\left[\cosh^{2}(\gamma_{i})(x_{i1}^{2}+x_{i2}^{2})-\sinh(2\gamma_{i}) x_{i1}x_{i2}-2\log(\cosh(\gamma_{i}))\right].
\end{equation*}
By minimising \( Q_{n}(\alpha_{0},\alpha_{1};u_{0}) \) with respect to \( (\alpha_{0},\alpha_{1}) \), we obtain the local linear estimator of the correlation coefficient at \( u_{0} \): \( \hat{\rho}^{\mathrm{LL}}(u_{0})=\tanh(\hat{\alpha}_{0}) \).

\begin{figure}[ht]
    \centering
    \includegraphics[width=0.95\textwidth]{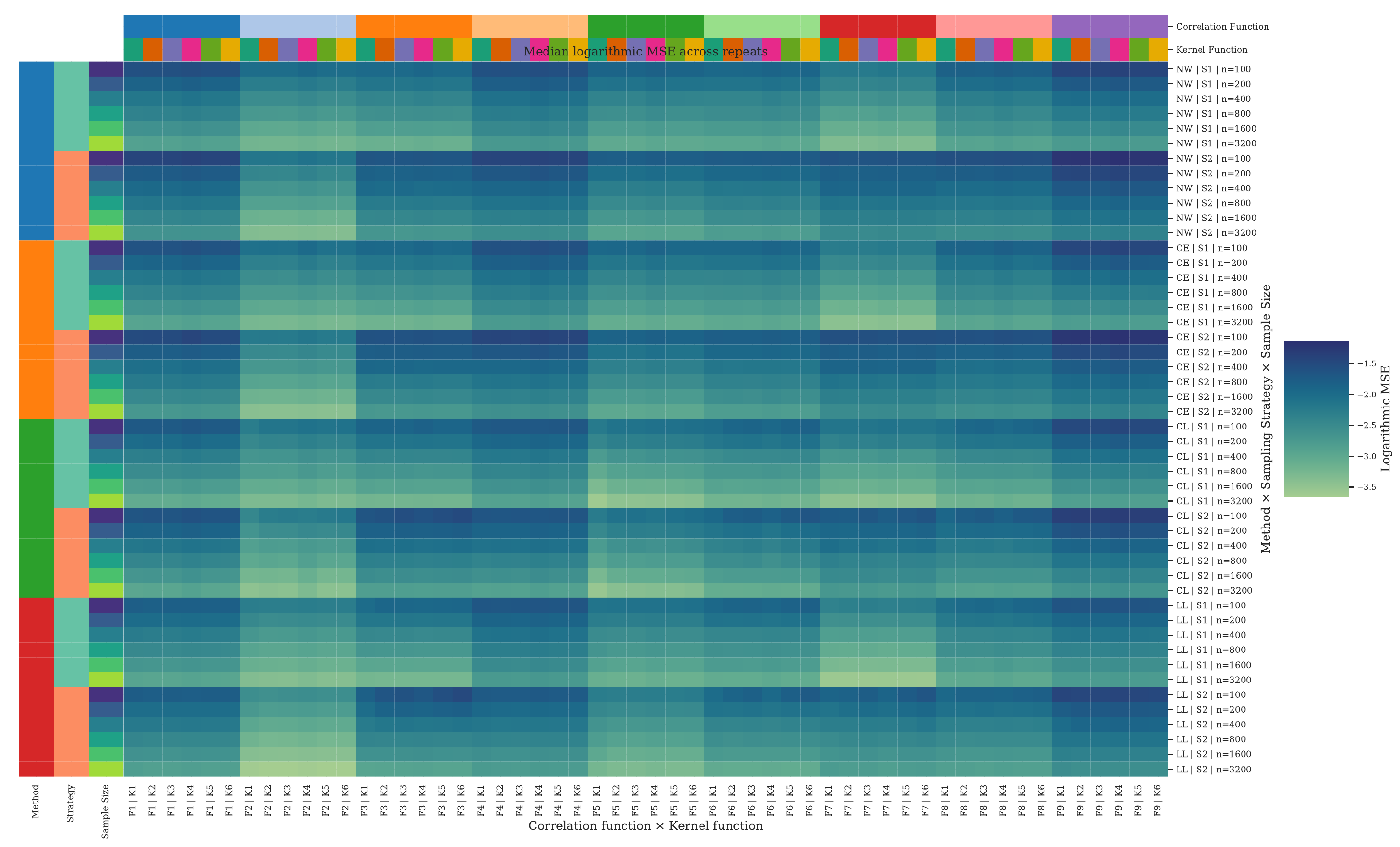}
    \caption{Heatmap of the median logarithmic MSEs across different correlation functions and kernel functions for each combination of estimation method, sampling strategy, and sample size.}\label{fig:figs/panel_mse_heatmap.pdf}
\end{figure}
\begin{figure}[ht]
    \centering
    \includegraphics[width=\textwidth]{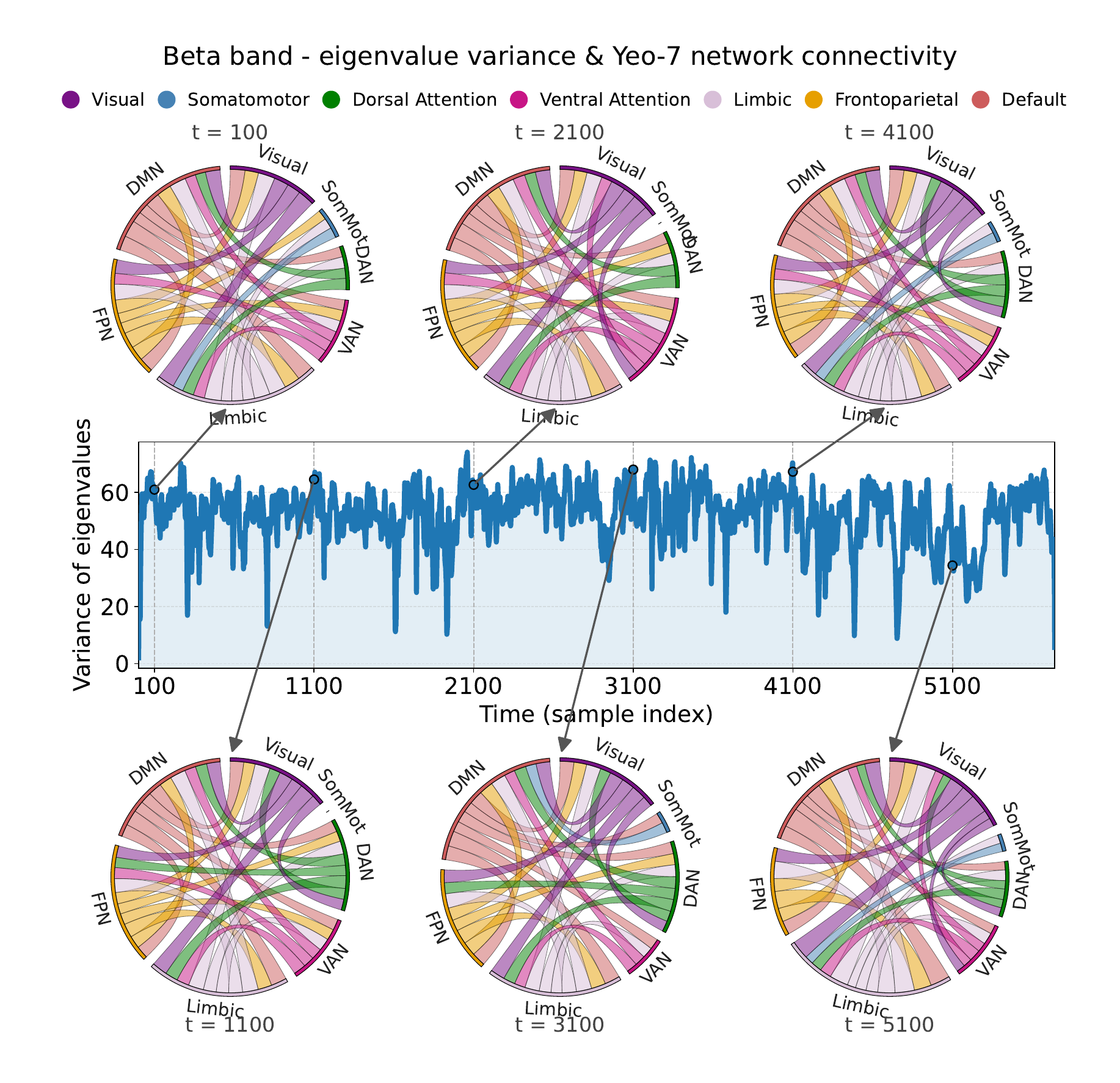}
    \caption{The eigenvalue variance of the dynamic correlation matrix for the Beta band. At the time (sample index) 100,1100,2100,3100,4100,5100, we draw the chord connectivity of Yeo-7 networks. As the syncitical \( 7\times7 \) network is still dense, we truncated the Yeo-7 networks by its median value. The names of Yeo-7 networks are ``Visual'',``Somatomotor'',``Dorsal Attention'',``Ventral Attention'',``Limbic'',``Frontoparietal'' and ``Default'', while the short names on the chord connectivity are ``Visual'',``SomMot'',``DAN'',``VAN'',``Limbic'',``FPN'' and ``DMN'' respectively.}\label{fig:figs/eig_var_circle_connect_Beta.pdf}
\end{figure}
\begin{figure}[ht]
    \centering
    \includegraphics[width=\textwidth]{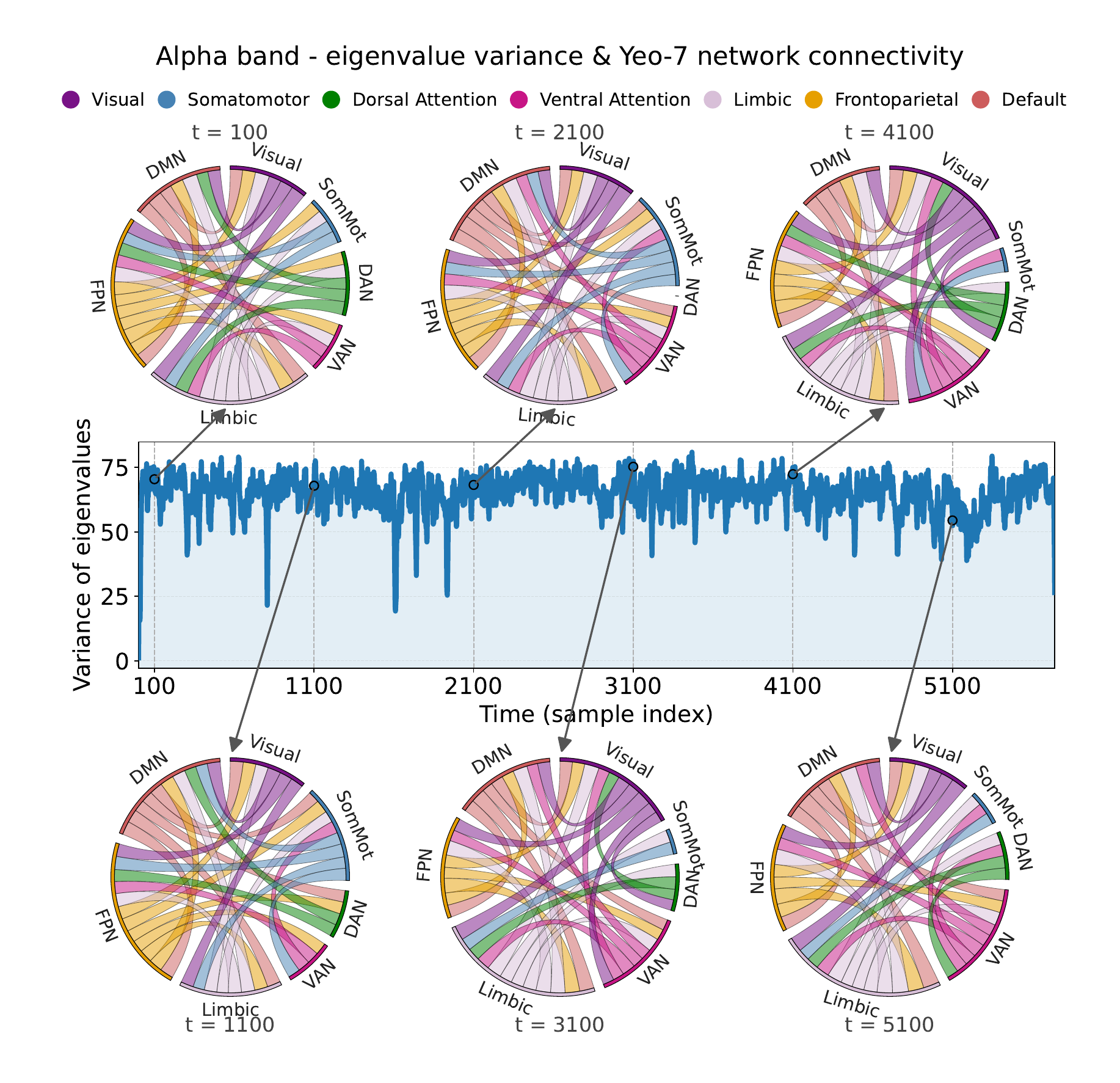}
    \caption{The eigenvalue variance of the dynamic correlation matrix for the Alpha band. At the time (sample index) 100,1100,2100,3100,4100,5100, we draw the chord connectivity of Yeo-7 networks. As the syncitical \( 7\times7 \) network is still dense, we truncated the Yeo-7 networks by its median value. The names of Yeo-7 networks are ``Visual'',``Somatomotor'',``Dorsal Attention'',``Ventral Attention'',``Limbic'',``Frontoparietal'' and ``Default'', while the short names on the chord connectivity are ``Visual'',``SomMot'',``DAN'',``VAN'',``Limbic'',``FPN'' and ``DMN'' respectively.}\label{fig:figs/eig_var_circle_connect_Alpha.pdf}
\end{figure}
\begin{figure}[ht]
    \centering
    \includegraphics[width=\textwidth]{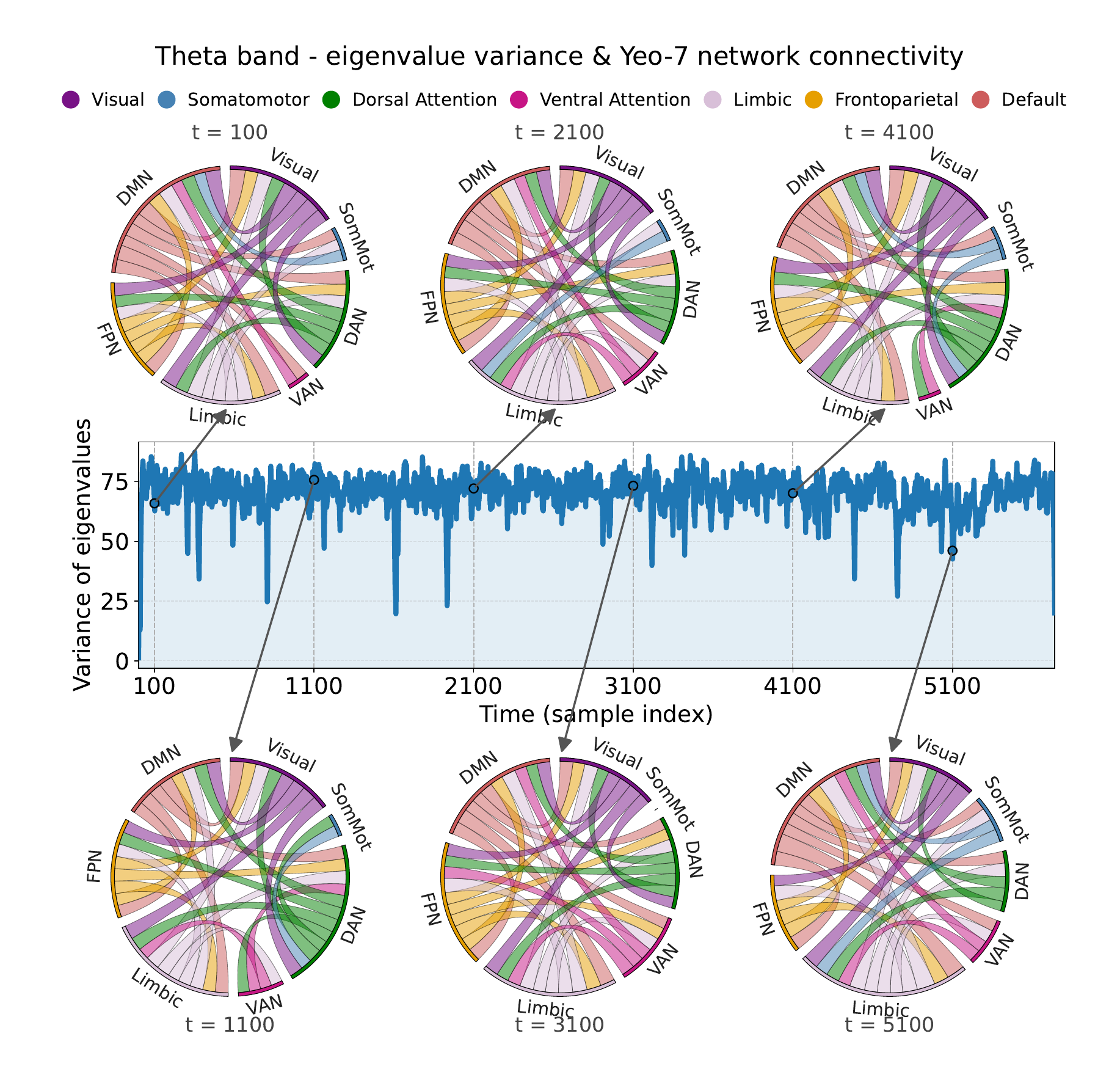}
    \caption{The eigenvalue variance of the dynamic correlation matrix for the Theta band. At the time (sample index) 100,1100,2100,3100,4100,5100, we draw the chord connectivity of Yeo-7 networks. As the syncitical \( 7\times7 \) network is still dense, we truncated the Yeo-7 networks by its median value. The names of Yeo-7 networks are ``Visual'',``Somatomotor'',``Dorsal Attention'',``Ventral Attention'',``Limbic'',``Frontoparietal'' and ``Default'', while the short names on the chord connectivity are ``Visual'',``SomMot'',``DAN'',``VAN'',``Limbic'',``FPN'' and ``DMN'' respectively.}\label{fig:figs/eig_var_circle_connect_Theta.pdf}
\end{figure}
\begin{figure}[ht]
    \centering
    \includegraphics[width=\textwidth]{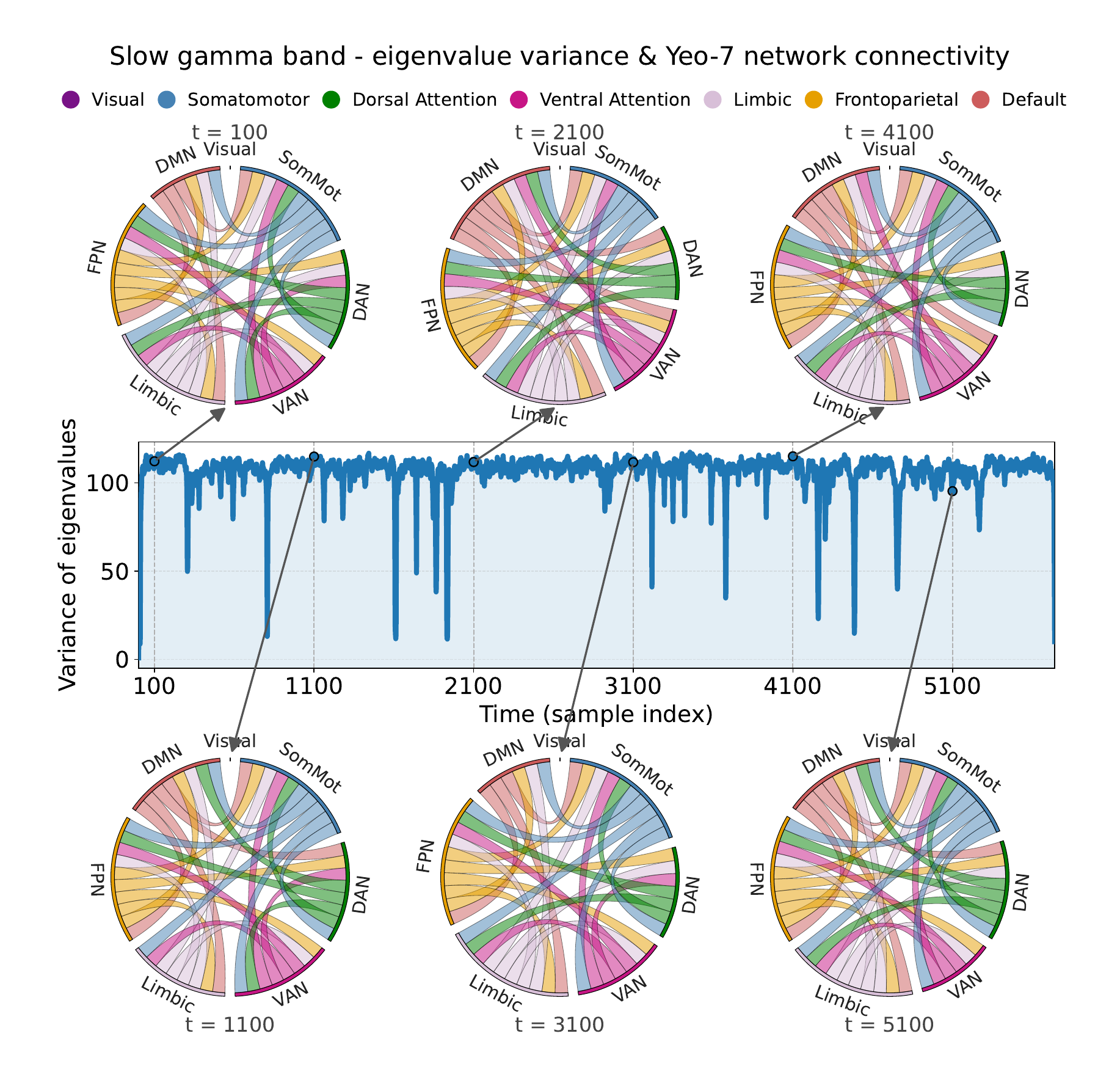}
    \caption{The eigenvalue variance of the dynamic correlation matrix for the slow Gamma band. At the time (sample index) 100,1100,2100,3100,4100,5100, we draw the chord connectivity of Yeo-7 networks. As the syncitical \( 7\times7 \) network is still dense, we truncated the Yeo-7 networks by its median value. The names of Yeo-7 networks are ``Visual'',``Somatomotor'',``Dorsal Attention'',``Ventral Attention'',``Limbic'',``Frontoparietal'' and ``Default'', while the short names on the chord connectivity are ``Visual'',``SomMot'',``DAN'',``VAN'',``Limbic'',``FPN'' and ``DMN'' respectively.}\label{fig:figs/eig_var_circle_connect_Slow_gamma.pdf}
\end{figure}

\end{document}